\newtheorem{theorem}{Theorem}
\newtheorem{corollary}{Corollary}
\newcommand{\bY}{\boldsymbol{Y}}
\newcommand{\bM}{\boldsymbol{M}}
\newcommand{\bX}{\boldsymbol{X}}
\newcommand{\bW}{\boldsymbol{W}}
\newcommand{\bV}{\mathbf{V}}
\newcommand{\bzero}{\boldsymbol{0}}
\newcommand{\bbeta}{\boldsymbol{\beta}}
\newcommand{\bvarepsilon}{\boldsymbol{\varepsilon}}
\newcommand{\bDelta}{\boldsymbol{\Delta}}
\newcommand{\bSigma}{\boldsymbol{\Sigma}}
\author[1]{Bingkai Wang}
\author[2]{Yu Du}
\affil[1]{\small The Statistics and Data Science Department of the Wharton School, University of Pennsylvania, PA, USA}
\affil[2]{\small Statistics, Data and Analytics, Eli Lilly and Company, IN, USA}
\title{\bf Improving the mixed model for repeated measures to robustly increase precision in randomized trials}
\begin{document}
\def\spacingset#1{\renewcommand{\baselinestretch}%
{#1}\small\normalsize} \spacingset{1}

\date{\vspace{-5ex}}

\maketitle
\begin{abstract}
In randomized trials, repeated measures of the outcome are routinely collected. The mixed model for repeated measures (MMRM) leverages the information from these repeated outcome measures, and is often used for the primary analysis to estimate the average treatment effect at the primary endpoint. MMRM, however, can suffer from bias and precision loss when it models intermediate outcomes incorrectly, and hence fails to use the post-randomization information harmlessly. This paper proposes an extension of the commonly used MMRM, called IMMRM, that improves the robustness and optimizes the precision gain from covariate adjustment, stratified randomization, and adjustment for intermediate outcome measures. Under regularity conditions and missing completely at random, we prove that the IMMRM estimator for the average treatment effect is robust to arbitrary model misspecification and is asymptotically equal or  more precise than the analysis of covariance (ANCOVA) estimator and the MMRM estimator. Under missing at random, IMMRM is less likely to be misspecified than MMRM, and we demonstrate via simulation studies that IMMRM continues to have less bias and smaller variance. Our results are further supported by a re-analysis of a randomized trial for the treatment of diabetes.

\end{abstract}
\noindent%
{\it Keywords: ANCOVA, heterogeneity, heteroscedasticity, repeated outcomes}  
\vfill

\newpage
\spacingset{1.5}
\setcounter{page}{1}

	\section{Introduction}\label{sec:intro}
``Intermediate outcomes'' in the context of randomized trials refer to the outcomes measured after treatment assignment but before the time point of interest. For example, for a trial studying the effect of a 12-month dietary plan on weight, the intermediate outcomes can be the weights at 3, 6, and 9 months after randomization. Intermediate outcomes are routinely collected and have been used in various ways, including trial monitoring \citep{shih1999planning}, decision making in the interim analysis  \citep{kunz2015comparison}, mediation analysis \citep{landau2018beyond}, etc.
We focus on a common but less studied purpose of intermediate outcomes, which is to robustly improve the precision of statistical inference on the average treatment effect at the primary endpoint.

The status quo for improving the statistical precision of longitudinal trials is adjusting for covariates and intermediate outcomes, for which the mixed model for repeated measures (MMRM, \citealp{mallinckrod2008recommendations}) is commonly used. MMRM involves a linear mixed model for the outcome measure at each visit on fixed effects (including the intercept, treatment indicators, and covariates, where the coefficients for covariates are often modeled as constant across visits) and a normal-distributed random intercept (which characterizes the correlation 
across outcome measures). 
If MMRM correctly specifies both fixed and random effects, it will yield consistent estimates for the average treatment effect at the primary endpoint and can improve precision by adjusting for intermediate outcomes \citep{marschner2001interim, galbraith2003interim, stallard2010confirmatory, hampson2013group, zhou2018predictive}. However, no analytical results are given when MMRM is arbitrarily misspecified. For example, it has been concerned that the outcomes of a randomized trial can be non-linear on covariates or non-normally distributed \citep{Kraemer2015}; in this case, MMRM may not capture the true relationship among outcomes, treatment, and covariates, thereby leaving open questions related to validity and precision of a misspecified MMRM.

In this article, we study whether a misspecified MMRM remains valid and precise. Under mild regularity conditions, we prove that the MMRM estimator for the average treatment effect at the primary endpoint is ``robust'', i.e., it is consistent and asymptotically normal under \textit{arbitrary} model misspecification, if there are no missing outcomes or the missingness is completely at random. This result also holds under stratified randomization \citep{Zelen1974} or given multiple treatments.
Under missing at random, however, a misspecified MMRM could lead to bias. 
In terms of precision, we contribute a surprising result that adjustment for intermediate outcomes by a misspecified MMRM could result in precision loss, even when visit-by-covariates interaction terms are incorporated. 
This result is an extension of \cite{ye2022toward}, which showed the potential precision loss with ANCOVA under a non-longitudinal setting, a special case of MMRM.
We support this finding by both theoretical derivations and simulations, indicating that the current MMRM may adjust for post-randomization information inappropriately and inadequately.

To address the above issues of MMRM, we propose ``Improved MMRM'' (IMMRM), an extension of MMRM that could augment robustness and precision by modeling treatment heterogeneity and heteroscedasticity. We show that IMMRM not only shares the same robustness property discussed above with MMRM, but is also less likely to be misspecified since MMRM is a special case of IMMRM, thereby increasing its reliability under the missing at random assumption. In terms of precision, IMMRM is the first model, to the best of our knowledge, that combines the precision gain from covariate adjustment, stratified randomization, and adjustment for intermediate outcomes. We prove that, under missing completely at random, (1) IMMRM is asymptotically equally or more precise than MMRM, (2) adjustment for intermediate outcomes by IMMRM will retain or reduce, but not increase the asymptotic variance, and (3) IMMRM could automatically accommodate the precision gain from stratified randomization. In addition, we formally characterize when stratified randomization and intermediate outcomes improve precision beyond covariate adjustment. Computationally, we demonstrate via simulation studies that this precision gain also exists for data with a small sample size under the missing at random assumption.

Our results build on existing work on causal inference under the super-population framework. For the validity of misspecified MMRM and IMMRM, our proof is based on the semiparametric theory \citep{tsiatis2007} and its recent extension to stratified randomization \citep{bugni2019inference}. In the research area of mixed models, although misspecified random effects have been extensively studied \citep{mcculloch2011misspecifying}, the existing results mainly relied on correctly-specified fixed effects and focused on random effects estimation, which is not the topic of this article. For robustly improving precision by adjusting for intermediate outcomes,  \cite{qian2019improving} used the general approach proposed by \cite{Lu2011,van2012targeted} and derived the non-parametric formulas for the precision gain from adjusting for baseline variables and the short-term outcomes. Later, \cite{van2020improving} proposed a robust estimator that achieves such precision gain for the interim analysis. 
More recently, \cite{schuler2022mixed} showed the analysis of covariance, which ignores intermediate outcomes, outperforms MMRM (without visit-by-covariate interaction terms) under two-arm, equal, simple randomization with no missing data. However, the above results fail to handle the complexity of non-monotone censoring at multiple visits or account for stratified randomization. 
{Non-monotone missingness brings a significant challenge to variance characterization, which needs to accommodate every missing pattern. In contrast to monotone missingness yielding $K$ missing patterns where $K$ is the number of visits,  non-monotone missingness can lead to $2^K$ missing patterns, thereby substantially driving up the complexity. Furthermore, stratified randomization will make this characterization even harder since it may alter the variance under each missing pattern differently.}
Finally, our results are extensions of established work on variance reduction by covariate adjustment (\citealp{YangTsiatis2001, Rubin&Laan2008, Jiang2018}, among others) and stratified randomization (\citealp{ye2022toward, wang2021model}, among others). 
{Specifically, \cite{wang2021model} established the asymptotic convergence for M-estimators under stratified randomization. We generalize this result to handle longitudinal data with multiple treatment groups and further characterize the comparison of asymptotic variances among the considered estimators.  
\cite{ye2022toward} proposed robust linearly-adjusted estimators for non-longitudinal randomized trials with  covariate-adaptive randomization. We extend their results to the setting with repeated outcome measures and show the superiority of IMMRM over their method in precision. }

In the next section, we introduce the motivating example. In Section~\ref{sec:def}, we
present our setup, notations, and assumptions. In Section~\ref{sec:estimators}, we describe the ANCOVA estimator, MMRM estimator (with or without visit-by-covariates interactions), and our proposed IMMRM estimator. Our main result is presented in Section~\ref{sec:main-results}, which consists of asymptotic theory, explanations of precision gain from intermediate outcomes, and results for two-arm equal randomization. Simulation study and data application are given in Sections~\ref{sec:simulation} and~\ref{sec:data-application}, respectively. Finally, we provide practical recommendations and discuss future directions in Section~\ref{sec:discussion}.

	
	\section{Motivating example: the IMAGINE-2 study}\label{sec:data-application_trial1}
The trial of ``A Study in Patients With Type 2 Diabetes Mellitus (IMAGINE 2)'' is a  52-week, two-arm, phase 3 randomized clinical study completed in 2014 \citep{davies2016basal}.
The goal of this trial was to evaluate the effect of basal insulin for the treatment of type 2 diabetes
in an insulin-na\"ive population.

Participants were randomly assigned to receive insulin peglispro (treatment, 1003 patients) or insulin glargine (control, 535 patients), aiming to achieve a 2:1 randomization ratio. Randomization was stratified by baseline HbA1c (Hemoglobin A1c, $<8.5\%$ or $\geq8.5\%$), low‐density lipoprotein cholesterol ($<100$mg/dL or $\ge100$mg/dL) and baseline sulphonylurea or meglitinide use (yes or not). 
HbA1c is a continuous measure of average blood glucose values in the prior three months.
The primary outcome was the change in HbA1c at week 52 from baseline (15\% missing outcomes), while intermediate outcomes were measured at week 4 (3\% missing outcomes), week 12 (6\% missing outcomes), week 26 (10\% missing outcomes) and week 39 (13\% missing outcomes).
We focus on estimating the average treatment effect of the primary outcome and adjusting for intermediate outcomes, baseline HbA1c value, and randomization strata.

\section{Definition and assumptions}\label{sec:def}
\subsection{Data generating distributions}
Consider a trial where the outcome is continuous and repeatedly measured at $K$ visits, where $K$ is a positive integer. 
For each participant $i = 1, \dots, n$, the outcome vector at $K$ visits is $\bY_i = (Y_{i1}, \dots, Y_{iK}) \in \mathbb{R}^K$ and the non-missing status at $K$ visits is $\bM_i = (M_{i1}, \dots, M_{iK})$, where $M_{it} = 1$ if $Y_{it}$ is observed at visit $t$, and 0 otherwise. 
Let $A_i$ be a categorical variable taking values in $\{0, 1, \dots, J: J \ge 1\}$,  with $A_i=j$ representing that participant $i$ is assigned to the $j$-th treatment group. By convention, we use $A_i=0$ to denote being assigned to the control group.
Let $\bX_i$ be a vector of baseline variables with length $p$. Throughout, we refer to $Y_{iK}$ as the final outcome, and $(Y_{i1},\dots, Y_{i,K-1})$ as the intermediate outcomes (if $K>1$), for conciseness.

We use the Neyman-Rubin potential outcomes framework \citep{neyman1990}, which assumes $\bY_i = \sum_{j = 0}^J  I\{A_i=j\} \bY_i(j)$, where $I$ is the indicator function and $\bY_i(j) = (Y_{i1}(j), \dots, Y_{iK}(j))$ is the potential outcome vector for treatment group $j = 0,\dots, J$. Analogously to the consistency assumption above, we assume $\bM_i = \sum_{j = 0}^J  I\{A_i=j\} \bM_i(j)$, where $\bM_i(j)$ is the indicator vector of whether  $\bY_i(j)$ would be observed at each of the $K$ visits, if participant $i$ was assigned to treatment group $j$ for $j = 0,\dots, J$.

For each participant $i$, we define the complete data vector as \\
$\boldsymbol{W}_i = (\bY_i(0), \dots, \bY_i(J), \bM_i(0),\dots, \bM_i(J), \bX_i)$ and the observed data vector as $\boldsymbol{O}_i = (\bY_i^o, \bM_i, A_i, \bX_i)$, where $\bY_i^o$ is the vector of observed outcomes, whose dimension may vary across participants. 
For example, if participant $i$ only shows up in visits 1 and $K$, then $\bY_i^o = (Y_{i1}, Y_{iK})$.
For the special case that a participant misses all post-randomization visits but still has baseline information recorded, the observed data vector is $\boldsymbol{O}_i = (\bM_i, A_i, \bX_i)$ with $\bM_i$ being a zero vector. All estimators defined in Section~\ref{sec:estimators} are functions of the observed data $(\boldsymbol{O}_1, \dots, \boldsymbol{O}_n)$. 
We make the following assumptions on $(\bW_1, \dots, \bW_n)$:

\vspace{5pt}
\noindent \textbf{Assumption 1.}
\begin{enumerate}[(a)]
    \item (Super-population) $\bW_i, i = 1, \dots, n$ are independent, identically distributed samples from an unknown distribution $P$.
    \item (Missing completely at random, MCAR) $\bM_i(j)$ is independent of $(\bY_i(j), \bX_i)$ for $j = 0, \dots, J$, and $\bM_i(0), \dots, \bM_i(J)$ are identically distributed.
    \item (Positivity for missing no visit) $P(M_{i1}(j) = 1, \dots, M_{iK}(j) = 1) > 0$ {for all $j=0,\dots, J$.} 
\end{enumerate}

Assumption 1 (a) is the standard assumption for causal inference under the super-population framework. Assumption 1 (b) implies that the missing pattern is independent of outcomes and covariates. This assumption, while restrictive, is needed for achieving valid inference under arbitrary model misspecification (See Section~\ref{sec:main-results}). An alternative and more flexible assumption is missing at random (MAR), i.e., $M_{it}(j)$ is independent of $Y_{it}(j)$ given $\{\bX_i, M_{i1}(j), Y_{i1}(j), \dots, M_{i,t-1}(j), Y_{i,t-1}(j)\}$. Under MAR, however, all models we examine in Section~\ref{sec:estimators} could be biased if they are misspecified, and we empirically test their performance via simulation studies (Section~\ref{sec:simulation}). Finally, Assumption 1 (c) requires the existence of {people who appear in all visits} in each arm, which is a convenient assumption for achieving parameter identifiability in the considered models.

In addition to Assumption 1, we also assume regularity conditions for the estimators defined in Section~\ref{sec:estimators}.
As we show in the Supplementary Material, all estimators we consider (including our proposed estimators) are M-estimators (Section 5 of \citealp{vaart_1998}), 
which is defined as a zero of prespecified estimating functions.
These conditions are similar to the classical non-parametric conditions given in Section 5.3 of \cite{vaart_1998} for proving consistency and asymptotic normality for M-estimators. We provide these regularity conditions in the Supplementary Material.

The parameters of interest are the average treatment effects of the final outcome comparing each treatment group to the control group, i.e., 
\begin{displaymath}
    \bDelta^* = (E[Y_{iK}(1)] - E[Y_{iK}(0)], \dots, E[Y_{iK}(J)] - E[Y_{iK}(0)]),
\end{displaymath}
where $E$ is the expectation with respect to the distribution $P$.
Our results in Section~\ref{sec:main-results} also apply to estimating any linear transformation of $\bDelta^*$, e.g., the average treatment effect comparing any two treatment groups.

\subsection{Simple and stratified randomization}
We consider two types of treatment assignment procedures: simple randomization and stratified randomization. 
For $j = 0,\dots, J$, let $\pi_j$ be the target proportion of participants assigned to the treatment group $j$. We assume that $\sum_{j=0}^J \pi_j  =1$ and $\pi_j > 0$ for all treatment groups. For example, equal randomization refers to the setting where $\pi_0 = \dots =\pi_J$.

Simple randomization allocates treatment by independent draws from a  categorical distribution on $A$, with $P(A=j) = \pi_j$ for $j=0,\dots, J$. 
Then $(A_1,\dots, A_n)$ are independent, identically distributed samples from this categorical distribution, and also independent of $(\bW_1, \dots, \bW_n)$. 

Under stratified randomization, treatment assignment depends on a set of categorical baseline variables, called stratification variables. We use a categorical random variable $S$ with support $\mathcal{S} = \{1,\dots, R\}$ to denote the joint levels created by all    stratification variables. For example, if randomization is stratified by sex (female or male) and weight (normal, overweight, or obesity), then $S$ can take $R=6$ possible values. Each element in $\mathcal{S}$ is referred to as a ``randomization stratum''. 
Within each randomization stratum, permuted blocks are used for sequential treatment allocation. Within each permuted block, fraction $\pi_j$ of participants are assigned to treatment group $j$. After a block is exhausted, a new block is used. 
For each participant $i$, we assume that $S_i$ is encoded as $R-1$ dummy variables (dropping one level to avoid collinearity) in the baseline vector $\bX_i$.

Different from simple randomization, stratified randomization is able to achieve balance within each randomization stratum, i.e., exact fraction  $\pi_j$ of participants are assigned to treatment group $j$. 
Due to this advantage, stratified randomization is used by 70\% of randomized clinical trials, according to a survey by \cite{Lin2015}.
In the meantime, stratified randomization also brings statistical challenge: the treatment assignments $(A_1,\dots, A_n)$ are not independent of each other such that the classical central limit theory cannot be directly applied. To overcome this challenge, recent work (\citealp{ye2022toward, wang2021model} among others) has established the non-parametric asymptotic theory for common estimators in non-longitudinal randomized trials. However, the analysis of robustness and efficiency for longitudinal trials, to the best of our knowledge, remains incomplete under stratified randomization. 

We finish this section by introducing a few additional definitions. For any two symmetric matrices $\bV_1$ and $\bV_2$ with the same dimension, we denote $\bV_1 \succeq \bV_2$ if $\bV_1 - \bV_2$ is positive semi-definite.
For any estimator $\widehat{\bDelta}$ of $\bDelta^*$, we call $\bV$ the asymptotic covariance matrix of $\widehat{\bDelta}$ if $\sqrt{n}(\widehat{\bDelta}_1 -\bDelta^*)$ weakly converges to a multivariate normal distribution with mean $\bzero$ and covariance matrix $\bV$.
If two estimators $\widehat{\bDelta}_1$ and $\widehat{\bDelta}_2$ of $\bDelta^*$ have asymptotic covariance matrix $\bV_1$ and $\bV_2$ respectively,  we call $\widehat{\bDelta}_2$ is (asymptotically) equally or more precise than $\widehat{\bDelta}_1$ if $\bV_1 \succeq\bV_2$. Such an expression is commonly used for scalar estimators, and we extend it to vector estimators. 

\section{Estimators}\label{sec:estimators}
For estimating the average treatment effects $\bDelta^*$, we introduce three commonly used models in Sections~\ref{subsec:ancova}-\ref{subsec:MMRM} and then propose a new working model in Section~\ref{subsec:IMMRM}. We focus on comparing these four models in the rest of this article.


\subsection{The Analysis of Covariance (ANCOVA) estimator}\label{subsec:ancova}
The ANCOVA estimator, $\widehat\bDelta^{(\textup{ANCOVA})}$, is defined as the maximum likelihood estimator (MLE) for parameters $(\beta_{A1}, \dots, \beta_{AJ})$ in the working model
\begin{equation}\label{eq:ANCOVA}
    Y_{iK} = \beta_0 + \sum_{j = 1}^ J \beta_{Aj}I\{A_i = j\} + \bbeta_{\bX}^\top \bX_i + \varepsilon_i,
\end{equation}
where $(\beta_0, \beta_{A1}, \dots, \beta_{AJ}, \bbeta_{\bX})$ are parameters, and $\varepsilon_i$ is the residual error independent of $\bX_i$ following a normal distribution with mean 0 and unknown variance $\sigma^2$. Participants with missing outcomes will be dropped from the model fit.
When outcomes are repeatedly measured, the ANCOVA estimator wastes the information from intermediate outcomes. Although ignoring such information does not affect the robustness of ANCOVA under MCAR \citep{YangTsiatis2001}, intermediate outcomes can be used to improve precision, as we show below. 

\subsection{The mixed-effects model for repeatedly measured outcomes (MMRM)} \label{subsec:MMRM}
There are two MMRM working models that are commonly seen in the analysis of randomized trials: one adjusts for covariates by a constant coefficient vector across visits, while the other includes visit-by-covariates interaction terms. For conciseness, we refer to the former model as ``MMRM-I'' and the latter as ``MMRM-II''.

The MMRM-I working model is defined as, for each $ t=1,\dots,K,$
\begin{equation}\label{eq:MMRM}
    Y_{it} = \beta_{0t} +  \sum_{j = 1}^ J \beta_{Ajt}I\{A_i = j\} + \bbeta_{\bX}^\top \bX_i + \varepsilon_{it}, 
\end{equation}
where $\{(\beta_{0t}, \beta_{A1t}, \dots, \beta_{AJt})\}_{t=1}^K$ and $\bbeta_{\bX}$ are parameters,  and $\boldsymbol{\varepsilon}_i = (\varepsilon_{i1}, \dots, \varepsilon_{iK})$ is the independent residual error following a 
multivariate normal distribution with mean $\bzero$ and unknown covariance $\bSigma$. The covariance matrix $\bSigma$ is assumed to be unstructured; that is, no other assumption is made on $\bSigma$ except that it is positive definite. 
The MMRM-I estimator $ \widehat{\bDelta}^{\textup{(MMRM)}}$ for $\bDelta^*$ is defined as the MLE for parameters $({\beta}_{A1K}, \dots, {\beta}_{AJK})$. Of note, Equation~(\ref{eq:MMRM}) is the marginal model, i.e., the random effects are marginalized and implicitly represented in the covariance matrix $\bSigma$. 

The MMRM-II working model is, for each $t = 1,\dots, K$,
\begin{equation}\label{eq:MMRM-II}
    Y_{it} = \beta_{0t} +  \sum_{j = 1}^ J \beta_{Ajt}I\{A_i = j\} + \bbeta_{\bX t}^\top \bX_i + \varepsilon_{it}, 
\end{equation}
which differs from the MMRM-I working model only on the regression coefficients of $\bX_i$, where $\bbeta_{\bX t}$ is substituted for $\bbeta_{\bX}$. We define the MMRM-II estimator, $\widehat{\bDelta}^{\textup{(MMRM-II)}}$, as the MLE for $(\beta_{A1K}, \dots, \beta_{AJK})$ in model~(\ref{eq:MMRM-II}). Different from MMRM-I, MMRM-II can capture the time-varying correlation of covariates and outcomes and has been recommended for the primary analysis \citep{mallinckrodt2020aligning}. 

In MMRM-I~(\ref{eq:MMRM}) and MMRM-II~(\ref{eq:MMRM-II}), the correlation of covariates and the outcome vector is constant among treatment groups, which we call the homogeneity assumption. In addition, the covariance matrix of $\bY_i$ is also assumed to be the same across treatment groups, which we refer to as the homoscedasticity assumption. Due to these assumptions, although MMRM-I and MMRM-II utilize information from intermediate outcomes, we find that they may be  less precise than ANCOVA if their working models are misspecified (Section~\ref{subsec: intermediate outcomes}). 




\subsection{Improved MMRM:  modeling heterogeneity and heteroscedasticity  among  treatment groups and visits}\label{subsec:IMMRM}

We propose a working model, called ``IMMRM'', that handles both heterogeneity and heteroscedasticity as follows:
for each $t=1,\dots,K,$
\begin{equation}\label{eq:IMMRM}
    Y_{it} = \beta_{0t} + \sum_{j = 1}^ J \beta_{Ajt}I\{A_i = j\} +  \bbeta_{\bX t}^\top\bX_i + \sum_{j = 1}^ J \bbeta_{A\bX jt}^\top I\{A_i = j\}\bX_i  +\sum_{j = 0}^ J\varepsilon_{ijt}I\{A_i = j\},
\end{equation}
where $\bvarepsilon_{ij} = (\varepsilon_{ij1}, \dots, \varepsilon_{ijK})$ has a multivariate normal distribution with mean $\bzero$ and covariance $\bSigma_j$ for each $j = 0, \dots, J$, and $(\bvarepsilon_{i0}, \dots, \bvarepsilon_{iJ})$ are independent of $\bX_i$ and each other. 
The fixed effects in model~(\ref{eq:IMMRM}) are
$(\beta_{0t}, \beta_{Ajt}, \bbeta_{\bX t}, \bbeta_{A \bX j t})$ for $j = 1, \dots, J$. Each $\bSigma_j, j = 0,\dots, J$ is assumed to be positive definite and unstructured. The IMMRM estimator for $\bDelta^*$ is defined as
\begin{displaymath}
 \widehat{\bDelta}^{(\textup{IMMRM)}} =  (\widehat\beta_{A1K} + \widehat\bbeta_{A\bX 1 K}^\top \overline{\bX}, \dots, \widehat\beta_{AJK} + \widehat\bbeta_{A\bX J K}^\top \overline{\bX}),
\end{displaymath}
where $\overline{\bX} = n^{-1}\sum_{i=1}^n \bX_i$ and $(\widehat\beta_{AjK}, \widehat\bbeta_{A\bX jK})$ are MLEs for parameters $(\beta_{AjK}, \bbeta_{A\bX jK})$. 


Compared with MMRM-I and MMRM-II, the IMMRM working model has two improvements. First, the inclusion of treatment-covariates-visits three-way interaction terms allows the relationship between the outcomes and baseline variables to vary across treatment groups and visits.
Such interaction terms models heterogeneity, which has been shown by \cite{tsiatis2007, ye2022toward} as an effective method to improve precision for scalar outcomes. We extend this idea to longitudinal repeated measures data. 
Second, the covariance matrix of $\bY_i$ is modeled separately for each treatment group, which accounts for heteroscedasticity. 
\cite{gosho2018effect} first proposed the idea of modeling heteroscedasticity in MMRM; however, they only provide empirical results to show its benefits.
We show, in Section~\ref{subsec: intermediate outcomes}, that modeling heteroscedasticity is necessary for achieving asymptotic precision gain when repeated measure outcomes are jointly modeled. 

Technically, IMMRM remains a linear mixed model for repeated measures and hence fits into the broad MMRM framework defined by \cite{mallinckrod2008recommendations}. However, IMMRM extends the commonly used MMRM working models, i.e., MMRM-I and MMRM-II, by handling heterogeneity and heteroscedasticity, two techniques that have not been used together in practice or theoretically analyzed for longitudinal randomized trials, thereby representing a methodology improvement with a practical impact.



\section{Main results}\label{sec:main-results}
\subsection{Asymptotic theory}
\begin{theorem}\label{Thm1}
Assume Assumption 1 and regularity conditions.
Consider $\widehat{\bDelta}^{\textup{(est)}}$ for  $\textup{est} \in \{\textup{ANCOVA},\ \textup{MMRM-I},\ \textup{MMRM-II},\ \textup{IMMRM}\}$.

For each of the four estimators, under simple or stratified randomization, we have consistency, i.e., $\widehat{\bDelta}^{\textup{(est)}}\rightarrow \bDelta^*$ in probability, and asymptotic normality, i.e., $\sqrt{n}(\widehat{\bDelta}^{\textup{(est)}}- \bDelta^*)$ weakly converges to a mean-zero multivariate normal distribution, under arbitrary misspecification of its working model.

Denote $\widetilde{\bV}^{\textup{(est)}}$ and $\bV^{\textup{(est)}}$ as the asymptotic covariance matrices of $\widehat{\bDelta}^{\textup{(est)}}$  under simple and stratified randomization, respectively. Then, for $\textup{est} \in \{\textup{ANCOVA},\ \textup{MMRM-I},\ \textup{MMRM-II}\}$,
\begin{equation}\label{eq:partial-order}
 \widetilde\bV^{\textup{(est)}} \succeq \bV^{\textup{(est)}} \succeq \widetilde\bV^{\textup{(IMMRM)}} = \bV^{\textup{(IMMRM)}}.
\end{equation}
In addition, we provide the conditions for $\bV^{\textup{(est)}} = \bV^{\textup{(IMMRM)}}$ in the Supplementary Material.
\end{theorem}

Theorem~\ref{Thm1} has the following implications. 
First, under simple or stratified randomization, each of the ANCOVA, MMRM-I, MMRM-II, and IMMRM estimators is robust under MCAR. 
Second, the IMMRM estimator has the highest precision among the four estimators. 
Such precision gain can be translated into sample size reduction.
Third, unlike the other three estimators, the IMMRM estimator has the same asymptotic covariance matrix under simple or stratified randomization. Therefore, the confidence interval for the IMMRM estimator under stratified randomization can be constructed as if simple randomization were used, without being statistically conservative. 




For performing hypothesis testing and constructing confidence intervals, we provide consistent estimators for the asymptotic covariance matrices $\widetilde\bV^{\textup{(est)}}$ and $\bV^{\textup{(est)}}$ in the Supplementary Material. The  sandwich variance estimator \citep{tsiatis2007} is used to estimate $\widetilde\bV^{\textup{(est)}}$; and the expression of $\widetilde\bV^{\textup{(est)}} - \bV^{\textup{(est)}}$ is derived in the Supplementary Material and approximated by substituting $\widehat{E}$, the expectation with respect to the empirical distribution, for $E$. 

Our proof for Theorem~\ref{Thm1} is provided in the Supplementary Material. In the proof, we derive the influence function for each of the four estimators and prove the asymptotic results by extending Theorem 1 of \cite{wang2021model} and  Lemma~B.2 of \cite{bugni2019inference} to longitudinal data. The major innovation and challenge of the proof are to derive the partial order~(\ref{eq:partial-order}) given the non-monotone missingness. We overcome this challenge by developing a series of inequalities related to functions of $\bM_i$ and positive definite matrices, which are presented in Lemma 1 of the Supplementary Material.  

When the missing data mechanism depends on the treatment, baseline information, or historical outcomes, all estimators considered in Theorem~\ref{Thm1} could be biased if their working models are misspecified, a common property shared by many outcome regression models \citep{wang2019analysis}. However, since MMRM-I and MMRM-II are special cases of IMMRM, IMMRM has additional robustness to model misspecification: it can still be correct when MMRM-I and MMRM-II are wrong, but not vice versa. Furthermore, the sensitivity analysis on missing data assumptions for MMRM \citep{o2014clinical, mallinckrodt2020aligning} can also be applied to IMMRM, allowing for its incorporation into the current practice. In the ensuing real-data-based simulation study, we  show that the IMMRM estimator is less biased and more precise than the others under MAR.


\subsection{How adjustment for intermediate outcomes improves precision}\label{subsec: intermediate outcomes}
Consider the IMMRM working model~(\ref{eq:IMMRM}) with a modification that intermediate outcomes are excluded, i.e.,
\begin{equation}\label{eq:ANHECOVA}
    Y_{iK} = \beta_{0K} + \sum_{j = 1}^ J \beta_{AjK}I\{A_i = j\} + \bbeta_{\bX K}^\top \bX_i + \sum_{j = 1}^ J \beta_{A\bX j K}^\top I\{A_i = j\}\bX_i + \sum_{j = 1}^ J\varepsilon_{ijK} I\{A_i=j\}.
\end{equation}
Then the IMMRM estimator for $\bDelta^*$ given the above working model~(\ref{eq:ANHECOVA}) reduces to the ``ANHECOVA'' estimator proposed by \cite{ye2022toward}. 
As a special case of the IMMRM model~(\ref{eq:IMMRM}), the working model~(\ref{eq:ANHECOVA}) differs from IMMRM only on whether intermediate outcomes are adjusted.
By comparing the asymptotic covariance matrices of the IMMRM estimator with the ANHECOVA estimator, we examine the contribution of intermediate outcomes in improving precision beyond what comes from covariate adjustment and stratified randomization. 

If $K=1$, a case with no intermediate outcomes, the IMMRM estimator and ANHECOVA estimators are equivalent. For $K>2$, the following Corollary shows that adjusting for intermediate outcomes by IMMRM will retain or increase but not lose precision.
\begin{corollary}\label{corollary:intermediate outcomes}
Assume $K>1$, Assumption 1 and regularity conditions in the Supplementary Material.
Let $\bV^{\textup{(ANHECOVA)}}$ be the asymptotic covariance matrix of the ANHECOVA estimator based on the working model~(\ref{eq:ANHECOVA}). Then $\bV^{\textup{(ANHECOVA)}} \succeq \bV^{\textup{(IMMRM)}}$.

Furthermore, $\bV^{\textup{(ANHECOVA)}}= \bV^{\textup{(IMMRM)}}$ if and only if, for each $t = 1,\dots, K-1$ and $j = 0,\dots, J$, $$P(M_{it}(j) = 1, M_{iK}(j) = 0)\ Cov\{Y_{it}(j), Y_{iK}(j) - \boldsymbol{b}_{Kj}^\top \bX_i\} = 0,$$ where $\boldsymbol{b}_{Kj} = Var(\bX_i)^{-1}Cov\{\bX_i, Y_{iK}(j)\}$,  $Cov\{\boldsymbol{U}_1, \boldsymbol{U}_2\} = E[\boldsymbol{U}_1\boldsymbol{U}_1^\top] - E[\boldsymbol{U}_1]E[\boldsymbol{U}_1]^\top$ is the covariance between any random vectors $\boldsymbol{U}_1, \boldsymbol{U}_2$ with finite second moments, and $Var(\bX_i)$ is the covariance matrix of $\bX_i$.
\end{corollary}





Corollary~\ref{corollary:intermediate outcomes} specifies the conditions for when adjusting for intermediate outcomes brings precision gain. For an intermediate visit $t$, $P(M_{it}(j) = 1, M_{iK}(j) = 0) > 0$ implies that a participant has a positive probability to both appear in visit $t$ and miss the last visit $K$; and $Cov\{Y_{it}(j), Y_{iK}(j) - \boldsymbol{b}_{Kj}^\top \bX_i\} \ne 0$ means that, for the treatment group $j$, $Y_{it}(j)$ is correlated with the residual of $Y_{iK}(j)$ after regressing on baseline variables. If an intermediate outcome satisfies the above two conditions for some treatment group $j$, then adjusting for it will lead to precision gain. 
On the contrary, adjusting for an intermediate outcome makes no change on the asymptotic covariance matrix if an intermediate outcome $Y_{it}$ is missing whenever $Y_{iK}$ is missing, or if it is not prognostic to the final outcome after controlling for $\bX_i$ in any treatment group.

Leveraging intermediate outcomes can bring
precision gain only when there are missing final outcomes and the intermediate outcomes are prognostic to the final  outcome beyond what is explained by covariates. This result also applies to MMRM-I and MMRM-II. When either of the two condition fails, ANHECOVA will lead to the same precision as IMMRM and be equally or more precie than MMRM-I and MMRM-II.
This finding generalizes the results of \cite{qian2019improving}, which considers a special case of our setup with $J=K=2$, simple randomization, and monotone censoring. 

Unlike the IMMRM estimator, adjusting for intermediate outcomes by MMRM-I or MMRM-II may cause efficiency loss. 
MMRM-I uses a constant vector to model the effect of covariates on outcomes, which is too stringent to capture its variation across visits and treatment groups. Thus, it can be less precise than ANCOVA, which only models the final outcome. This result is demonstrated in both simulation study and data application.
For the MMRM-II estimator, although simulation studies have shown that it is comparable to or better than ANCOVA \citep{mallinckrodt2020aligning, schuler2022mixed}, our theoretical examination indicates that it is \textit{not} universally correct: MMRM-II can be less precise than ANCOVA when there is heteroscedasticity. 
To make this argument concrete, we construct an analytical counterexample in the supplementary material showing that the MMRM-II estimator has a 5\% larger variance than the ANCOVA estimator. Of note, an exception is the two-arm equal randomization, where MMRM-II is asymptotically equally or more precise than ANCOVA, as discussed below.

\subsection{Special case: two-arm equal randomization}\label{subsec: 1:1-randomization}
In a general setting, e.g., multi-arm trials or unequal randomization, the efficiency comparison among ANCOVA, MMRM-I, and MMRM-II is indeterminate. However, under the two-arm equal randomization, the following corollary implies that the MMRM-II estimator has equal or smaller asymptotic variance than ANCOVA and MMRM-I; in addition, the MMRM-II estimator has the same asymptotic variance under simple or stratified randomization.
\begin{corollary}\label{corollary:1:1-randomization}
Assume $J=1$, $\pi_1=\pi_0$, Assumption 1, and regularity conditions in the Supplementary Material. Then $ V^{\textup{(ANCOVA)}} \ge V^{\textup{(MMRM-II)}}$, $ V^{\textup{(MMRM-I)}} \ge V^{\textup{(MMRM-II)}}$ and $\widetilde{V}^{\textup{(MMRM-II)}} = V^{\textup{(MMRM-II)}}$.
\end{corollary}


\section{Simulation study}\label{sec:simulation}

\subsection{Simulation settings}
We conducted a simulation study assessing the performance of the ANCOVA, MMRM-I, MMRM-II and IMMRM estimators in  four scenarios: small ($50$) versus large ($200$) sample size per arm, and MCAR versus MAR. Across all scenarios, the simulated data were based on the IMAGINE-2 study introduced in Section~\ref{sec:data-application_trial1}, reflecting a real-world data distribution and indicating that all working models are potentially misspecified.
Under the setting of large sample size and MCAR, we demonstrate our results in Section~\ref{sec:main-results}. For small sample size or MAR, we stress-test the performance of all four estimators.

In the simulation, we considered five post-randomization visits ($K=5$), three treatment groups ($J=2$) and four randomization strata ($R=4$), trying to duplicate the setting of the IMAGINE-2 study while considering a multi-arm study. We used three baseline variables from the data: baseline HbA1c ($X_{1i}$), indicator of LDL cholesterol $\geq 2.6$mmol/L ($X_{2i}$), and indicator of baseline HbA1c $\geq 8.5\%$ ($X_{3i}$). 
The simulated data were generated by the following steps.

First, we took the $844$ {people who appear in all visits} from the insulin peglispro arm to serve as the super-population for the control group, where the potential outcome vector for  participant $i$ was denoted as  $\bY_i(0)$. For the other two treatment groups, named TRT1 and TRT2, we generated the potential outcome $\bY_i(1)$ and $\bY_i(2)$ by:
\begin{equation*}\label{eq:simulpo}
    Y_{it}(j) = c_{t}(j) + Y_{it}(0) + \alpha_{t}(j)(X_{1i} - \overline{X_{1i}}) + \gamma_{t}(j)(X_{1i}^2 - \overline{X_{1i}^2}), 
\end{equation*} where, for $j=1, 2$ and $t=1,\dots, K$, $c_{t}(j)$ is a constant specifying the average treatment effect comparing TRT$a$ to the control group at time $t$, $\alpha_t(j)$ and $\gamma_t(j)$ are coefficients that determine the degree of heterogeneity and heteroscedasticity among treatment arms, and $\overline{X_{1i}}$ and $\overline{X_{1i}^2}$ are averages of $X_{1i}$ and $X_{1i}^2$ across $i$ in the control group.
The quadratic terms adds another layer of model misspecification. 
{We set $(c_1(1), \dots, c_K(1)) = (0, 0, 0, 0, 0)$, $(c_1(2), \dots, c_K(2))= (-0.2, -0.5, -0.8, -0.9, -1)^\top$.}
This indicates that the true average treatment effect is $\bDelta^* = (0,-1)$.
{We let $(\alpha_1(1), \dots, \alpha_K(1)) = -10^{-2} \times (1,2,4,5,5)$, $(\alpha_1(2), \dots, \alpha_K(2)) = -10^{-2} \times (0.5, 1, 1.5, 7, 10)^\top$, $(\gamma_1(1), \dots, \gamma_K(1)) = -10^{-2} \times (1,3,3,2,5)^\top$ and $(\gamma_1(2), \dots, \gamma_K(2)) = -10^{-2} \times(1.5, 2, 1, 10, 10)^\top$. }
The negative signs of $\alpha_t(j)$ and  $\gamma_t(j)$ indicated that a higher baseline HbA1c is associated with a larger HbA1c change. 

Next, we independently resample $n$ participants with replacement from the empirical distribution of $\{(\bY_i(0), \bY_i(1), \bY_i(2), X_{1i}, X_{2i}, X_{3i})\}_{i=1}^{844}$. We then applied stratified randomization with a block size of 6 to randomly assign the resampled participants to three treatment arms with 1:1:1 randomization ratio, where randomization strata are defined by the joint levels of $X_{2i}$ and $X_{3i}$. The treatment variable was $A_i$ taking values in $\{0,1,2\}$ and the realized outcome vector was $\bY_i = \sum_{j=0}^2 I\{A_i=j\}\bY_i(j)$.

In the final step, we assigned missing outcomes under monotone censoring. We mimicked the missing data percentages in the IMAGINE-2 study such that $3\%, 6\%, 10\%, 13\%$, and $15\%$ are expected to be missing at visits 1-5, respectively. For MCAR, the censoring time was generated by a logistic regression with an intercept  only. {For MAR, the censoring time was determined by a logistic regression model on the treatment group and quadratic forms of the previous outcome and baseline HbA1c. } The dropout rate was made higher in the control arm and TRT1 compared to TRT2, and the participants are more likely to drop out given a higher HbA1c observed from the previous visit. The complete model specifications for MAR is provided in the Supplementary Material.

Given a simulated data set,  all four estimators were used to estimate $\bDelta^*$. The standard error is calculated by the sandwich variance estimator (using the option ``empirical'' in SAS with an adjustment for the variability in the mean covariates as pointed out by \citealp{qu2015estimation} in estimating the average treatment effect), which does not account for stratified randomization. The 95\% confidence interval was constructed by normal approximations.
The above procedure was repeated 10,000 times for each scenario. 
The performance metrics are the bias, empirical standard error, averaged standard error for each estimate, coverage probability, probability of rejecting the null hypothesis, and the relative mean squared error compared to IMMRM.


\subsection{Simulation results}
The simulation results are summarized in Table~\ref{simulstudy:n50}  for $n=150$ and Table~\ref{simulstudy:n200} for $n=600$.
Across all scenarios, IMMRM maintains a coverage probability close to 95\% and consistently outperforms the other estimators in bias and precision, reflected by its smallest RMSE. 
Furthermore, ANCOVA, MMRM-I, and MMRM-II fail to account for the precision gain from stratified randomization, resulting in overestimating the true standard error (comparing ASE and ESE) and thus losing power.

Under MCAR, all estimators have negligible bias despite their misspecified models, which is consistent with our asymptotic result. The MMRM-I estimator has the largest mean squared error and variance among the four estimators, indicating its inadequate adjustment for intermediate outcomes. When the treatment has no effect (TRT1), IMMRM and MMRM-II have similar performance and are 14\% more precise than ANCOVA. 
{Here, the percent  precision gain is computed as 1 minus the ratio of squared empirical standard error.}
Given a non-zero treatment effect (TRT2),  IMMRM is 15\% more precise than both ANCOVA and MMRM-II. In addition, ANCOVA, MMRM-I, and MMRM-II have an additional 4-35\% variance increase due to ignoring stratified randomization, whereas IMMRM remains anti-conservative. These results further support our theorems in Section~\ref{sec:main-results}.

{Under MAR, ANCOVA, MMRM-I, and MMRM-II have bias  as high as 0.142 and under-coverage as high as 7\%. In contrast, IMMRM remains valid unbiased and precise, a combined result of its augmented robustness and the relatively small proportion of missing outcomes (3-15\% across visits). Our choice of missing outcome proportions duplicates the IMAGINE-2 study, thereby testing the performance of IMMRM in a real-world setting for phase-3 clinical trials.}

{Compared with a large sample size, a small sample size leads to slightly larger bias and less accurate standard error estimates for all estimators due to increased uncertainty. IMMRM is most influenced by the sample size due to its complexity in modeling, leading to underestimation of true standard error and 1-2\% under-coverage; this issue is mitigated in Table~\ref{simulstudy:n200} with a larger sample size. 
To improve the coverage probability under a small sample size, we recommend using finite-sample correction methods for variance estimation, such as applying the ``empirical=firores'' option in SAS \citep{mancl2001covariance}. On this topic, an evaluation of  bias-corrected variance estimators for MMRM can be found in \cite{gosho2017comparison}.}


\begin{table}[p!]
\centering
\caption{Simulation results comparing candidate estimators with $50$ samples per arm under MCAR and MAR.
For each estimator, we estimate the average treatment effect of TRT1 and TRT2, both comparing the control group. The following measures are used: bias, empirical standard error (ESE), averaged standard error (ASE), coverage probability (CP), probability of rejecting the null (PoR), relative mean squared error compared to IMMRM (RMSE). For RMSE, a number bigger than 1 indicates a larger mean squared error than IMMRM.}
\label{simulstudy:n50}
\renewcommand{\arraystretch}{0.8}
\resizebox{\textwidth}{!}{
\begin{tabular}{lllrrrrrr} 
\toprule
                                               &                           &    Group        & Bias    & ESE    & ASE    & CP(\%)   & PoR(\%)  & RMSE\\ 
\midrule
   \multirow{8}{*}{MCAR} & \multirow{2}{*}{ANCOVA}   & TRT1  & 0.004 & 0.229 & 0.238 & 95.7 & 4.3 & 1.190 \\ 
                                               &                           & TRT2 & 0.007 & 0.268 & 0.275 & 95.1 & 95.9 & 1.184 \\ 
\cmidrule{2-9}
                                               & \multirow{2}{*}{MMRM-I}     & TRT1  & 0.006 & 0.232 & 0.264 & 97.2 & 2.8 & 1.222 \\ 
                                              &                           & TRT2 & 0.004 & 0.301 & 0.370 & 98.5 & 82.1 & 1.484 \\ 
\cmidrule{2-9}
                                               & \multirow{2}{*}{MMRM-II} & TRT1  & 0.005 & 0.216 & 0.212 & 94.6 & 5.4 & 1.059 \\
                                               &                           & TRT2 & 0.005 & 0.268 & 0.284 & 96 & 95.6 & 1.179 \\ 
\cmidrule{2-9}
                                               & \multirow{2}{*}{IMMRM}    & TRT1  & 0.002 & 0.210 & 0.199 & 93.1 & 6.9 & - \\ 
                                               &                           & TRT2 &0.003 & 0.247 & 0.239 & 93.6 & 98.7 & - \\ 
\hline
                         \multirow{8}{*}{MAR}  & \multirow{2}{*}{ANCOVA}   & TRT1  & 0.058 & 0.253 & 0.254 & 94.3 & 5.7 & 1.318 \\ 
                                               &                           & TRT2 & 0.142 & 0.298 & 0.295 & 90.8 & 83.6 & 1.556 \\
\cmidrule{2-9}
                                               & \multirow{2}{*}{MMRM-I}     & TRT1  & 0.012 & 0.248 & 0.274 & 96.6 & 3.4 & 1.208 \\
                                               &                           & TRT2 & 0.042 & 0.322 & 0.378 & 97.1 & 75.5 & 1.504 \\
\cmidrule{2-9}
                                              & \multirow{2}{*}{MMRM-II} & TRT1  & 0.028 & 0.230 & 0.220 & 93.4 & 6.6 & 1.055 \\ 
                                               &                           & TRT2 &0.046 & 0.288 & 0.293 & 94.6 & 91.3 & 1.211 \\ 
\cmidrule{2-9}
                                               & \multirow{2}{*}{IMMRM}    & TRT1  & 0 & 0.226 & 0.206 & 92.6 & 7.4 & - \\ 
                                              &                           & TRT2 & 0.004 & 0.265 & 0.245 & 93 & 97.5 & - \\ 
\bottomrule
\end{tabular}
}
\end{table}

\begin{table}[p!]
\centering
\caption{Simulation results comparing candidate estimators with $200$ samples per arm under MCAR and MAR.
For each estimator, we estimate the average treatment effect of TRT1 and TRT2, both comparing the control group. The following measures are used: bias, empirical standard error (ESE), averaged standard error (ASE), coverage probability (CP), probability of rejecting the null (PoR), relative mean squared error compared to IMMRM (RMSE). For RMSE, a number bigger than 1 indicates a larger mean squared error than IMMRM.}
\label{simulstudy:n200}
\renewcommand{\arraystretch}{0.8}
\resizebox{\textwidth}{!}{
\begin{tabular}{lllrrrrrr} 
\toprule
                                               &                           &          Group  & Bias    & ESE    & ASE    & CP(\%)  & PoR(\%) & RMSE \\ 
\midrule
 \multirow{8}{*}{MCAR} & \multirow{2}{*}{ANCOVA}   & TRT1  &0.002 & 0.112 & 0.120 & 96.9 & 3.1 & 1.167 \\ 
                                               &                           & TRT2 & 0.006 & 0.133 & 0.140 & 96 & 100 & 1.196 \\
\cmidrule{2-9}
                                               & \multirow{2}{*}{MMRM-I}     & TRT1  & 0 & 0.116 & 0.133 & 97.8 & 2.2 & 1.241 \\ 
                                               &                           & TRT2 & 0.004 & 0.149 & 0.186 & 98.5 & 100 & 1.500 \\ 
\cmidrule{2-9}
                                               & \multirow{2}{*}{MMRM-II} & TRT1 & 0.002 & 0.105 & 0.107 & 95.7 & 4.3 & 1.028 \\ 
                                               &                           & TRT2 &  0.005 & 0.132 & 0.143 & 96.7 & 100 & 1.176 \\ 
\cmidrule{2-9}
                                               & \multirow{2}{*}{IMMRM}    & TRT1  & 0.001 & 0.104 & 0.102 & 95 & 5.1 & - \\ 
                                               &                           & TRT2 & 0.003 & 0.122 & 0.122 & 95.3 & 100 & - \\ 
\cmidrule{1-9}
                         \multirow{8}{*}{MAR}  & \multirow{2}{*}{ANCOVA}   & TRT1  & 0.051 & 0.126 & 0.130 & 94.3 & 5.7 & 1.480 \\ 
                                               &                           & TRT2 & 0.116 & 0.147 & 0.152 & 88.1 & 100 & 2.114 \\ 
\cmidrule{2-9}
                                               & \multirow{2}{*}{MMRM-I}     & TRT1  & 0.014 & 0.123 & 0.138 & 97.1 & 2.9 & 1.224 \\ 
                                               &                           & TRT2 & 0.047 & 0.162 & 0.192 & 97.3 & 100 & 1.711 \\ 
\cmidrule{2-9}
                                               & \multirow{2}{*}{MMRM-II} & TRT1  & 0.026 & 0.114 & 0.112 & 93.7 & 6.3 & 1.096 \\ 
                                              &                           & TRT2 & 0.043 & 0.142 & 0.148 & 94.7 & 100 & 1.319 \\ 
\cmidrule{2-9}
                                               & \multirow{2}{*}{IMMRM}    & TRT1  & 0.001 & 0.112 & 0.107 & 93.9 & 6.2 & - \\ 
                                               &                           & TRT2 & 0 & 0.129 & 0.126 & 94.2 & 100 & - \\ 
\bottomrule
\end{tabular}
}
\end{table}

{In addition to the two simulation studies, we conducted additional simulations where the treatment effect heterogeneity and heteroscedasticity are removed. Specifically, we modify the distribution of $Y_{it}(j)$ by letting $Y_{it}(j) = c(j) + Y_{it}(0)$ with $c(1)= 0$ and $c(2)=-1$, yielding a constant treatment effect model. As a result, the data generating distribution favors MMRM-II, and we use this scenario to stress-test the performance of IMMRM under homogeneity and homoscedasticity. The complete simulation results are provided as Tables 1 and 2 in the Supplementary Material. In Summary, IMMRM maintains negligible bias and nominal coverage, while its variance and RMSE are 0-3\% larger than MMRM-II. This is expected, since MMRM-II is efficient if correctly specified, and the additional parameters in IMMRM cause the small precision loss in finite samples. Compared to ANCOVA and MMRM-I, IMMRM still shows 3-6\% precision gain given $n=600$ and can yield valid inference under MAR in this scenario.}

\section{Data application}\label{sec:data-application}
We applied the ANCOVA, MMRM-I, MMRM-II, and IMMRM estimators to the IMAGINE-2 study. 
Due to very limited data for participants with no sulfonylurea/meglitinide use at baseline, we dropped this stratification variable in the analysis, resulting in 4 randomization strata in total.
For each estimator, we computed the estimate, 95\% confidence interval based on normal approximations, and the proportional variance reduction (PVR), defined as one minus the variance ratio of the estimator and the MMRM-I estimator. The standard error of each estimator is computed by the sandwich variance estimator.


Table~\ref{table:data-application} summarizes our results of the data application. All estimators have similar treatment effect estimates, while their variances differ. Consistent with the results from our simulation study, IMMRM yields the most precise estimator, which is 2.0\%, 4.7\%, and 12.6\%  more efficient than MMRM-II, ANCOVA, and MMRM-I, respectively, indicating its advantage in increasing power.

\begin{table}[ht]
\centering
\caption{Summary of data application.}\label{table:data-application}
\renewcommand{\arraystretch}{0.8}
\resizebox{\textwidth}{!}{
\begin{tabular}{lrrr}
\hline
Method     &  \makecell[r]{Treatment effect \\ estimate} & \makecell[r]{95\% confidence \\ interval} & \makecell[r]{Proportional variance reduction \\ compared to MMRM-I}  \\
\hline
ANCOVA     & -0.283 & (-0.380, -0.186) & 8.3\% \\
MMRM-I       & -0.296 & (-0.398, -0.194) & -\\
MMRM-II   & -0.299 & (-0.395, -0.203) & 10.8\%\\
IMMRM      & -0.292 & (-0.388, -0.196) & 12.6\% \\
\hline
\end{tabular}}
\end{table}

\section{Discussion}\label{sec:discussion}
For the analysis of longitudinal repeated measures data, we propose the IMMRM estimator that can improve precision for estimating the average treatment effect of the final outcome, compared to standard practice that uses ANCOVA, MMRM-I, or MMRM-II. Such precision gain comes from  appropriately adjusting for intermediate outcomes and accounting for stratified randomization, which can be translated into sample size reduction in trial planning. This result also applies to estimating the expectation of the final outcome for each treatment group.
{In addition to the results of precision we discussed, the proposed estimator is also locally efficient, i.e., asymptotically efficient when the model is correctly specified, as implied by Section 5.5 of \cite{vaart_1998}. However, efficiency may not be achieved when IMMRM is misspecified. In this case, our main theorem shows the superiority of the proposed estimator over the compared methods in precision.}

Like most regression models, the price paid for precision gain is to estimate more parameters. IMMRM involves $Kp(J+1)-p$ more parameters than MMRM-I and $KpJ$ more parameters than MMRM-II. 
Given that most randomized clinical trials involve a few treatment groups and visits relative to the sample size \citep{mallinckrod2008recommendations}, the number of extra parameters is often dominated by the number of covariates.
When the sample size is small, we recommend prespecifying a small number of prognostic baseline variables to avoid losing degrees of freedom.

{In practice, the precision gain of IMMRM compared to ANCOVA or MMRM is likely to occur when the sample size is large compared to the number of parameters. However, it is often difficult to determine how large the sample size should be since the performance of different methods will also vary by the true data-generating distribution (e.g., disease areas), quality of prespecified baseline variables  and intermediate outcomes (e.g., how  prognostic they are) and randomness in the samples \citep{kahan2014risks}. In our simulations with 600 samples per arm, IMMRM has 45 parameters and can approximate the asymptotic results well, while the standard error estimation is less precise when the sample size decreases to 150. 
    Following the recommendation in the literature for covariate adjustment \citep{benkeser2021improving},
    we recommend that the sample size for each model parameter is larger than 20 to avoid overfitting.}


Under the MAR assumption, an alternative approach for estimating the average treatment effect is the targeted minimum loss-based estimation (TMLE,  \citealp{van2012targeted}). This method involves recursively fitting regression models for outcomes and propensity scores for non-missingness, and is consistent as long as one of the two sets of models is correctly specified.  It is an open question, to the best of our knowledge, whether TMLE guarantees an efficiency gain by adjusting for intermediate outcomes while retaining its double-robustness.


{Leveraging post-randomization information to improve precision is not limited to intermediate outcomes. Our main results also apply to adjustment for other post-randomization continuous-valued random variables measured before the final outcomes, such as the body mass index measured at each visit. When these additional variables provide new prognostic information beyond intermediate outcomes and covariates, adding them to the IMMRM model can bring further asymptotic precision gain. }

{Based on MMRM, an alternative approach to maximum likelihood estimation is through maximizing the restricted maximum likelihood (REML), which is known for reducing the bias of the variance estimators in small samples. When MMRM is correctly specified, \cite{jiang2017asymptotic} showed that the  maximum likelihood estimator and REML estimator are equivalent asymptotically.  When MMRM is misspecified, however, \cite{maruo2020note} pointed out that REML can be incompatible with MMRM when the mean and variance parameters are not orthogonal. In this case, the construction of robust REML estimators based on MMRM remains an open question, which we leave for future research.}

{For robust variance estimation of MMRM estimators, we note that some functions in statistical software, e.g., the ``mixed'' procedure in SAS, implicitly assume orthogonality of mean and variance parameters. However, \cite{maruo2020note} showed such functions may lead to biased variance estimators when this  orthogonality assumption fails. To address this issue, \cite{maruo2020note} provided detailed guidance on how to implement MMRM with SAS and obtain unbiased variance estimators, which also applies to our IMRMM estimator. In addition, we also provide R functions to compute the point estimate and consistent variance estimator, which is available at \url{https://github.com/BingkaiWang/MMRM}.}


	\section*{Supplementary materials}
The Supplementary material contains regularity conditions for our theorems, formulas of the asymptotic variances in the theorem and their consistent estimators, proofs, an example which shows MMRM-II is less precise than ANCOVA, the missing data distribution under MAR in the simulation study, and additional simulation studies. 

\section*{Acknowledgement}
We sincerely thank the comments from the editor, Dr. Ashkan Ertefaie, and the anonymous reviewer to help us improve the paper. We also thank Drs. Yongming Qu, Michael Rosenblum, Ting Ye, and Yanyao Yi for their input in the early stage of this paper. 

	\par


\bibliographystyle{chicago}      
\bibliography{references}   

\end{document}


\def\spacingset#1{\renewcommand{\baselinestretch}%
{#1}\small\normalsize} \spacingset{1}

\date{\vspace{-5ex}}

\maketitle

\renewcommand\thesection{\Alph{section}}
\setcounter{section}{0}

\spacingset{1.5}
\setcounter{page}{1}

In Section \ref{sec:notations}, we give some useful notations. In Section~\ref{sec:regulartiy-conditions}, we provide the regularity conditions for Theorem 1 of the main paper. In Section~\ref{sec:variance-estimators}, we define the variance estimators for $\widetilde{V}^{\textup{(est)}}$ and $V^{\textup{(est)}}$ and provide consistent estimators. In Section~\ref{sec:lemmas}, we introduce a few lemmas for proving our main results. In Section~\ref{sec:proofs}, we prove Theorem 1, Corollary 1 and Corollary 2 of the main paper. In Section~\ref{sec: an example MMRM-II}, we give an example where MMRM-II is less precise than ANCOVA. In Section~\ref{sec:MAR}, we provide the missing data mechanism for MAR in the simulation study. In Section~\ref{sec: additional simulation}, we provide additional simulations under homogeneity and homoscedasticity.

\section{Notations}\label{sec:notations}
Throughout the proofs, we use $\bY, \bY(j), Y_{t}, Y_t(j)$ instead of $\bY_i, \bY_i(j), Y_{it}, Y_{it}(j)$ to represent random variables from the distribution $P$ for conciseness. The same notation is used for $\bM, \bM(j), M_t, M_t(j), \bX, A$.

Let $\bone_K$ be the  column vector of length $K$ with each component equal to 1, $\bzero_K$ be the  column vector of length $K$ with each component equal to 0, and $\boldsymbol{e}_t$ be the column vector of length $K$ with the $t$-th entry 1 and the rest 0. Let $\bI_K$ be the $K\times K$ identity matrix. Let $\otimes$ be the Kronecker product. Let $I$ be the indicator function, i.e. $I\{A\} = 1$ if event $A$ is true and 0 otherwise. For any random vector $\boldsymbol{W}$ with finite second-order moment, we define $\widetilde{\boldsymbol{W}} = \boldsymbol{W} - E[\boldsymbol{W}]$ and $Var(\boldsymbol{W}) = E[\widetilde{\boldsymbol{W}}\widetilde{\boldsymbol{W}}^\top]$. For two random vectors $\boldsymbol{W}_1$ and $\boldsymbol{W}_2$, we define $Cov(\boldsymbol{W}_1, \boldsymbol{W}_2) = E[\widetilde{\boldsymbol{W}}_1 \widetilde{\boldsymbol{W}}_2^\top]$. Let $\{0,1\}^K$ be the set of $K$-dimensional binary vectors, i.e. $\{0,1\}^K = \{(x_1, \dots, x_K)^\top: x_t \in \{0,1\}, t = 1, \dots K\}$. For any matrix (or vector), we use $||\cdot||$ to denote its $L_2$ matrix (or vector) norm. 
For any vector $\boldsymbol{v} = (v_1, \dots, v_L)$, we use $diag\{\boldsymbol{v}\}$ or $diag\{v_l: l = 1, \dots, L\}$ to denote an $L\times L$ diagonal matrix with the diagonal entries being $(v_1, \dots, v_L)$. For any sequence $x_1, \dots, x_n, \dots$, we define $P_n x = n^{-1} \sum_{i=1}^n x_i$. 

\section{Regularity conditions}\label{sec:regulartiy-conditions}
The regularity conditions for Theorem 1 are assumed on estimating equations $\bpsi^{\textup{(ANCOVA)}}$, $\bpsi^{\textup{(MMRM)}}$ and $\bpsi^{\textup{(IMMRM)}}$, which are defined by Equations~(\ref{psi:ANCOVA}), (\ref{psi:MMRM}), and (\ref{psi:IMMRM}) below, respectively. 
Each of $\bpsi^{\textup{(ANCOVA)}}$, $\bpsi^{\textup{(MMRM)}}$ and $\bpsi^{\textup{(IMMRM)}}$ is a $q$-dimensional function of random variables $(A, \bX, \bY, \bM)$ and a set of parameters $\btheta \in \mathbb{R}^q$, where $q$ and $\btheta$ vary among estimators, and $\bDelta$ are embedded in $\btheta$. Without causing confusion, we use $\bpsi(A, \bX, \bY, \bM;\btheta)$ to represent any of the above three estimating equations. 
As we show in the proof of Theorem 1 below, each of the ANCOVA, MMRM-I, MMRM-II and IMMRM estimators is an M-estimator, which is defined as the solution of $\bDelta$ to the equations $P_n \bpsi(A, \bX, \bY, \bM;\btheta) = \bzero$.

The regularity conditions are  similar to those used in Section 5.3 of \cite{vaart_1998} in their theorem on estimating equations $\bpsi$ for showing asymptotic normality of M-estimators for independent, identically distributed data. 
The regularity conditions are given below:

\hspace{5pt} (1) $\btheta \in \mathbf{\Theta}$, a compact set in $\mathbb{R}^{q}$.

\hspace{5pt} (2) $E[||\bpsi(j, \bX, \bY(j), \bM(j); \btheta)||^2] < \infty$ for any $\btheta \in \mathbf{\Theta}$  and $j \in \{0, \dots, J\}$.

\hspace{5pt} (3)
There exists a unique solution in the interior of $\mathbf{\Theta}$, denoted as $\underline{\btheta}$, to the equations
\begin{equation*}
\sum_{j=0}^J \pi_j E[\bpsi(j, \bX, Y(j), M(j); \btheta)]=\boldsymbol{0}.
\end{equation*}

\hspace{5pt} (4) For each $j \in \{0,\dots, J\} $, the function $\btheta \mapsto \bpsi(j, \boldsymbol{x}, \boldsymbol{y}, \boldsymbol{m};\btheta)$ is  twice continuously  differentiable for every $(\boldsymbol{x}, \boldsymbol{y}, \boldsymbol{m})$ in the support of  $(\bX, \bY(j), \bM(j))$ and is dominated by an integrable function $\boldsymbol{u}(\bX, \bY(j), \bM(j))$.

\hspace{5pt} (5) There exist a $C > 0$ and and integrable function $v(\bX, \bY(j), \bM(j))$, such that, for each entry $\psi_r, r = 1,\dots, q$, of $\bpsi$, $||\frac{\partial^2}{\partial\btheta\partial\btheta^t }\psi_r(j, \boldsymbol{x}, \boldsymbol{y}, \boldsymbol{m};\btheta)|| < v(\boldsymbol{x}, \boldsymbol{y}, \boldsymbol{m})$ for every $(j, \boldsymbol{x}, \boldsymbol{y}, \boldsymbol{m})$ in the support of $(A, \bX, \bY(j), \bM(j))$  and $\btheta$ with $||\btheta - \underline{\btheta}|| < C$.

\hspace{5pt} (6) $E\left[\Big|\Big|\frac{\partial}{\partial\btheta}\bpsi(j, \bX, \bY(j), \bM(j); \btheta)\Big|_{\btheta  =  \underline{\btheta}}\Big|\Big|^2\right] < \infty$ for $j \in \{0, \dots, J\}$ and 
\begin{displaymath}
    \sum_{j=0}^J \pi_j E\left[\frac{\partial}{\partial\btheta}\bpsi(j, \bX, \bY(j), \bM(j); \btheta)\Big|_{\btheta  =  \underline{\btheta}}\right]
\end{displaymath}
is invertible.

\section{Variance estimators in Theorem 1}\label{sec:variance-estimators}
For an M-estimator $\widehat\btheta \in \mathbb{R}^q$ defined by $P_n \bpsi(A,\bX,\bY,\bM;\btheta) = \bzero$, its sandwich variance estimator under simple randomization is defined as
\begin{align*}
\widetilde{\bV}_n(\bpsi, \widehat\btheta) &= \frac{1}{n} \left\{P_n\left[\frac{\partial}{\partial \btheta }\bpsi(A, \bX, Y, M; \btheta)\bigg|_{\btheta = \widehat{\btheta}}\right]\right\}^{-1}\left\{P_n\left[\bpsi(A, \bX, Y, M; \widehat\btheta)\bpsi(A, \bX, Y, M; \widehat\btheta)^t\right]\right\} \\
&\qquad \left\{P_n\left[\frac{\partial}{\partial \btheta }\bpsi(A, \bX, Y, M; \btheta)\bigg|_{\btheta = \widehat{\btheta}}\right]\right\}^{-1,t}.
\end{align*}
Since $\bDelta$ is embedded in $\btheta$, we can find $\mathbf{C} \in \mathbf{R}^{J\times q}$ such that $\bDelta = \mathbf{C} \btheta$. Then the variance estimator of $\widehat\bDelta$ under simple randomization is defined as $\mathbf{C} \widetilde{\bV}_n(\bpsi, \widehat\btheta)\mathbf{C}^\top$. 

For each $\textup{est} \in \{\textup{ANCOVA}, \textup{MMRM-I}, \textup{MMRM-II}, \textup{IMMRM} \}$, the variance estimator $\widetilde{\bV}_n^{\textup{(est)}}$ is calculated by $\mathbf{C} \widetilde{\bV}_n(\bpsi^{\textup{(est)}}, \widehat\btheta)\mathbf{C}^\top$. We note that the MMRM-I estimator and MMRM-II estimator share the same estimating equations $\bpsi^{\textup{(MMRM)}}$ with different specifications of $\bu(\bX)$ as described in Equation~(\ref{proof:MMRM}).

We next define $\bV_n^{\textup{(est)}}$ for $\textup{est} = \textup{ANCOVA}, \textup{MMRM-I}, \textup{MMRM-II}$. 
Define
\begin{align*}
    \widehat{Var}(\widehat{E}[\bX|S]) &= \sum_{s \in \mathcal{S}} \frac{(P_n I\{S=s\} \bX)(P_n I\{S=s\} \bX)^\top}{P_n I\{S=s\}} - (P_n \bX) (P_n \bX)^\top,\\
    \widehat{Var}(\bX) &= P_n \bX\bX^\top -( P_n \bX) (P_n \bX)^\top, \\
    \widehat{Cov}(\bX, Y_K(j)) &= P_n \frac{I\{A=j\}}{\pi_j}\bX Y_K -  P_n \bX P_n \frac{I\{A=j\}}{\pi_j} Y_K,\\
    \widehat{\bb}_{Kj} &= \widehat{Var}(\bX)^{-1}\widehat{Cov}(\bX, Y_K(j)),\\
    \widehat{\bb}_{K} &= \sum_{j=0}^J \pi_j\widehat{Var}(\bX)^{-1}\widehat{Cov}(\bX, Y_K(j)),\\
    \widehat{\mathbf{z}} &= (\widehat{\bb}_{K0} - \widehat{\bb}_{K},\dots, \widehat{\bb}_{KJ} - \widehat{\bb}_{K}), \\
    \widehat{\bv} &= (\widehat{\bb}_{K0} - \widehat{\bbeta}_{\bX},\dots, \widehat{\bb}_{KJ} - \widehat{\bbeta}_{\bX}), \\
    \bL &= (-\bone_J \quad \bI_J),
\end{align*}
where $\widehat{\bbeta}_{\bX}$ is the MLE of $\bbeta_{\bX}$ in the MMRM-I working model (2).

Following Equation~(\ref{V-ANCOVA}), we define
\begin{align*}
    \bV_n^{\textup{(ANCOVA)}} &= \widetilde\bV_n^{\textup{(ANCOVA)}} - \frac{1}{n}\bL[ diag\{\pi_j^{-1} (\widehat{\bb}_{Kj} - \widehat{\bb}_{K})^\top\widehat{Var}(\widehat{E}[\bX|S])(\widehat{\bb}_{Kj} - \widehat{\bb}_{K}): j = 0,\dots, J\}\\
    &\quad - \widehat{\mathbf{z}}^\top\widehat{Var}(\widehat{E}[\bX|S])  \widehat{\mathbf{z}}] \bL^\top.  
\end{align*}
Following Equation~(\ref{V-MMRM}), we define
\begin{align*}
    \bV_n^{\textup{(MMRM-I)}} &= \widetilde\bV_n^{\textup{(MMRM-I)}} - \frac{1}{n}\bL[ diag\{\pi_j^{-1} (\widehat{\bb}_{Kj} - \widehat{\bbeta}_{\bX})^\top\widehat{Var}(\widehat{E}[\bX|S])(\widehat{\bb}_{Kj} - \widehat{\bbeta}_{\bX}): j = 0,\dots, J\}\\
    &\quad - \widehat{\bv}^\top\widehat{Var}(\widehat{E}[\bX|S])  \widehat{\bv}] \bL^\top.  
\end{align*}
Similarly, we define
\begin{align*}
    \bV_n^{\textup{(MMRM-II)}} &= \widetilde\bV_n^{\textup{(MMRM-II)}} - \frac{1}{n}\bL[ diag\{\pi_j^{-1} (\widehat{\bb}_{Kj} - \widehat{\bb}_{K})^\top\widehat{Var}(\widehat{E}[\bX|S])(\widehat{\bb}_{Kj} - \widehat{\bb}_{K}): j = 0,\dots, J\}\\
    &\quad - \widehat{\mathbf{z}}^\top\widehat{Var}(\widehat{E}[\bX|S])  \widehat{\mathbf{z}}] \bL^\top.  
\end{align*}
\section{Lemmas}\label{sec:lemmas}

\begin{lemma}\label{lemma-Vm}
Let $\bSigma \in \mathbb{R}^{K\times K}$ be a positive definite matrix.
For each $\bm \in \{0,1\}^K\setminus\{\bzero_K\}$, let $n_{\bm} = \sum_{t=1}^K m_{t}$ be the number of ones in $\bm$ and $t_{\bm,1} < \dots < t_{\bm,n_{\bm}}$ denote the ordered list of locations of ones in $\bm$, i.e., $m_{t} = 1$ if $t \in \{t_{\bm,1}, \dots, t_{\bm,n_{\bm}}\}$ and $0$ otherwise. We define $\mathbf{D}_{\bm} = [\boldsymbol{e}_{t_{\bm,1}} \,\,\,\dots \,\,\, \boldsymbol{e}_{t_{\bm,n_{\bm}}}] \in \mathbb{R}^{K\times n_{\bm}}$ and $$\bV_{\bm}(\bSigma) = I\{\bm  \in \{0,1\}^K\setminus\{\bzero_K\}\} \mathbf{D}_{\bm}(\mathbf{D}_{\bm}^\top\bSigma \mathbf{D}_{\bm})^{-1}\mathbf{D}_{\bm}^\top \in \mathbb{R}^{K\times K},$$
which are deterministic functions of $\bm$ and $\bSigma$.

Let $\bM = (M_1, \dots, M_K)$ be a $K$-dimensional binary random vector taking values in $\{0,1\}^K$. We assume that $P(\bM = \bone_K) > 0$. Then the following statements hold.
\begin{enumerate}[(1)]
    \item $E[\bV_{\bM}(\bSigma)]$ is well-defined and positive definite.
    \item $\be_K^\top E[\bV_{\bM}(\bSigma)]^{-1}\be_K \le P(M_K = 1)^{-1}\be_K^\top \bSigma\be_K $. The equality holds if and only if either of the following conditions holds: (i) $K = 1$ or (ii) $P(M_t = 1, M_K = 0) \sigma_{t,K} = 0$ for $t = 1,\dots, K-1$, where $\sigma_{t,K} = \be_t^\top \bSigma \be_K$ is the $(t,K)$-th entry of $\bSigma$.
    
    \item Let $\mathbf{A} \in \mathbb{R}^{K\times K}$ be a positive definite matrix. Then \\
    $\be_K^\top E[\bV_{\bM}(\bSigma)]^{-1}\be_K \le \be_K^\top E[\bV_{\bM}(\mathbf{A})]^{-1}E[\bV_{\bM}(\mathbf{A}) \bSigma  \bV_{\bM}(\mathbf{A})]E[\bV_{\bM}(\mathbf{A})]^{-1}\be_K $. The equality holds if and only if\\ $P(\bM = \bm) \be_K^\top E[\bV_{\bM}(\mathbf{A})]^{-1}\bV_{\bm}(\mathbf{A}) = P(\bM = \bm) \be_K^\top E[\bV_{\bM}(\bSigma)]^{-1}\bV_{\bm}(\bSigma)$ for all $\bm \in \{0,1\}^K$.
    
    \item Let $\mathbf{B} \in \mathbb{R}^{K\times K}$ be a positive semi-definite matrix. Then \\
    $\be_K^\top \mathbf{B}\be_K \le \be_K^\top E[\bV_{\bM}(\bSigma)]^{-1}E[\bV_{\bM}(\bSigma) \mathbf{B} \bV_{\bM}(\bSigma)]E[\bV_{\bM}(\bSigma)]^{-1}\be_K $. The equality holds if and only if $P(\bM = \bm) \be_K^\top E[\bV_{\bM}(\bSigma)]^{-1}\bV_{\bm}(\bSigma) \mathbf{B}$ does not vary across $\bm \in \{0,1\}^K$.
    \item Letting $\mathbf{C} \in \mathbb{R}^{K\times K}$ be a positive definite matrix such that  $\mathbf{C} - \bSigma$ is positive semi-definite, then $\be_K^\top (E[\bV_{\bM}(\mathbf{C})]^{-1} - E[\bV_{\bM}(\bSigma)]^{-1})\be_K \ge \be_K^\top(\mathbf{C} - \bSigma)\be_K$. 
\end{enumerate}
\end{lemma}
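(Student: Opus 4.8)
The plan is to prove the inequality by interpolating between $\bSigma$ and $\mathbf{C}$ along the straight-line path and reducing the claim to part (4) of the lemma. Since $\mathbf{C} - \bSigma$ is positive semi-definite and $\bSigma$ is positive definite, the matrix $\bSigma_s := \bSigma + s(\mathbf{C}-\bSigma)$ is positive definite for every $s \in [0,1]$, with $\bSigma_0 = \bSigma$ and $\bSigma_1 = \mathbf{C}$. By part (1) applied to each $\bSigma_s$, the matrix $\mathbf{G}(s) := E[\bV_{\bM}(\bSigma_s)]$ is positive definite and hence invertible, so the scalar function $f(s) := \be_K^\top \mathbf{G}(s)^{-1}\be_K$ is well-defined on $[0,1]$. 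The target inequality is exactly $f(1) - f(0) \ge \be_K^\top(\mathbf{C}-\bSigma)\be_K$, so by the fundamental theorem of calculus it suffices to establish the pointwise lower bound $f'(s) \ge \be_K^\top(\mathbf{C}-\bSigma)\be_K$ for all $s \in (0,1)$ and then integrate over $[0,1]$.

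Next I would compute $f'(s)$ explicitly. Because $\bM$ takes at most $2^K$ values, $\mathbf{G}(s)$ is a finite weighted sum $\sum_{\bm} P(\bM = \bm)\bV_{\bm}(\bSigma_s)$, so differentiation under the expectation is immediate. For each fixed $\bm \neq \bzero_K$, writing $\mathbf{S}_{\bm}(s) = \mathbf{D}_{\bm}^\top \bSigma_s \mathbf{D}_{\bm}$ and using the identity $\frac{d}{ds}\mathbf{S}_{\bm}(s)^{-1} = -\mathbf{S}_{\bm}(s)^{-1}\mathbf{D}_{\bm}^\top(\mathbf{C}-\bSigma)\mathbf{D}_{\bm}\mathbf{S}_{\bm}(s)^{-1}$ gives
$$\frac{d}{ds}\bV_{\bm}(\bSigma_s) = -\bV_{\bm}(\bSigma_s)(\mathbf{C}-\bSigma)\bV_{\bm}(\bSigma_s),$$
which also holds trivially for $\bm = \bzero_K$. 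Taking expectations and applying the inverse-derivative rule $\frac{d}{ds}\mathbf{G}(s)^{-1} = -\mathbf{G}(s)^{-1}\mathbf{G}'(s)\mathbf{G}(s)^{-1}$ then yields
$$f'(s) = \be_K^\top \mathbf{G}(s)^{-1} E\!\left[\bV_{\bM}(\bSigma_s)(\mathbf{C}-\bSigma)\bV_{\bM}(\bSigma_s)\right]\mathbf{G}(s)^{-1}\be_K.$$

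Finally, I would recognize the right-hand side as precisely the quantity bounded in part (4) of the lemma, with the positive definite matrix there taken to be $\bSigma_s$ and the positive semi-definite matrix $\mathbf{B}$ taken to be $\mathbf{C}-\bSigma$. Part (4) therefore gives $f'(s) \ge \be_K^\top(\mathbf{C}-\bSigma)\be_K$ for every $s$, and integrating over $[0,1]$ completes the proof. The only points requiring care are confirming that $\bSigma_s$ remains positive definite throughout (so that parts (1) and (4) are applicable along the entire path) and verifying the derivative formula for $\bV_{\bm}(\bSigma_s)$; I expect the latter to be the main technical obstacle, since it is what converts the integrand into the exact form addressed by part (4). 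Everything downstream is a direct invocation of the already-established inequality.
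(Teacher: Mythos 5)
Your proposal is correct, but it takes a genuinely different route from the paper. The paper's proof of part (5) is a short algebraic argument: it uses the identity $\bV_{\bm}(\mathbf{C})\,\mathbf{C}\,\bV_{\bm}(\mathbf{C}) = \bV_{\bm}(\mathbf{C})$ to write $\be_K^\top E[\bV_{\bM}(\mathbf{C})]^{-1}\be_K$ as the sandwich $\be_K^\top E[\bV_{\bM}(\mathbf{C})]^{-1}E[\bV_{\bM}(\mathbf{C})\mathbf{C}\bV_{\bM}(\mathbf{C})]E[\bV_{\bM}(\mathbf{C})]^{-1}\be_K$, splits $\mathbf{C} = (\mathbf{C}-\bSigma) + \bSigma$ by linearity, and then bounds the $\bSigma$ term below by $\be_K^\top E[\bV_{\bM}(\bSigma)]^{-1}\be_K$ via part (3) and the $\mathbf{C}-\bSigma$ term below by $\be_K^\top(\mathbf{C}-\bSigma)\be_K$ via part (4). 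You instead interpolate along $\bSigma_s = \bSigma + s(\mathbf{C}-\bSigma)$, differentiate $f(s) = \be_K^\top E[\bV_{\bM}(\bSigma_s)]^{-1}\be_K$, and observe that $f'(s)$ is exactly the part-(4) sandwich with base matrix $\bSigma_s$ and $\mathbf{B} = \mathbf{C}-\bSigma$, then integrate. Your derivative formula $\frac{d}{ds}\bV_{\bm}(\bSigma_s) = -\bV_{\bm}(\bSigma_s)(\mathbf{C}-\bSigma)\bV_{\bm}(\bSigma_s)$ is correct (it is the same identity the paper derives for $\partial \bV_{\bM}(\bSigma)/\partial\alpha_l$ elsewhere), the path stays positive definite since $\bSigma_s \succeq \bSigma \succ 0$ for $s \ge 0$, and differentiation under the (finite) expectation is unproblematic, so the argument closes. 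What the comparison buys: your route dispenses with part (3) entirely and exposes the inequality as a monotonicity-along-lines statement, at the cost of invoking part (4) at a continuum of base matrices and carrying a calculus step; the paper's route is shorter and purely finite-dimensional algebra but leans on both parts (3) and (4). Since part (5) states no equality condition, your approach loses nothing on that front.
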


\begin{proof}
\noindent (1) By definition, for any $\boldsymbol{m}\in \{0,1\}^K\setminus\{\bzero_K\}$, $\mathbf{D}_{\bm}$ is full column rank. Since $\bSigma$ is positive definite and $n_i \le K$, then $\mathbf{D}_{\bm}^\top\bSigma\mathbf{D}_{\bm}$ is positive definite and $\bV_{\bm} = \mathbf{D}_{\bm}(\mathbf{D}_{\bm}^\top\bSigma\mathbf{D}_{\bm})^{-1}\mathbf{D}_{\bm}^\top$ is positive semi-definite. 
In particular, for the case $\bm = \bone_K$, $\bV_{\bm} = \bSigma^{-1}$ and so is positive definite.
Since $E[\bV_{\bM}(\bSigma)] = \sum_{\boldsymbol{m}\in \{0,1\}^K\setminus\{\bzero_K\}}\bV_{\boldsymbol{m}}P(\bM =\boldsymbol{m})$ and we assume $P(\bM = \bone_K) >0$, then for any $\boldsymbol{x} \in \mathbb{R}^K$,
\begin{align*}
  \boldsymbol{x}^\top E[\bV_{\bM}]\boldsymbol{x} &= 
   \boldsymbol{x}^\top\bV_{\bone_K}\boldsymbol{x}P(\bM =\bone_K) +
  \sum_{\boldsymbol{m}\in \{0,1\}^K\setminus \{\bzero_K, \bone_K\}}\boldsymbol{x}^\top\bV_{\boldsymbol{m}}\boldsymbol{x}P(\bM =\boldsymbol{m}) \\
  &\ge \boldsymbol{x}^\top\bV_{\bone_K}\boldsymbol{x}P(\bM =\bone_K) \\
  & > 0,
\end{align*}
which implies that $E[\bV_{\bM}]$ is positive definite and so invertible.

\vspace{10pt}
\noindent (2) Consider the case $K=1$. we have that $\bSigma$ reduces to a positive number $\sigma$ and $V_{m}(\sigma) = I\{m = 1\} \sigma^{-1}$. Then $\be_K^\top E[V_{m}(\sigma)]^{-1}\be_K  = \frac{1}{P(m=1)}\sigma = \frac{1}{P(M_K = 1)}\be_K^\top \bSigma\be_K $. 
We next consider the case that $K\ge 2$.
Define $\bOmega_K = \{\bm \in \{0,1\}^K\setminus\{\be_K\}: m_K = 1\}$ and $\bOmega_{-K} = \{\bm \in \{0,1\}^K\setminus\{\bzero_K\}: m_K = 0\}$. Then we have
\begin{align*}
    \mathbf{D}_{\bm} = \left(\begin{array}{cc}
    \widetilde{\mathbf{D}}_{\bm}     & \bzero_{K-1} \\
    \bzero_{\sum_{t=1}^K m_t-1}^\top     &  1
    \end{array}\right) \textrm{if } \bm \in \bOmega_K\ \textrm{and } \mathbf{D}_{\bm} = \left(\begin{array}{c}
    \widetilde{\mathbf{D}}_{\bm}  \\
    \bzero_{\sum_{t=1}^K m_t}^\top  
    \end{array}\right) \textrm{if } \bm \in \bOmega_{-K},
\end{align*}
where $\widetilde{\mathbf{D}}_{\bm} \in \mathbb{R}^{(K-1)\times(\sum_{t=1}^K m_t-1)}$ is a matrix taking the first $K-1$ rows and first $(\sum_{t=1}^K m_t-1)$ columns of $\mathbf{D}_{\bm}$ if $\bm \in \bOmega_K$ and $\widetilde{\mathbf{D}}_{\bm} \in \mathbb{R}^{(K-1)\times(\sum_{t=1}^K m_t)}$ is the first $K-1$ rows of $\widetilde{\mathbf{D}}_{\bm}$ if $\bm \in \bOmega_{-K}$. 
We further define $\bSigma_{-K,-K} \in \mathbb{R}^{(K-1)\times (K-1)}$, $\bSigma_{-K,K} \in \mathbb{R}^{K-1}$ and $\sigma = \be_K^\top \bSigma \be_K$ such that
\begin{align*}
    \bSigma = \left(\begin{array}{cc}
    \bSigma_{-K,-K}     &  \bSigma_{-K,K}\\
    \bSigma_{-K,K}^\top     & \sigma
    \end{array}\right).
\end{align*}
Using the formula of block matrix inversion, we can compute that, if $\bm \in \bOmega_K$, then
\begin{align*}
  \bV_{\bm}(\bSigma) =  \left(\begin{array}{cc}
    \bA_{\bm} &  -\sigma^{-1}\bA_{\bm}\bSigma_{-K,K} \\
   -\sigma^{-1}\bSigma_{-K,K}^\top\bA_{\bm}  & \sigma^{-1} + \sigma^{-2}\bSigma_{-K,K}^\top\bA_{\bm}\bSigma_{-K,K}
  \end{array}\right),
\end{align*}
and, if $\bm \in \bOmega_{-K}$, then
\begin{align*}
  \bV_{\bm}(\bSigma) =  \left(\begin{array}{cc}
    \bA_{\bm}   &  \bzero_{K-1} \\
   \bzero_{K-1}^\top & 0
  \end{array}\right),
\end{align*}
where $\bA_{\bm} = \widetilde{\mathbf{D}}_{\bm}\{\widetilde{\mathbf{D}}_{\bm}^\top(\bSigma_{-K,-K}-\bSigma_{-K,K}\bSigma_{-K,K}^\top/\sigma)\widetilde{\mathbf{D}}_{\bm}\}^{-1}\widetilde{\mathbf{D}}_{\bm}^\top \in \mathbb{R}^{(K-1)\times (K-1)}$ if $\bm \in \Omega_K$ and $\bA_{\bm} = \widetilde{\mathbf{D}}_{\bm}\{\widetilde{\mathbf{D}}_{\bm}^\top\bSigma_{-K,-K}\widetilde{\mathbf{D}}_{\bm}\}^{-1}\widetilde{\mathbf{D}}_{\bm}^\top \in \mathbb{R}^{(K-1)\times (K-1)}$ if $\bm \in \Omega_{-K}$. Here $\bA_{\bm}$ is well defined for each $\bm \in \Omega_K \bigcup \Omega_{-K}$ since $\widetilde{\mathbf{D}}_{\bm}^\top\bSigma\widetilde{\mathbf{D}}_{\bm}$ is positive definite. 
In addition, if $\bm = \be_K$, then $\mathbf{D}_{\bm} = \be_K$ and $\bV_{\bm} = \sigma^{-1}\be_K\be_K^\top$. 
Hence
\begin{align*}
    E[\bV_{\bM}(\bSigma)] &= \left(\begin{array}{cc}
    \sum_{\bm \in \Omega_K\bigcup\Omega_{-K}}p_{\bm} \bA_{\bm}    &  -\sum_{\bm \in \Omega_K}p_{\bm}\sigma^{-1}\bA_{\bm}\bSigma_{-K,K}\\
 -\sum_{\bm \in \Omega_K}p_{\bm}\sigma^{-1}\bSigma_{-K,K}^\top\bA_{\bm}  &\sum_{\bm \in \Omega_K}p_{\bm} (\sigma^{-1} + \sigma^{-2}\bSigma_{-K,K}^\top\bA_{\bm}\bSigma_{-K,K}) + p_{\be_K} \sigma^{-1} 
    \end{array}\right).
\end{align*}
Since we have shown that $E[\bV_{\bM}(\bSigma)]$ is positive definite and $P(M_K = 1) = p_{\be_K} + \sum_{\bm \in \Omega_K}p_{\bm}$, by using the formula of block matrix inversion again, we have
\begin{align*}
  & (\be_K^\top E[\bV_{\bm}(\bSigma)]^{-1}\be_K)^{-1}\\
  &= P(M_K = 1)\sigma^{-1} + \sum_{\bm \in \Omega_K}p_{\bm}  \sigma^{-2}\bSigma_{-K,K}^\top\bA_{\bm}\bSigma_{-K,K}\\
   &\  -(\sum_{\bm \in \Omega_K}p_{\bm}\sigma^{-1}\bSigma_{-K,K}^\top\bA_{\bm}) 
    (\sum_{\bm \in \Omega_K \bigcup \Omega_{-K}}p_{\bm} \bA_{\bm} )^{-1}(\sum_{\bm \in \Omega_K}p_{\bm}\sigma^{-1}\bA_{\bm}\bSigma_{-K,K})\\
    &= P(M_K = 1)\sigma^{-1} + \sigma^{-2}\bSigma_{-K,K}^\top\{\mathbf{G}_{\Omega_{K}} - \mathbf{G}_{\Omega_{K}}(\mathbf{G}_{\Omega_{K}} + \mathbf{G}_{\Omega_{-K}})^{-1}\mathbf{G}_{\Omega_{K}}\}\bSigma_{-K,K},
\end{align*}
where $\mathbf{G}_{\Omega_{K}} =   \sum_{\bm \in \Omega_K}p_{\bm} \bA_{\bm}$ and $\mathbf{G}_{\Omega_{-K}} =   \sum_{\bm \in \Omega_{-K}}p_{\bm} \bA_{\bm}$.

For $\mathbf{G}_{\Omega_K}$, we note that $\bone_K \in \Omega_K$ with $p_{\bone_K} > 0$ and $\mathbf{G}_{\bone_K} = \{\bSigma_{-K,-K}-\bSigma_{-K,K}\bSigma_{-K,K}^\top/\sigma\}^{-1}$ is positive definite. Hence $\mathbf{G}_{\Omega_K}$ is positive definite. Furthermore, $\mathbf{G}_{\Omega_{-K}}\succeq \bzero$ by definition. Then
\begin{align*}
  &\mathbf{G}_{\Omega_{K}}^{-1}  - (\mathbf{G}_{\Omega_{K}} + \mathbf{G}_{\Omega_{-K}})^{-1} \\
  &= (\mathbf{G}_{\Omega_{K}} + \mathbf{G}_{\Omega_{-K}})^{-1}\{(\mathbf{G}_{\Omega_{K}} + \mathbf{G}_{\Omega_{-K}}) \mathbf{G}_{\Omega_{K}}^{-1}(\mathbf{G}_{\Omega_{K}} + \mathbf{G}_{\Omega_{-K}}) - (\mathbf{G}_{\Omega_{K}} + \mathbf{G}_{\Omega_{-K}})\}(\mathbf{G}_{\Omega_{K}} + \mathbf{G}_{\Omega_{-K}})^{-1} \\
  &= (\mathbf{G}_{\Omega_{K}} + \mathbf{G}_{\Omega_{-K}})^{-1}(\mathbf{G}_{\Omega_{-K}}+\mathbf{G}_{\Omega_{-K}}\mathbf{G}_{\Omega_{K}}^{-1}\mathbf{G}_{\Omega_{-K}})(\mathbf{G}_{\Omega_{K}} + \mathbf{G}_{\Omega_{-K}})^{-1} \\
  & \succeq \bzero.
\end{align*}
Hence 
\begin{align*}
  & (\be_K^\top E[\bV_{\bm}(\bSigma)]^{-1}\be_K)^{-1}\\
    &= P(M_K = 1)\sigma^{-1} + \sigma^{-2}\bSigma_{-K,K}^\top\{\mathbf{G}_{\Omega_{K}} - \mathbf{G}_{\Omega_{K}}(\mathbf{G}_{\Omega_{K}} + \mathbf{G}_{\Omega_{-K}})^{-1}\mathbf{G}_{\Omega_{K}}\}\bSigma_{-K,K} \\
    &=P(M_K = 1)\sigma^{-1} + \sigma^{-2}\bSigma_{-K,K}^\top\mathbf{G}_{\Omega_{K}}\{\mathbf{G}_{\Omega_{K}}^{-1}  - (\mathbf{G}_{\Omega_{K}} + \mathbf{G}_{\Omega_{-K}})^{-1}\}\mathbf{G}_{\Omega_{K}}\bSigma_{-K,K}  \\
    & \ge P(M_K = 1)\sigma^{-1},
\end{align*}
which completes the proof of $\be_K^\top E[\bV_{\bM}(\bSigma)]^{-1}\be_K \le\frac{1}{P(M_K = 1)}\be_K^\top \bSigma\be_K $.

We next examine when $\be_K^\top E[\bV_{\bM}(\bSigma)]^{-1}\be_K = \frac{1}{P(M_K = 1)}\be_K^\top \bSigma\be_K $. Since $\mathbf{G}_{\Omega_{K}}^{-1}$ is positive semi-definite, then the above derivation shows that the equality holds if and only if $\bSigma_{-K,K}^\top\mathbf{G}_{\Omega_{K}}(\mathbf{G}_{\Omega_{K}} + \mathbf{G}_{\Omega_{-K}})^{-1}\mathbf{G}_{\Omega_{-K}} =\bzero$. Noting that
\begin{displaymath}
\mathbf{G}_{\Omega_{K}}(\mathbf{G}_{\Omega_{K}} + \mathbf{G}_{\Omega_{-K}})^{-1}\mathbf{G}_{\Omega_{-K}} =\mathbf{G}_{\Omega_{K}}- \mathbf{G}_{\Omega_{K}}(\mathbf{G}_{\Omega_{K}} + \mathbf{G}_{\Omega_{-K}})^{-1}\mathbf{G}_{\Omega_{K}}
\end{displaymath}
is symmetric, the equality holds if and only if $\bSigma_{-K,K}^\top\mathbf{G}_{\Omega_{-K}}(\mathbf{G}_{\Omega_{K}} + \mathbf{G}_{\Omega_{-K}})^{-1}\mathbf{G}_{\Omega_{K}} =\bzero$. Since $(\mathbf{G}_{\Omega_{K}} + \mathbf{G}_{\Omega_{-K}})^{-1}\mathbf{G}_{\Omega_{K}}$ is positive definite, we get
\begin{align*}
    &\be_K^\top E[\bV_{\bM}(\bSigma)]^{-1}\be_K = \frac{1}{P(M_K = 1)}\be_K^\top \bSigma\be_K\\ 
    &\Leftrightarrow \bSigma_{-K,K}^\top\mathbf{G}_{\Omega_{-K}}(\mathbf{G}_{\Omega_{K}} + \mathbf{G}_{\Omega_{-K}})^{-1}\mathbf{G}_{\Omega_{K}} = \bzero\\
    &\Leftrightarrow    \bSigma_{-K,K}^\top\mathbf{G}_{\Omega_{-K}} \bSigma_{-K,K} = 0\\    
    &\Leftrightarrow  \sum_{\bm \in \Omega_{-K}}p_{\bm}  \bSigma_{-K,K}^\top\widetilde{\mathbf{D}}_{\bm}\{\widetilde{\mathbf{D}}_{\bm}^\top\bSigma_{-K,-K}\widetilde{\mathbf{D}}_{\bm}\}^{-1}\widetilde{\mathbf{D}}_{\bm}^\top\bSigma_{-K,K} = 0\\
    &\Leftrightarrow p_{\bm}  \bSigma_{-K,K}^\top\widetilde{\mathbf{D}}_{\bm} = 0\ \textrm{for each } \bm \in \Omega_{-K}\\
    &\Leftrightarrow P(M_t = 1, M_K = 0) \sigma_{t, K} = 0 \ \textrm{for each } t = 1, \dots, K-1,
\end{align*}
where $\sigma_{j, K} = \be_j^\top \bSigma \be_K$, which completes the proof.

\vspace{10pt}
\noindent (3) Denote $\boldsymbol{c}_{\bM}^\top(\mathbf{A}) = \be_K^\top E[\bV_{\bM}(\mathbf{A})]^{-1} \bV_{\bM}(\mathbf{A})$ and $\boldsymbol{c}_{\bM}^\top(\bSigma) = \be_K^\top E[\bV_{\bM}(\bSigma)]^{-1} \bV_{\bM}(\bSigma)$.
We have the following derivation:
\begin{align*}
& E[\{\boldsymbol{c}_{\bM}^\top(\mathbf{A}) - \boldsymbol{c}_{\bM}^\top(\bSigma)\} \bSigma \{\boldsymbol{c}_{\bM}(\mathbf{A}) - \boldsymbol{c}_{\bM}(\bSigma)\}] \\
&= E[\boldsymbol{c}_{\bM}^\top(\mathbf{A}) \bSigma \boldsymbol{c}_{\bM}(\mathbf{A})] - E[\boldsymbol{c}_{\bM}^\top(\mathbf{A}) \bSigma \boldsymbol{c}_{\bM}(\bSigma)] - E[ \boldsymbol{c}_{\bM}^\top(\bSigma) \bSigma \boldsymbol{c}_{\bM}(\mathbf{A})] + E[ \boldsymbol{c}_{\bM}^\top(\bSigma) \bSigma \boldsymbol{c}_{\bM}(\bSigma)]\\
&= E[\boldsymbol{c}_{\bM}^\top(\mathbf{A}) \bSigma \boldsymbol{c}_{\bM}(\mathbf{A})] -  \be_K^\top E[\bV_{\bM}(\bSigma)]^{-1}\be_K,
\end{align*}
where the last equation comes from the fact that
\begin{align*}
 \bV_{\bM}(\mathbf{A}) \bSigma \bV_{\bM}(\bSigma)
   &=  \mathbf{D}_{\bm}(\mathbf{D}_{\bm}^\top\mathbf{A} \mathbf{D}_{\bm})^{-1}\mathbf{D}_{\bm}^\top\bSigma \mathbf{D}_{\bm}(\mathbf{D}_{\bm}^\top\bSigma \mathbf{D}_{\bm})^{-1}\mathbf{D}_{\bm}^\top = \bV_{\bM}(\mathbf{A}) \\
  \bV_{\bM}(\bSigma) \bSigma \bV_{\bM}(\bSigma)
   &=  \mathbf{D}_{\bm}(\mathbf{D}_{\bm}^\top\bSigma \mathbf{D}_{\bm})^{-1}\mathbf{D}_{\bm}^\top\bSigma \mathbf{D}_{\bm}(\mathbf{D}_{\bm}^\top\bSigma \mathbf{D}_{\bm})^{-1}\mathbf{D}_{\bm}^\top = \bV_{\bM}(\bSigma).
\end{align*}
Since $\bSigma$ is positive definite, then we have
$E[\boldsymbol{c}_{\bM}^\top(\mathbf{A}) \bSigma \boldsymbol{c}_{\bM}(\mathbf{A})] \ge  \be_K^\top E[\bV_{\bM}(\bSigma)]^{-1}\be_K$, which is the desired inequality. The equality holds if and only if $p_{\bm} \{\boldsymbol{c}_{\bm}^\top(\mathbf{A}) - \boldsymbol{c}_{\bM}^\top(\bSigma)\} = \bzero_K^\top$ for all $\bm \in \{0,1\}^K\setminus\{\bzero_K\}$.

\vspace{10pt}
\noindent (4) 
Define $\boldsymbol{x}_{\bM}^\top= \be_K^\top E[\bV_{\bM}(\bSigma)]^{-1}\bV_{\bM}(\bSigma) \mathbf{B}^{\frac{1}{2}}$. Here $\mathbf{B}^{\frac{1}{2}}$ is well-defined since $\mathbf{B}$ is positive semi-definite. Then we have
\begin{align*}
    &\be_K^\top E[\bV_{\bM}(\bSigma)]^{-1}E[\bV_{\bM}(\bSigma) \mathbf{B} \bV_{\bM}(\bSigma)]E[\bV_{\bM}(\bSigma)]^{-1}\be_K - \be_K^\top \mathbf{B}\be_K \\
    &= E[\boldsymbol{x}_{\bM}^\top\boldsymbol{x}_{\bM}] - E[\boldsymbol{x}_{\bM}]^\top E[\boldsymbol{x}_{\bM}] \\
    &= \sum_{t=1}^K Var(\boldsymbol{x}_{\bM}^\top\be_t) \\
    & \ge 0.
\end{align*}
The equality holds if and only if $Var(\boldsymbol{x}_{\bM}^\top\be_t) = 0$ for $t = 1, \dots, K$, which is equivalent to $\boldsymbol{x}_{\bM}^\top$ being a constant vector.

\vspace{10pt}
\noindent (5) Define $\mathbf{B} = \mathbf{C} - \bSigma$. The statement is proved by the following derivation:
\begin{align*}
    &\be_K^\top E[\bV_{\bM}(\mathbf{C})]^{-1}\be_K\\
    &= \be_K^\top E[\bV_{\bM}(\mathbf{C})]^{-1}E[\bV_{\bM}(\mathbf{C})\mathbf{C}\bV_{\bM}(\mathbf{C})]E[\bV_{\bM}(\mathbf{C})]^{-1}\be_K\\
    &= \be_K^\top E[\bV_{\bM}(\mathbf{C})]^{-1}E[\bV_{\bM}(\mathbf{C})(\mathbf{B}+\bSigma)\bV_{\bM}(\mathbf{C})]E[\bV_{\bM}(\mathbf{C})]^{-1}\be_K \\
    &\ge \be_K^\top E[\bV_{\bM}(\mathbf{C})]^{-1}E[\bV_{\bM}(\mathbf{C})\mathbf{B}\bV_{\bM}(\mathbf{C})]E[\bV_{\bM}(\mathbf{C})]^{-1}\be_K + \be_K^\top E[\bV_{\bM}(\mathbf{\bSigma})]^{-1}\be_K \\
    &\ge \be_K^\top\mathbf{B}\be_K +  \be_K^\top E[\bV_{\bM}(\mathbf{\bSigma})]^{-1}\be_K,
\end{align*}
where the first inequality results from Lemma~\ref{lemma-Vm} (3), and the second inequality comes from Lemma~\ref{lemma-Vm} (4).
\end{proof}

\begin{lemma}[Kronecker product]\label{lemma:kronecker-product}
Let $\mathbf{A} \in \mathbb{R}^{n_1 \times n_2}, \mathbf{B} \in \mathbb{R}^{n_3 \times n_4}, \mathbf{C} \in \mathbb{R}^{n_2 \times n_5}, \mathbf{D} \in \mathbb{R}^{n_4 \times n_6}$ be random matrices. Then
\begin{enumerate}[(1)]
    \item $(\mathbf{A} \otimes \mathbf{B}) (\mathbf{C} \otimes \mathbf{D}) = (\mathbf{A}\mathbf{C}) \otimes (\mathbf{B}\mathbf{D})$, 
    \item  If $\mathbf{A}$ is independent of $(\mathbf{B}, \mathbf{C})$, then $E[\mathbf{A}\otimes \mathbf{B}] = E[\mathbf{A}] \otimes E[\mathbf{B}]$ and $E[(\mathbf{A} \mathbf{C})\otimes \mathbf{B}] =  E[(E[\mathbf{A}] \mathbf{C})\otimes\mathbf{B}]$. 
    \item $(\mathbf{A} \otimes \mathbf{B})^{-1} =\mathbf{A}^{-1}  \otimes \mathbf{B}^{-1}$ if $\mathbf{A}$ and $\mathbf{B}$ are invertible.
    \item $(\mathbf{A} \otimes \mathbf{B})^\top =\mathbf{A}^\top  \otimes \mathbf{B}^\top$.
    \item Suppose $n_1 = n_2$, $n_3 = n_4$, $\mathbf{A}$ has eigenvalues $\lambda_1,\dots, \lambda_{n_1}$, and $\mathbf{B}$ has eigenvalues $\mu_1,\dots, \mu_{n_3}$. Then $\mathbf{A} \otimes \mathbf{B}$ has eigenvalues $\lambda_i\mu_j$ for each $i = 1,\dots, n_1$ and $j = 1, \dots, n_3$.
\end{enumerate}
\end{lemma}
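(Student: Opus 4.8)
The plan is to prove the five statements one at a time, exploiting the fact that (1) is the workhorse identity from which (3) and (4) follow almost immediately, (2) is the only probabilistic claim, and (5) is the one part that requires genuine care. Statements (1), (3), and (4) are deterministic and can be verified by block-matrix computation; (2) reduces to an entrywise independence argument plus conditioning; and (5) rests on Schur triangularization together with the earlier parts.

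For (1), I would expand both sides in block form. Writing $\mathbf{A} = (a_{pq})$ and $\mathbf{C} = (c_{qr})$, the matrix $\mathbf{A}\otimes\mathbf{B}$ has $(p,q)$-block $a_{pq}\mathbf{B}$ and $\mathbf{C}\otimes\mathbf{D}$ has $(q,r)$-block $c_{qr}\mathbf{D}$. Multiplying block-by-block, the $(p,r)$-block of the product is $\sum_q a_{pq}c_{qr}\,\mathbf{B}\mathbf{D} = (\mathbf{A}\mathbf{C})_{pr}\,\mathbf{B}\mathbf{D}$, which is precisely the $(p,r)$-block of $(\mathbf{A}\mathbf{C})\otimes(\mathbf{B}\mathbf{D})$. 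Given (1), part (3) is immediate, since $(\mathbf{A}\otimes\mathbf{B})(\mathbf{A}^{-1}\otimes\mathbf{B}^{-1}) = (\mathbf{A}\mathbf{A}^{-1})\otimes(\mathbf{B}\mathbf{B}^{-1}) = \bI\otimes\bI = \bI$. Part (4) is a direct check of the block structure: transposing $\mathbf{A}\otimes\mathbf{B}$ moves the $(p,q)$-block $a_{pq}\mathbf{B}$ to block position $(q,p)$ and transposes it to $a_{pq}\mathbf{B}^\top$, so the resulting block at position $(p',q')$ is $(\mathbf{A}^\top)_{p'q'}\mathbf{B}^\top$, matching $\mathbf{A}^\top\otimes\mathbf{B}^\top$.

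For (2), each entry of $\mathbf{A}\otimes\mathbf{B}$ is a product $a_{pq}b_{rs}$, and independence of $\mathbf{A}$ and $\mathbf{B}$ gives $E[a_{pq}b_{rs}] = E[a_{pq}]E[b_{rs}]$; assembling over all index pairs yields $E[\mathbf{A}\otimes\mathbf{B}] = E[\mathbf{A}]\otimes E[\mathbf{B}]$. For the second identity, I would condition on $(\mathbf{B},\mathbf{C})$: conditionally the map $\mathbf{A}\mapsto(\mathbf{A}\mathbf{C})\otimes\mathbf{B}$ is linear in $\mathbf{A}$, so $E[(\mathbf{A}\mathbf{C})\otimes\mathbf{B}\mid\mathbf{B},\mathbf{C}] = (E[\mathbf{A}\mid\mathbf{B},\mathbf{C}]\,\mathbf{C})\otimes\mathbf{B} = (E[\mathbf{A}]\,\mathbf{C})\otimes\mathbf{B}$, where the last step uses $\mathbf{A}\indep(\mathbf{B},\mathbf{C})$. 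Taking expectations via the tower property finishes the claim.

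For (5), the naive eigenvector computation $(\mathbf{A}\otimes\mathbf{B})(\bu\otimes\bv) = (\mathbf{A}\bu)\otimes(\mathbf{B}\bv) = \lambda\mu\,(\bu\otimes\bv)$ (using (1)) exhibits the products $\lambda_i\mu_j$ as eigenvalues but does not by itself account for multiplicities or handle non-diagonalizable matrices, and this multiplicity bookkeeping is the main obstacle. The clean route is Schur triangularization: write $\mathbf{A} = \mathbf{U}_1\mathbf{T}_1\mathbf{U}_1^*$ and $\mathbf{B} = \mathbf{U}_2\mathbf{T}_2\mathbf{U}_2^*$ with $\mathbf{U}_1,\mathbf{U}_2$ unitary and $\mathbf{T}_1,\mathbf{T}_2$ upper triangular carrying $\lambda_1,\dots,\lambda_{n_1}$ and $\mu_1,\dots,\mu_{n_3}$ on their diagonals. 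Using (1), (3), and (4), $\mathbf{A}\otimes\mathbf{B} = (\mathbf{U}_1\otimes\mathbf{U}_2)(\mathbf{T}_1\otimes\mathbf{T}_2)(\mathbf{U}_1\otimes\mathbf{U}_2)^*$, where $\mathbf{U}_1\otimes\mathbf{U}_2$ is unitary. Since a Kronecker product of upper-triangular matrices is again upper triangular with diagonal equal to the pairwise products of the diagonals, $\mathbf{T}_1\otimes\mathbf{T}_2$ has diagonal $\{\lambda_i\mu_j\}$, and unitary similarity preserves the eigenvalue multiset; hence the eigenvalues of $\mathbf{A}\otimes\mathbf{B}$ are exactly $\lambda_i\mu_j$ for all $i,j$, counted with multiplicity.
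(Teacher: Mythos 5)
The paper states this lemma without proof --- these are standard Kronecker-product identities (the mixed-product property, its corollaries for inverses and transposes, the expectation factorization under independence, and the eigenvalue product rule), and the authors simply invoke them. Your proposal is a correct and self-contained verification of all five parts, so there is nothing in the paper to compare it against; the one substantive choice you make, using Schur triangularization in part (5) to get the eigenvalue multiset with correct multiplicities rather than relying on the eigenvector computation $(\mathbf{A}\otimes\mathbf{B})(\bu\otimes\bv)=\lambda\mu(\bu\otimes\bv)$ (which only suffices for diagonalizable matrices), is exactly the right way to make that part rigorous. One tiny point of hygiene: since a real matrix can have complex eigenvalues, the Schur factorization lives over $\mathbb{C}$ and you need $(\mathbf{U}_1\otimes\mathbf{U}_2)^{*}=\mathbf{U}_1^{*}\otimes\mathbf{U}_2^{*}$, i.e.\ the conjugate-transpose analogue of part (4); it follows by the identical block computation, but it is worth stating since (4) as written concerns only $\top$.
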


\begin{lemma}\label{lemma-bugni}
Given Assumption 1, for each $j = 0,\dots, J$, let $\bZ_i(j) = \bh_j(\bY_i(j), \bM_i(j), \bX_i) \in \mathbb{R}^q$ for some function $\bh_j$ such that $E[||\bZ_i(j)\bZ_i(j)^\top||] < \infty$. 
Then under stratified randomization,
\begin{align*}
    \frac{1}{\sqrt{n}} \sum_{i=1}^n \sum_{j=0}^J \bigg(I\{A_i=j\} \bZ_i(j) - \pi_j E[\bZ(j)]\bigg) \xrightarrow{d} N(\bzero, \mathbf{G}),
\end{align*}
where
\begin{align*}
  \mathbf{G} = \sum_{j=0}^J \pi_j E[Var\{\bZ(j)|S\}] + Var\left(\sum_{j=0}^J \pi_j E[\bZ(j)|S]\right).
\end{align*}
Furthermore, $$ \sum_{j=0}^J \pi_j E[\bZ(j)\bZ(j)^\top] - E\left[\sum_{j=0}^J \pi_j\bZ(j)\right]E\left[\sum_{j=0}^J \pi_j\bZ(j)\right]^\top-\mathbf{G} = E[\mathbf{U} (diag\{\boldsymbol{\pi}\} -\boldsymbol{\pi}\boldsymbol{\pi}^\top ) \mathbf{U}^\top]$$ is positive semi-definite, where
\begin{align*}
    \mathbf{U} &= (E[\bZ(0)|S], \dots, E[\bZ(J)|S]) \\
    \boldsymbol{\pi} &= (\pi_0, \dots, \pi_J)^\top.
\end{align*}
\end{lemma}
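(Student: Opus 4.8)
The plan is to split the statistic into a ``between-stratum'' term that is an ordinary i.i.d.\ average and a ``within-stratum'' term driven by the assignment mechanism, following the decomposition of Bugni, Canay and Shaikh for covariate-adaptive randomization, and then to obtain the final variance identity by a direct law-of-total-variance computation. Write $\mu_j(s) = E[\bZ(j)\mid S = s]$ and $R_i(j) = \bZ_i(j) - \mu_j(S_i)$, so that $E[R_i(j)\mid S_i] = \bzero$, and let $N_s = \sum_{i=1}^n I\{S_i = s\}$ and $N_{s,j} = \sum_{i=1}^n I\{S_i = s, A_i = j\}$. Using $I\{A_i=j\}\bZ_i(j) = I\{A_i=j\}\mu_j(S_i) + I\{A_i=j\}R_i(j)$ together with $\pi_j E[\bZ(j)] = \pi_j E[\mu_j(S)]$, I would decompose
\begin{align*}
\frac{1}{\sqrt n}\sum_{i=1}^n\sum_{j=0}^J\big(I\{A_i=j\}\bZ_i(j) - \pi_j E[\bZ(j)]\big)
&= \frac{1}{\sqrt n}\sum_{s\in\mathcal{S}}\sum_{j=0}^J N_{s,j}\,\mu_j(s) - \sqrt n\sum_{j=0}^J\pi_j E[\mu_j(S)] \\
&\quad + \frac{1}{\sqrt n}\sum_{i=1}^n\sum_{j=0}^J I\{A_i=j\}R_i(j).
\end{align*}
For the first two terms (the between-stratum part), I would invoke the strong-balance condition of Assumption 1, namely that the imbalance $N_{s,j} - \pi_j N_s$ is $o_p(\sqrt n)$ within each stratum, to replace $N_{s,j}$ by $\pi_j N_s$ up to an $o_p(1)$ remainder. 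This reduces the between-stratum part to $\tfrac{1}{\sqrt n}\sum_{i=1}^n (W_i - E[W_i])$ with $W_i = \sum_{j=0}^J\pi_j\mu_j(S_i)$, an i.i.d.\ centered sum (the strata being i.i.d.\ under Assumption 1), which converges to $N(\bzero,\, Var(\sum_{j=0}^J\pi_j E[\bZ(j)\mid S]))$ by the multivariate Lindeberg--L\'evy theorem.

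The main obstacle is the within-stratum term $B_n = \tfrac{1}{\sqrt n}\sum_{i}\sum_j I\{A_i=j\}R_i(j)$, whose summands are not independent because the assignments $A_i$ are dependent across subjects. Here I would condition on the $\sigma$-field $\mathcal{F}_n$ generated by $\{S_i\}_{i\le n}$ and $\{A_i\}_{i\le n}$: under the assignment-independence condition of Assumption 1, the treatment assignment is independent of the potential outcomes given the strata, so each $R_i(j)$ keeps its stratum-conditional law, giving $E[B_n\mid\mathcal{F}_n]=\bzero$ and conditionally independent summands with conditional covariance $\tfrac1n\sum_i\sum_j I\{A_i=j\}Var(\bZ(j)\mid S_i) = \tfrac1n\sum_{s}\sum_j N_{s,j}Var(\bZ(j)\mid S=s)$. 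By strong balance $N_{s,j}/n \to \pi_j P(S=s)$, so this conditional covariance converges to $\sum_{j}\pi_j E[Var(\bZ(j)\mid S)]$, and a conditional Lindeberg CLT then yields asymptotic normality of $B_n$ with this limiting variance. Since the between-stratum term is $\mathcal{F}_n$-measurable while $B_n$ is conditionally mean-zero and asymptotically Gaussian with a nonrandom limiting variance, the two terms are asymptotically independent and their variances add to give $\mathbf{G}$.

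For the ``furthermore'' claim I would argue purely algebraically. Using $E[\bZ(j)\bZ(j)^\top] = E[Var(\bZ(j)\mid S)] + E[\mu_j(S)\mu_j(S)^\top]$ and expanding $\mathbf{G}$, the conditional-variance terms $\sum_j\pi_j E[Var(\bZ(j)\mid S)]$ cancel, leaving
\begin{align*}
&\sum_{j=0}^J \pi_j E[\bZ(j)\bZ(j)^\top] - E\Big[\sum_{j=0}^J \pi_j\bZ(j)\Big]E\Big[\sum_{j=0}^J \pi_j\bZ(j)\Big]^\top - \mathbf{G} \\
&= E\Big[\sum_{j=0}^J \pi_j\mu_j(S)\mu_j(S)^\top - \Big(\sum_{j=0}^J\pi_j\mu_j(S)\Big)\Big(\sum_{j=0}^J\pi_j\mu_j(S)\Big)^\top\Big].
\end{align*}
Recognizing $\mathbf{U} = (\mu_0(S),\dots,\mu_J(S))$, the integrand equals $\mathbf{U}(diag\{\boldsymbol{\pi}\} - \boldsymbol{\pi}\boldsymbol{\pi}^\top)\mathbf{U}^\top$, which gives the stated formula. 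Positive semi-definiteness then follows because $diag\{\boldsymbol{\pi}\} - \boldsymbol{\pi}\boldsymbol{\pi}^\top$ is the covariance matrix of the one-hot indicator of a categorical variable with distribution $\boldsymbol{\pi}$: for any $\boldsymbol{a}$ one has $\boldsymbol{a}^\top(diag\{\boldsymbol{\pi}\}-\boldsymbol{\pi}\boldsymbol{\pi}^\top)\boldsymbol{a} = \sum_{j}\pi_j a_j^2 - (\sum_{j}\pi_j a_j)^2\ge 0$, so $\boldsymbol{x}^\top\mathbf{U}(diag\{\boldsymbol{\pi}\}-\boldsymbol{\pi}\boldsymbol{\pi}^\top)\mathbf{U}^\top\boldsymbol{x}\ge 0$ for every $\boldsymbol{x}$, and this inequality is preserved under expectation.
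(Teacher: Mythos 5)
Your proposal is correct and follows essentially the same route as the paper: both use the Bugni--Canay--Shaikh decomposition into a within-stratum term (conditionally centered given strata and assignments) and between-stratum terms, with the strong-balance condition killing the imbalance contribution, and the ``furthermore'' algebra is identical. The only difference is presentational: the paper packages the joint normality of the three components as a citation to Lemmas C.1--C.2 of Bugni et al.\ (2019) plus the delta method and Slutsky, whereas you sketch the conditional Lindeberg CLT and the asymptotic-independence argument directly.
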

\begin{proof}
Let $\mathcal{S} = \{1,\dots, R\}$ denote the levels in $S$.
Using the fact that $E[\bZ_i(j) |S=S_i] = \sum_{s\in\mathcal{S}} I\{S_i = s\}E[\bZ(j) |S = s]$ and $E[\bZ_i(j)] = \sum_{s \in \mathcal{S}}P(S=s)E[\bZ(j) |S=s]$, we have
\begin{align*}
  &  \frac{1}{\sqrt{n}} \sum_{i=1}^n \sum_{j=0}^J \bigg(I\{A_i=j\} \bZ_i(j)  - \pi_j E[\bZ(j)]\bigg) \\
  &= \frac{1}{\sqrt{n}} \sum_{i=1}^n \sum_{j=0}^J \sum_{s\in\mathcal{S}} I\{A_i=j, S_i =s\} \bigg( \bZ_i(j)  - E[\bZ(j)|S = S_i]\bigg) \\
  &\quad + \sum_{s\in\mathcal{S}} \sqrt{n}  \left(\frac{\sum_{i=1}^n I\{S_i =s\}}{n} - P(S=s)\right) \sum_{j=0}^J \pi_j E[\bZ(j)|S=s] \\
  &\quad + \sum_{s\in\mathcal{S}}\sum_{j=0}^J \frac{1}{\sqrt{n}}\sum_{i=1}^n\left(I\{A_i=j, S_i=s\} - \pi_jI\{S_i=s\}\right) E[\bZ(j)|S=s] \\
  &= (\bone_{(J+1)L}\otimes \bI_q)^\top \mathbb{L}_n^{(1)} + \bu^\top \mathbb{L}_n^{(2)} + \bv_n^\top \mathbb{L}_n^{(3)},
\end{align*}
where
\begin{align*}
    \mathbb{L}_n^{(1)} &= \left(\frac{1}{\sqrt{n}}\sum_{i=1}^n I\{A_i=j, S_i =s\} \{ \bZ_i(j)  - E[\bZ(j)|S = S_i]\}:(j,s)\in\{0,\dots, J\}\times \mathcal{S}\right),\\
    \mathbb{L}_n^{(2)} &= \left(\sqrt{n}  \left\{\frac{\sum_{i=1}^n I\{S_i =s\}}{n} - P(S=s)\right\}: s \in \mathcal{S}\right),\\
    \mathbb{L}_n^{(3)} &=\left(\sqrt{n}  \left\{\frac{\sum_{i=1}^n I\{A_i = j, S_i =s\}}{\sum_{i=1}^n I\{S_i =s\}} - \pi_j\right\}: (j,s)\in\{0,\dots, J\}\times \mathcal{S}\right), \\
    \bu &= \left(\sum_{j=0}^J \pi_j E[\bZ(j)|S=1],\dots, \sum_{j=0}^J \pi_j E[\bZ(j)|S=R]\right)^\top, \\
    \bv_n &= \left(\frac{\sum_{i=1}^n I\{S_i =s\}}{n}E[\bZ(j)|S=s]:(j,s)\in\{0,\dots, J\}\times \mathcal{S}\right)^\top,
\end{align*}
where $(\boldsymbol{x}_{js}:(j,s)\in\{0,\dots, J\}\times \mathcal{S}) = (\boldsymbol{x}_{01}^\top, \dots, \boldsymbol{x}_{0R}^\top,\dots, \boldsymbol{x}_{J1}^\top, \dots, \boldsymbol{x}_{JR}^\top)^\top$ and $(\boldsymbol{x}_{js}:s\in\mathcal{S}) = (\boldsymbol{x}_{j1}^\top, \dots, \boldsymbol{x}_{jR}^\top)^\top$ for any vectors $\boldsymbol{x}_{js} \in \mathbb{R}^q$.

We next show that
\begin{align*}
    \left(\begin{array}{c}
    \mathbb{L}_n^{(1)}   \\
    \mathbb{L}_n^{(2)} \\
    \mathbb{L}_n^{(3)}
    \end{array}\right) \xrightarrow{d} N\left(\left(\begin{array}{c}
        \bzero \\
         \bzero \\
         \bzero
    \end{array}\right), \left(\begin{array}{ccc}
    \bSigma_1 & \bzero & \bzero  \\
    \bzero & diag\{p_\mathcal{S}\} - p_\mathcal{S}p_\mathcal{S}^\top & \bzero\\
    \bzero & \bzero & \bzero_{(J+1)L\times (J+1)L}
    \end{array}\right)\right),
\end{align*}
where
\begin{align*}
    \bSigma_1 &= bdiag\{\pi_jP(S=s)Var\{Z(j)|S=s\}: (j,s)\in\{0,\dots, J\}\times \mathcal{S}\},\\
    p_\mathcal{S} &= (P(S=1), \dots, P(S=R))^\top,
\end{align*}
where $bdiag\{\bV_{js}: (j,s)\in\{0,\dots, J\}\times \mathcal{S}\}$ represents a block diagonal matrix with $\bV_{js}$ being the $\{(s-1)R+j+1\}$-th diagonal block. The proof can be found in Lemma C.1 and C.2 of Appendix C of \cite{bugni2019inference}. The only difference is that $\bZ_i(j)$ is substituted for $Y_i(j)$ and all the arguments still hold.

By the delta method, we have $(\bone_{(J+1)L}\otimes \bI_q)^\top \mathbb{L}_n^{(1)} + \bu^\top \mathbb{L}_n^{(2)} \xrightarrow{d} N(0, \mathbf{G})$ and $\bv_n \xrightarrow{P} \bv$ with $\bv = \left(P(S=s)E[\bZ(j)|S=s]:(j,s)\in\{0,\dots, J\}\times \mathcal{S}\right)$. Using Slutsky's theorem twice, we get the desired asymptotic normal distribution.

Finally, we have the following derivation:
\begin{align*}
   &  \sum_{j=0}^J \pi_j E[\bZ(j)\bZ(j)^\top] - E\left[\sum_{j=0}^J \pi_j\bZ(j)\right]E\left[\sum_{j=0}^J \pi_j\bZ(j)\right]^\top-\mathbf{G} \\
   &= \sum_{j=0}^J \pi_j E[\bZ(j)\bZ(j)^\top] - E\left[\sum_{j=0}^J \pi_j\bZ(j)\right]E\left[\sum_{j=0}^J \pi_j\bZ(j)\right]^\top \\
   & - \sum_{j=0}^J \pi_j E[\bZ(j)\bZ(j)^\top] + \sum_{j=0}^J \pi_j E[E[\bZ(j)|S]E[\bZ(j)|S]^\top] \\
   & - E\left[E\left[\sum_{j=0}^J \pi_j\bZ(j)\bigg|S\right]E\left[\sum_{j=0}^J \pi_j\bZ(j)\bigg|S\right]^\top\right] + E\left[\sum_{j=0}^J \pi_j\bZ(j)\right]E\left[\sum_{j=0}^J \pi_j\bZ(j)\right]^\top \\
   &= \sum_{j=0}^J \pi_j E[E[\bZ(j)|S]E[\bZ(j)|S]^\top] - E\left[E\left[\sum_{j=0}^J \pi_j\bZ(j)\bigg|S\right]E\left[\sum_{j=0}^J \pi_j\bZ(j)\bigg|S\right]^\top\right] \\
   &= E[\mathbf{U} (diag\{\boldsymbol{\pi}\} -\boldsymbol{\pi}\boldsymbol{\pi}^\top ) \mathbf{U}^\top].
\end{align*}
Since $diag\{\boldsymbol{\pi}\} \succeq\boldsymbol{\pi}\boldsymbol{\pi}^\top$, then we get $E[\mathbf{U} (diag\{\boldsymbol{\pi}\} -\boldsymbol{\pi}\boldsymbol{\pi}^\top ) \mathbf{U}^\top]$ is positive semi-definite.
\end{proof}

\begin{lemma}\label{lemma:observed-data-distribution}
Given Assumption 1, under simple or stratified randomization, each data vector $(A_i,\bY_i,\bM_i, \bX_i)$ is identically distributed and, for $i = 1, \dots, n$ , $A_i$ is independent of $\bW_i$ and $P(A_i = j) = \pi_j$. 

Let $P^*$ denote the distribution of $(A_i,\bY_i,\bM_i, \bX_i)$ and $E^*$ be the associated expectation. Define $Z = f(\bY, \bM, \bX)$ and $Z(j) = f(\bY(j), \bM(j), \bX)$ such that $E[Z(j)^2] < \infty$ for $j =0, \dots, J$. Then $E^*[I\{A=j\} Z] = \pi_j E[Z(j)]$ and $E^*[I\{A=j\} Z|S] = \pi_j E[Z(j)|S]$ for $j = 0,\dots, J$.
\end{lemma}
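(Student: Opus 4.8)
The plan is to split the statement into two parts that are proved by quite different means: the distributional facts (identical distribution of $(A_i,\bY_i,\bM_i,\bX_i)$, the independence $A_i \indep \bW_i$, and $P(A_i=j)=\pi_j$), which come from Assumption~1 and the randomization design, and the two expectation identities, which are the computational payoff and follow from consistency together with that independence. I would prove the distributional facts first and then feed them into the identities.

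First I would record the distributional facts. Under simple randomization this is immediate from Assumption~1: the $A_i$ are drawn independently of the i.i.d.\ full-data vectors $\bW_i$ with $P(A_i=j)=\pi_j$. Under stratified randomization the only subtlety is that the assignment depends on the stratum $S_i$, which is itself a deterministic function of $\bX_i$ and hence of $\bW_i$; nevertheless, because the within-stratum assignment probability equals $\pi_j$ for \emph{every} stratum, we have $P(A_i=j \mid \bW_i) = P(A_i=j \mid S_i) = \pi_j$, so conditioning on $\bW_i$ (or on $S_i$) is uninformative for $A_i$. Taking expectations gives $P(A_i=j)=\pi_j$, and, since $P(A_i=j\mid\bW_i)$ does not depend on $\bW_i$, the unconditional independence $A_i \indep \bW_i$. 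Identical distribution then follows because, by consistency, $(A_i,\bY_i,\bM_i,\bX_i) = (A_i,\bY_i(A_i),\bM_i(A_i),\bX_i)$ is a fixed measurable function of $(A_i,\bW_i)$, and the pairs $(A_i,\bW_i)$ are identically distributed (the $\bW_i$ are i.i.d.\ and the conditional law of $A_i$ given $\bW_i$ is the common law placing mass $\pi_j$ on $j$).

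Next I would prove the two identities. The key algebraic step is consistency: on the event $\{A=j\}$ we have $\bY=\bY(j)$ and $\bM=\bM(j)$, so $I\{A=j\}Z = I\{A=j\}f(\bY(j),\bM(j),\bX) = I\{A=j\}Z(j)$. For the unconditional identity, I would then use that $Z(j)$ is a function of $\bW$ and $A \indep \bW$, hence $A \indep Z(j)$, to factor $E^*[I\{A=j\}Z] = E^*[I\{A=j\}Z(j)] = P(A=j)\,E[Z(j)] = \pi_j E[Z(j)]$; the assumption $E[Z(j)^2]<\infty$ guarantees these expectations are well-defined. For the conditional identity I would observe that $S$ and $Z(j)$ are both functions of $\bW$, so $A \indep \bW$ yields $A \indep (Z(j),S)$ and therefore $A \indep Z(j)\mid S$ with $P(A=j\mid S)=\pi_j$; conditioning on $(S,Z(j))$ inside the expectation then gives $E^*[I\{A=j\}Z\mid S] = E^*[I\{A=j\}Z(j)\mid S] = E^*[I\{A=j\}\mid S]\,E[Z(j)\mid S] = \pi_j E[Z(j)\mid S]$.

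The only genuinely delicate point is the unconditional independence $A_i \indep \bW_i$ under stratified randomization, since at first glance stratifying on a function of $\bW_i$ appears to couple $A_i$ to $\bW_i$. The resolution, which I would emphasize, is that the common within-stratum probability $\pi_j$ makes the stratum label carry no information about the assignment, so the conditional assignment law is constant in $\bW_i$ and unconditional independence is recovered; once this is in hand, the remaining work is a routine application of consistency and the factorization of (conditional) expectations under independence.
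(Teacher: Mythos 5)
Your proof is correct and follows the standard route: the paper does not write out an argument for this lemma but simply cites Lemmas 3 and 4 of the supplement of Wang et al.\ (2021), which establish exactly these facts in exactly the way you describe (constant within-stratum assignment probability yielding marginal independence of $A_i$ and $\bW_i$, then consistency plus factorization of unconditional and conditional expectations). The one point to state a bit more carefully is that under stratified randomization $A_i$ may depend on the entire stratum vector $(S_1,\dots,S_n)$ rather than on $S_i$ alone, so the clean intermediate claim is $P(A_i=j\mid \bW_1,\dots,\bW_n)=\pi_j$; your marginal statements then follow by taking conditional expectations, and nothing else in your argument changes.
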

\begin{proof}
See Lemma 4 and Lemma 3 in the Supplementary of \cite{wang2021model}. The only difference of proof is that $A = 1$ is substituted by $A=j$ for $j = 1,\dots, J$, and $(\bY, \bM)$ are substituted for $(Y,M)$.
\end{proof}

\section{Proofs}\label{sec:proofs}
\subsection{Proof of Theorem 1}
\textbf{Outline of the proof:}
Consider the estimator $\widehat{\bDelta}^{\textup{(est)}}$ for each \\ $\textup{est} \in \{\textup{ANCOVA},\ \textup{MMRM-I}, \textup{MMRM-II},\ \textup{IMMRM}\}$.
We first show that $\bDelta^{\textup{(est)}}$ is an M-estimator. We then apply Theorem 1 of \cite{wang2021model} to show that $\bDelta^{\textup{(est)}}$ is model-robust and asymptotically linear with influence function $IF^{\textup{(est)}}$.
The influence function $IF^{\textup{(est)}}$ is the same under simple and stratified randomization.
Next, we prove the asymptotic normality by Lemma~\ref{lemma-bugni}, which is a central limit theorem for sums of random vectors under stratified randomization that generalizes Lemma~B.2 of \cite{bugni2019inference}.
Next, we calculate $IF^{\textup{(est)}}$  and derive the asymptotic covariance matrix i.e., $\bV^{\textup{(est)}}$ and $\widetilde\bV^{\textup{(est)}}$, for which the detailed algebra is given in Lemma~\ref{lemma:if-MMRM} and Lemma~\ref{lemma:if-IMMRM}. 
Finally, we compare the asymptotic covariance matrices, where Lemma~\ref{lemma-Vm} is used to handle missing data.

\begin{proof}[Proof of Theorem 1]
The ANCOVA estimator can be computed by solving \\ $P_n\bpsi^{\textup{(ANCOVA)}}(A, \bX, \bY, \bM; \btheta) = \bzero$, where
\begin{equation}
\bpsi^{\textup{(ANCOVA)}}(A, \bX, \bY, \bM; \btheta) =
    I\{M_K=1\}(Y_{K} - \beta_{0} -  \sum_{j = 1}^ J \beta_{Aj}I\{A = j\} - \bbeta_{\bX}^\top \bX) \left(\begin{array}{c}
        1  \\
        \bA \\
        \bX
    \end{array}\right), \label{psi:ANCOVA}
\end{equation}
where $\bA = (I\{A=1\}, \dots, I\{A=J\})^\top$ is a vector of treatment assignment indicator and $\btheta = (\beta_{0}, \beta_{A1}, \dots, \beta_{AJ}, \bbeta_{\bX}^\top)^\top$. Hence $\widehat{\bDelta}^{\textup{(ANCOVA)}}$ is an M-estimator.

The MMRM-I and MMRM-II working models  can be rewritten in one formula below:
\begin{equation}\label{proof:MMRM}
      \bY = \bbeta_0 + (\bI_K \otimes \bA)^\top\bbeta_{\bA} + \bu(\bX)^\top \bbeta_{\bu(\bX)} + \boldsymbol{\varepsilon},
\end{equation}
where $\bbeta_0 = (\beta_{01}, \dots, \beta_{0K})^\top \in \mathbb{R}^K, \bbeta_A = (\beta_{A11}, \dots, \beta_{AJ1}, \dots, \beta_{A1K}, \dots \beta_{AJK})^\top \in \mathbb{R}^{JK},  \bbeta_{\bu(\bX)} \in \mathbb{R}^q$ are column vectors of parameters,  $\bu(\bX) \in \mathbb{R}^{q_{\bu} \times K}$ is a matrix function of $\bX$,
and the error terms $\boldsymbol{\varepsilon} \sim N(\bzero, \bSigma)$, where  $\bSigma \in \mathbb{R}^{K\times K}$ is a positive-definite covariance matrix.
For MMRM-I, $q_{\bu} = p$ (the dimension of $\bX)$, $\bu(\bX) = \bX \bone_K^\top$ and $\bbeta_{\bu(\bX)} = \bbeta_{\bX}$. 
For MMRM-II, $q = pK$, $\bu(\bX) = \bI_K \otimes \bX $ and $\bbeta_{\bu(\bX)} = (\bbeta_{\bX 1}^\top, \dots, \bbeta_{\bX K}^\top)^\top$.

 Under the working model,  the random error vectors  $\boldsymbol{\varepsilon}_i, i=1,\dots,n,$ are assumed  to be independent (of each other and of $\{(A_i,\bX_i)\}_{i=1}^n$), identically distributed draws from $N(\bzero,\bSigma)$.
Denote  $\bSigma = \bSigma(\balpha)$, where $\balpha = (\alpha_1, \dots, \alpha_L)^\top \in \mathbb{R}^L$ is the vector of unknown parameters in  $\bSigma$.
For example, $\balpha$ consists of the lower triangular and diagonal entries of $\bSigma$ if no structure is assumed on $\bSigma$. 

For each $i$, let $n_i = \sum_{t=1}^K M_{it}$ be the number of non-missing outcomes and $\bY_i^o \in \mathbb{R}^{n_i}$ be the observed outcomes if $n_i > 0$.
Let $t_{i,1} < \dots < t_{i,n_i}$ denote the ordered list of visits when the outcomes are not missing. For example, $t_{i,1}$ is the first non-missing visit for subject $i$. We define $\mathbf{D}_{\bM_i} = [\boldsymbol{e}_{t_{i,1}} \,\,\, \boldsymbol{e}_{t_{i,2}} \,\,\, \dots \,\,\, \boldsymbol{e}_{t_{i,n_i}}] \in \mathbb{R}^{K\times n_i}$. 
We use the subscript $\bM_i$ to note that $\mathbf{D}_{\bM_i}$ is a deterministic function of $\bM_i$.
Then $\bY_i^o = \mathbf{D}_{\bM_i}^\top \bY_i$.
The observed data vector for each $i$ is $(\bY_i^o, \bM_i, A_i, \bX_i)$. 

Denote the full set of parameters as $\btheta = (\bbeta^\top, \balpha^\top)^\top$, where $\bbeta = (\bbeta_0^\top, \bbeta_A^\top, \bbeta_{\bu(\bX)}^\top)^\top$. We further define $\mathbf{Q} = [\bI_K \,\,\,\,  (\bI_K \otimes \bA)^\top \,\,\,\, \bu(\bX)^\top]^\top$.
We let $\mathbf{Q}_i$ denote $\mathbf{Q}$ with  $\bA_i,\bX_i$ substituted for $\bA,\bX$. It follows that  $\mathbf{Q}^\top_i \bbeta = 
\bbeta_0 + (\bI_K \otimes \bA_i)^\top\bbeta_{\bA} + \bu(\bX_i)^\top \bbeta_{\bu(\bX)}$.
Then we have $\bY_i^o|(A_i,\bX_i,\bM_i, \bM_i \ne \bzero_K) \sim N(\mathbf{D}_{\bM_i}^\top\mathbf{Q}_i^\top\bbeta, \mathbf{D}_{\bM_i}^\top\bSigma\mathbf{D}_{\bM_i})$ under the MMRM-I or MMRM-II working model assumptions and missing completely at random (MCAR). The corresponding log likelihood function conditional on $\{A_i,\bX_i,\bM_i\}_{i=1}^n$ is a constant independent of the parameter vector $\btheta$ plus the following:
\begin{align*}
 & -\frac{1}{2} \sum_{i=1}^n I\{\bM_i \ne \bzero_K\}\left\{ \log|\mathbf{D}_{\bM_i}^\top\bSigma\mathbf{D}_{\bM_i}| + (\bY_i^o - \mathbf{D}_{\bM_i}^\top\mathbf{Q}_i^\top\bbeta)^\top(\mathbf{D}_{\bM_i}^\top\bSigma\mathbf{D}_{\bM_i})^{-1}(\bY_i^o - \mathbf{D}_{\bM_i}^\top\mathbf{Q}_i^\top\bbeta)\right\}\\
 &=  - \frac{1}{2}\sum_{i=1}^nI\{\bM_i \ne \bzero_K\}\left\{ \log|\mathbf{D}_{\bM_i}^\top\bSigma\mathbf{D}_{\bM_i}| + (\bY_i - \mathbf{Q}_i^\top\bbeta)^\top\mathbf{D}_{\bM_i}(\mathbf{D}_{\bM_i}^\top\bSigma\mathbf{D}_{\bM_i})^{-1}\mathbf{D}_{\bM_i}^\top(\bY_i - \mathbf{Q}_i^\top\bbeta)\right\}\\
 &= \frac{n}{2}P_n l(\btheta; \bY^o|A,\bX,\bM),
\end{align*}
where we define
\begin{align*}
 & l(\btheta; \bY^o|A,\bX,\bM)\\
 & = -I\{\bM \ne \bzero_K\}\left\{\log|\mathbf{D}_{\bM}^\top\bSigma\mathbf{D}_{\bM}| + (\bY - \mathbf{Q}^\top\bbeta)^\top\mathbf{D}_{\bM}(\mathbf{D}_{\bM}^\top\bSigma\mathbf{D}_{\bM})^{-1}\mathbf{D}_{\bM}^\top(\bY - \mathbf{Q}^\top\bbeta)\right\}.
\end{align*}

To derive the estimating functions for the corresponding maximum likelihood estimator, we use the following results to compute the differential of $l(\btheta; \bY^o|A,\bX,\bM)$ with respect to $\btheta$. By Equation (8.7) of \cite{dwyer1967some}, we have $\frac{\partial \log(|\mathbf{D}_{\bM}^\top\bSigma\mathbf{D}_{\bM}|)}{\partial \bSigma} = \mathbf{D}_{\bM}(\mathbf{D}_{\bM}^\top\bSigma\mathbf{D}_{\bM})^{-1}\mathbf{D}_{\bM}^\top$. Using the chain rule of matrix derivatives \cite[Theorem 8]{macrae1974matrix}, we have
\begin{displaymath}
 \frac{\partial \log(|\mathbf{D}_{\bM}^\top\bSigma\mathbf{D}_{\bM}|)}{\partial \alpha_j} = \mathrm{tr}\left(\frac{\partial \log(|\mathbf{D}_{\bM}^\top\bSigma\mathbf{D}_{\bM}|)}{\partial \bSigma}\frac{\partial \bSigma}{\partial \alpha_j}\right) = \mathrm{tr}\left(\mathbf{D}_{\bM}(\mathbf{D}_{\bM}^\top\bSigma\mathbf{D}_{\bM})^{-1}\mathbf{D}_{\bM}^\top\frac{\partial \bSigma}{\partial \alpha_j}\right).
\end{displaymath}
By Theorem 5 of \cite{macrae1974matrix}, we have
\begin{displaymath}
 \frac{\partial (\mathbf{D}_{\bM}^\top\bSigma\mathbf{D}_{\bM})^{-1}}{\partial \alpha_j} = -(\mathbf{D}_{\bM}^\top\bSigma\mathbf{D}_{\bM})^{-1}\mathbf{D}_{\bM}^\top\frac{\partial \bSigma}{\partial \alpha_j}\mathbf{D}_{\bM} (\mathbf{D}_{\bM}^\top\bSigma\mathbf{D}_{\bM})^{-1}.
\end{displaymath}
Denoting $\bV_{\bM}(\bSigma) = I\{\bM \ne \bzero_K\}\mathbf{D}_{\bM}(\mathbf{D}_{\bM}^\top\bSigma\mathbf{D}_{\bM})^{-1}\mathbf{D}_{\bM}^\top$, we have shown that 
\begin{displaymath}
    \frac{\partial\bV_{\bM}(\bSigma) }{\partial \alpha_j} = -\bV_{\bM}(\bSigma) \frac{\partial \bSigma}{\partial \alpha_j}\bV_{\bM}(\bSigma). \label{V_M_derivative}
\end{displaymath}

Using the above results, the estimating functions for the MLE $\widehat{\btheta}$ for $\btheta$ under MMRM-I  or MMRM-II are
\begin{align*}
    & \bpsi^{\textup{(MMRM)}}(A, \bX, \bY, \bM;\btheta)\\
    &= \left(\begin{array}{c}
    \mathbf{Q} \bV_{\bM}(\bY-\mathbf{Q}^\top\bbeta)    \\
     -\mathrm{tr}(\bV_{\bM}\frac{\partial \bSigma}{\partial \alpha_l}) + (\bY-\mathbf{Q}^\top\bbeta)^\top\bV_{\bM}\frac{\partial \bSigma}{\partial \alpha_l}  \bV_{\bM}(\bY-\mathbf{Q}^\top\bbeta),\ l = 1,\dots, L
 \end{array}\right) \numberthis\label{psi:MMRM},
\end{align*}
which implies that $\widehat{\bDelta}^{\textup{(MMRM-I)}}$ and $\widehat{\bDelta}^{\textup{(MMRM-II)}}$ are M-estimators.
In the above expression of $\bpsi^{\textup{(MMRM)}}$, we omit $(\bSigma)$ from $\bV_{\bM}(\bSigma)$ for conciseness.
We note that $\bV_{\bM}$ is a random matrix taking values in $\mathbf{R}^{K\times K}$ and defined in the same way as in Lemma~\ref{lemma-Vm}.

The IMMRM working model~(3) can be written as
\begin{equation}
  \bY = \bbeta_0 + (\bI_K \otimes \bA)^\top\bbeta_{\bA} + (\bI_K \otimes \bX)^\top \bbeta_{\bI_K \otimes \bX} + (\bI_K \otimes \bX \otimes \bA)^\top\bbeta_{\bA\bX} +  \boldsymbol{\varepsilon}_{\bA},
\end{equation}
where $\bbeta_0$, $\bbeta_A$, $\bA$ are defined in Equation~(\ref{proof:MMRM}), $ \bbeta_{\bI_K \otimes \bX} = (\bbeta_{\bX 1}^\top, \dots, \bbeta_{\bX K}^\top)^\top$, $\bbeta_{\bA\bX}^\top \in \mathbb{R}^{JpK}$ with the $\{Jp(k-1)+J(m-1)+j\}$-th entry being $\beta_{\bA X_m jk}$ for $j = 1,\dots, J, k = 1, \dots, K$ and $m = 1, \dots, p$, and $\boldsymbol{\varepsilon}_{\bA} = \sum_{j=0}^J I\{A=j\}\boldsymbol{\varepsilon}_j$, where $\boldsymbol{\varepsilon}_j\sim N(\bzero, \bSigma_j)$ and $(\boldsymbol{\varepsilon}_0, \dots, \boldsymbol{\varepsilon}_J)$ are independent of each other. 
Let $\balpha_j \in \mathbb{R}^L$ be the unknown parameters in $\bSigma_j$ for $j = 0, \dots, J$. We define $\bgamma = (\bbeta_0^\top, \bbeta_{\bA}^\top, \bbeta_{\bI_K \otimes \bX}^\top, \bbeta_{\bA\bX}^\top)^\top$ and $\btheta = (\bDelta, \bgamma^\top, \balpha_0^\top, \dots,  \balpha_J^\top)^\top$.
Following a similar procedure to MMRM-I, the estimating functions for the MLE $\widehat{\btheta}^{\textup{(IMMRM)}}$  under IMMRM are
\begin{align*}
    & \bpsi^{\textup{(IMMRM)}}(A, \bX, \bY, \bM;\btheta)\\
    &= \left(\begin{array}{c}
   \beta_{AjK} + \bX^\top\bbeta_{A\bX jK}   - \Delta_j,\ j = 1,\dots, J\\
    \mathbf{R} \bV_{A\bM}(\bY-\mathbf{R}^\top\bgamma)    \\
     I\{A=j\}\left(-\mathrm{tr}(\bV_{A\bM}\frac{\partial \bSigma_j}{\partial \alpha_{jl}}) + (\bY-\mathbf{R}^\top\bgamma)^\top\bV_{A\bM}\frac{\partial \bSigma_j}{\partial \alpha_{jl}}  \bV_{A\bM}(\bY-\mathbf{R}^\top\bgamma)\right),\\
    \hspace{8.5cm} \ j = 0,\dots, J, \ l = 1,\dots, L
 \end{array}\right), \numberthis\label{psi:IMMRM}
\end{align*}
where $\mathbf{R} = [\bI_K\ \ (\bI_K \otimes \bA)^\top\ \ (\bI_K \otimes \bX)^\top\ \ (\bI_K \otimes \bX \otimes \bA)^\top]^\top$ and\\ $\bV_{A\bM} = \bV_{\bM}(\sum_{j=0}^JI\{A=j\}\bSigma_j)= I\{\bM \ne \bzero_K\}\mathbf{D}_{\bM}(\sum_{j=0}^JI\{A=j\}\mathbf{D}_{\bM}^\top\bSigma_j\mathbf{D}_{\bM})^{-1}\mathbf{D}_{\bM}^\top$. Hence $\widehat{\bDelta}^{\textup{(IMMRM)}}$, as the first $J$ entries of $\widehat{\btheta}^{\textup{(IMMRM)}}$, is an M-estimator.

For each $\textup{est} \in \{\textup{ANCOVA},\ \textup{MMRM-I}, \textup{MMRM-II},\ \textup{IMMRM}\}$, we have just shown that $\widehat{\bDelta}^{\textup{(est)}}$ is an M-estimator.
By Assumption 1 and regularity conditions,  we apply Theorem 1 of \cite{wang2021model} and get, under simple or stratified randomization,
\begin{equation}
    \sqrt{n}(\widehat{\bDelta}^{\textup{(est)}} - \underline{\bDelta}^{\textup{(est)}}) = \frac{1}{\sqrt{n}} \sum_{i=1}^n IF^{\textup{(est)}}(A_i, \bX_i, \bY_i, \bM_i) + o_p(\bone), \label{eq:MMRM-asymptotic-linearity}\\    
\end{equation}
where $ \underline{\bDelta}^{\textup{(est)}}$ satisfies $E^*[\bpsi^{\textup{(est)}}(A,\bX,\bY, \bM;\btheta)] = \bzero$ with $E^*$ defined in Lemma~\ref{lemma:observed-data-distribution}, and $IF^{\textup{(est)}}$ represents the $J$-dimensional influence function. We note that Theorem 1 of \cite{wang2021model} is developed for binary treatment (i.e. $J=1$) and scalar outcome (i.e. $K=1$), but their proof can be easily generalized to accommodate multiple treatment arms  and repeated measured outcomes (as in Example 3 of \citealp{wang2021model}).


We next show $\widehat{\bDelta}^{\textup{(MMRM-I)}}$ is model-robust.
By $E^*[\bpsi^{\textup{(MMRM)}}(A, \bX, \bY, \bM;\underline\btheta)] = \bzero$, we have
\begin{align*}
 E^*[\ubV_{\bM}\{\bY-\underline\bbeta_0 - (\bI_K \otimes \bA)^\top\underline\bbeta_{\bA}  - \bu(\bX)^\top \underline\bbeta_{\bu(\bX)}\}] &= \bzero,\\
 E^*[(\bI_K \otimes \bA)\ubV_{\bM}\{\bY-\underline\bbeta_0 - (\bI_K \otimes \bA)^\top\underline\bbeta_{\bA} - \bu(\bX)^\top \underline\bbeta_{\bu(\bX)}\}] &= \bzero,
\end{align*}
which are first $2K$ equations in $E^*[\bpsi(A, \bX, \bY, \bM;\underline\btheta)] = \bzero$, where $\ubV_{\bM} = \bV_{\bM}(\underline{\bSigma}^{\textup{(MMRM-I)}})$ with $\underline{\bSigma}^{\textup{(MMRM-I)}}$ being the probability limit of $\bSigma(\widehat{\balpha})$ under MMRM-I. By the MCAR assumption and Lemma~\ref{lemma:observed-data-distribution}, the first $K$ equations imply that
\begin{displaymath}
 E^*[\ubV_{\bM}]\sum_{j=0}^J\{E[\bY(j)]-\underline\bbeta_0 - \underline\bbeta_{Aj}- E[\bu(\bX)^\top \underline\bbeta_{\bu(\bX)}]\} \pi_j= \bzero,
\end{displaymath}
where $\bbeta_{Aj} = (\bbeta_{Aj1}, \dots, \bbeta_{AjK})^\top$ for $j = 1,\dots, J$ and $\bbeta_{Aj} = \bzero_K$. Similarly, the $(K+1)-2K$th equations imply that, for $j = 1,\dots, K$ and $t = 1, \dots, K$,
\begin{displaymath}
 E^*[\be_t^\top\ubV_{\bM}]\{E[\bY(j)]-\underline\bbeta_0 - \underline\bbeta_{Aj}- E[\bu(\bX)^\top \underline\bbeta_{\bu(\bX)}]\} \pi_j= \bzero.
\end{displaymath}
The above two sets of equations and the positivtiy assumption imply that, for $j = 0, \dots, J$,
\begin{displaymath}
  E^*[\ubV_{\bM}]\{E[\bY(j)]-\underline\bbeta_0 - \underline\bbeta_{Aj}- E[\bu(\bX)^\top \underline\bbeta_{\bu(\bX)}]\}= \bzero.
\end{displaymath}
The assumption $P(\bM(j) = \bone_K) > 0$ and Lemma~\ref{lemma-Vm} (1) implies that  $E^*[\ubV_{\bM}]$ is invertible. Then the above equations imply that, for $j = 1, \dots, J$,
\begin{displaymath}
 \underline\bbeta_{Aj}  = \underline\bbeta_{Aj} - \underline\bbeta_{A0} = E[\bY(j)] -E[\bY(0)]
\end{displaymath}
 and hence $\underline{\bDelta}^{\textup{(MMRM-I)}} = (\underline{\beta}_{A1K}, \dots, \underline{\beta}_{AJK})^\top = \bDelta^*$. The above proof also applies to the MMRM-II estimator by substituting $\bV_{\bM}(\underline{\bSigma}^{\textup{(MMRM-II)}})$ for $\underline{\bV}_{\bM}$, which implies that $\widehat{\bDelta}^{\textup{(MMRM-II)}}$ is model-robust. Also, since the ANCOVA estimator is a special case of the MMRM-II estimator setting $K=1$, we get that the ANCOVA estimator is model-robust.

Following a similar procedure, we next show that $\widehat{\bDelta}^{\textup{(IMMRM)}}$ is model-robust. We have that,  for $j = 1, \dots, J$, 
\begin{align*}
    E[\ubV_{j\bM}]\{E[\bY(j)]-\underline\bbeta_0 - \underline\bbeta_{Aj}- E[(\bI_K \otimes \bX)^\top \underline\bbeta_{\bI_K \otimes \bX}] - E[(\bI_K \otimes \bX \otimes \be_j)^\top \underline\bbeta_{\bA\bX}] \}\pi_j= \bzero,
\end{align*}
and 
\begin{align*}
    E[\ubV_{0\bM}]\{E[\bY(0)]-\underline\bbeta_0 -  E[(\bI_K \otimes \bX)^\top \underline\bbeta_{\bI_K \otimes \bX}] \}\pi_0= \bzero,
\end{align*}
where $\ubV_{j\bM} = I\{\bM(j) \ne \bzero_K\}\mathbf{D}_{\bM(j)}(\mathbf{D}_{\bM(j)}^\top\underline{\bSigma}_j^{\textup{(IMMRM)}}\mathbf{D}_{\bM(j)})^{-1}\mathbf{D}_{\bM(j)}^\top$ with $\underline{\bSigma}_j^{\textup{(IMMRM)}}$ being the probability limit of $\bSigma_j(\widehat\balpha_j)$ in the IMMRM model (3) for $j =0,\dots, J$. The assumption $P(\bM(j) = \bone_K) > 0$ and Lemma~\ref{lemma-Vm} (1) implies that  $E[\ubV_{j\bM}]$ is invertible. Thus, for $j = 1,\dots, J$,
\begin{align*}
    E[\bY(j)] - E[\bY(0)] = \underline\bbeta_{Aj} + E[(\bI_K \otimes \bX \otimes \be_j)^\top\underline\bbeta_{\bA\bX}],
\end{align*}
which implies $E[Y_K(j)] - E[Y_K(0)] = \underline\beta_{AjK} + E[\bX]^\top \underline\bbeta_{\bA\bX j K}$. Since the first equation of $\bpsi^{\textup{(IMMRM)}}$ indicates that $\underline\beta_{AjK} + E[\bX]^\top \underline\bbeta_{\bA\bX j K} = \underline{\Delta}_j$, we get $\underline{\Delta}_j = E[Y_K(j)] - E[Y_K(0)] = \Delta_j^*$, which completes the proof of model-robustness of $\widehat{\bDelta}^{\textup{(IMMRM)}}$.

We next prove that  $\sqrt{n}(\widehat{\bDelta}^{\textup{(est)}} - \bDelta^*)$ weakly converges to a normal distribution, under simple or stratified randomization. Given Equations~(\ref{eq:MMRM-asymptotic-linearity}), it suffices to show that $\frac{1}{\sqrt{n}} \sum_{i=1}^n IF^{\textup{(est)}}(A_i, \bX_i, \bY_i, \bM_i)$ weakly converges to a normal distribution. Under simple randomization, $(A_i, \bX_i, \bY_i, \bM_i), i = 1,\dots, n$ are independent to each other and identically distributed. Since the regularity conditions implied that $IF^{\textup{(est)}}$ has finite second moment, then the central limit theorem implies the desired weak convergence. Furthermore, we have $\widetilde{\bV}^{\textup{(est)}} = Var^*(IF^{\textup{(est)}}(A, \bX, \bY, \bM))$.
Under stratified randomization, we define\ $\bZ_i(j) = IF^{\textup{(est)}}(j, \bX_i, \bY_i(j), \bM_i(j))$ for $j = 0,\dots, J$. 
Then $IF^{\textup{(est)}}(A_i, \bX_i, \bY_i, \bM_i) = \sum_{j=0}^J I\{A_i=j\}\bZ_i(j)$.
Since $E^*[IF^{\textup{(est)}}(A, \bX, \bY, \bM)] = \bzero$, Lemma~\ref{lemma:observed-data-distribution} implies that $\sum_{j=0}^J\pi_j E[\bZ_i(j)] = \bzero$. By the regularity conditions, we apply Lemma~\ref{lemma-bugni} and get
\begin{align*}
   \frac{1}{\sqrt{n}} \sum_{i=1}^n IF^{\textup{(est)}}(A_i, \bX_i, \bY_i, \bM_i) &=  \frac{1}{\sqrt{n}} \sum_{i=1}^n \sum_{j=0}^J (I\{A_i=j\}\bZ_i(j) - \pi_j E[\bZ_i(j)]) \xrightarrow{d} N(\bzero, \mathbf{G}),
\end{align*}
which completes the proof of asymptotic normality.
In addition, Lemma~\ref{lemma-bugni} also implies that $\widetilde{\bV}^{\textup{(est)}} \succeq \bV^{\textup{(est)}}$.

For the ANCOVA, MMRM-I, MMRM-II and IMMRM estimators, the influence functions by Lemmas~\ref{lemma:if-MMRM} and \ref{lemma:if-IMMRM} are given below:
\begin{align}
    IF^{\textup{(ANCOVA)}} &= \bL \mathbf{T}^{\textup{(ANCOVA)}} \left\{\bY - \boldsymbol{h}(A,\bX)\right\}, \label{eq:if-ANCOVA}\\
    IF^{\textup{(MMRM-I)}} &=   \bL \mathbf{T}^{\textup{(MMRM-I)}} (\bY - \mathbf{Q}^\top \underline\bbeta),  \label{eq:if-MMRM}\\
    IF^{\textup{(MMRM-II)}} &= \bL \mathbf{T}^{\textup{(MMRM-II)}} \left\{\bY - \boldsymbol{h}(A,\bX)\right\}, \label{eq:if-MMRM-II}\\
    IF^{\textup{(IMMRM)}} &= \bL  \mathbf{T}^{\textup{(IMMRM)}} (\bY - \mathbf{R}^\top \underline{\bgamma}) + \bL\mathbf{r}^\top (\bX - E[\bX]),\label{eq:if-IMMRM}
\end{align}
where
\begin{align*}
\bL &= (-\bone_J \quad \bI_J) \in \mathbb{R}^{J \times (J+1)},\\
 \mathbf{T}^{\textup{(ANCOVA)}} &= \left(\frac{I\{A=0\}}{\pi_0}\frac{I\{M_K=1\}}{P^*(M_K=1)}\be_K,\dots, \frac{I\{A=J\}}{\pi_J}\frac{I\{M_K=1\}}{P^*(M_K=1)}\be_K\right)^\top, \\
    \mathbf{T}^{\textup{(MMRM-I)}} &= \left(\frac{I\{A=0\}}{\pi_0}  \ubV_{\bM}E^*[\ubV_{\bM}]^{-1}\be_K,\dots, \frac{I\{A=J\}}{\pi_J}  \ubV_{\bM}E^*[\ubV_{\bM}]^{-1}\be_K\right)^\top, \\    
    \mathbf{T}^{\textup{(MMRM-II)}} &= \left(\frac{I\{A=0\}}{\pi_0} \overline{\ubV}_{\bM}E^*[\overline{\ubV}_{\bM}]^{-1}\be_K, \dots, \frac{I\{A=J\}}{\pi_J} \overline{\ubV}_{\bM}E^*[\overline{\ubV}_{\bM}]^{-1}\be_K\right)^\top,\\
    \mathbf{T}^{\textup{(IMMRM)}} &= \left(\frac{I\{A=0\}}{\pi_0}  \ubV_{0\bM}E^*[\ubV_{0\bM}]^{-1}\be_K,\dots, \frac{I\{A=J\}}{\pi_J}  \ubV_{J\bM}E^*[\ubV_{J\bM}]^{-1}\be_K\right)^\top,\\
     \bY-\boldsymbol{h}(A,\bX) &= \sum_{j=0}^J  I\{A=j\}\left\{\bY- E[\bY(j)] -Cov^*\left(\bY, \bX\right) Var(\bX)^{-1} (\bX - E[\bX])\right\}, \\
    \bY - \mathbf{Q}^\top \underline\bbeta &= \sum_{j=0}^J I\{A=j\}\left\{\bY(j) - E[\bY(j)] - \bone_K (\bX - E[\bX])^\top  \underline\bbeta_{\bX}\right\}, \\
    \bY - \mathbf{R}^\top \underline{\bgamma} &= \sum_{j=0}^J I\{A=j\}\left\{\bY(j) - E[\bY(j)] - Cov(\bY(j),\bX)Var(\bX)^{-1}(\bX - E[\bX])\right\}, \\
    \mathbf{r} &= (\bb_{K0}, \dots, \bb_{KJ}),
\end{align*}
where 
\begin{align*}
    \underline{\bV}_{\bM} &= \underline{\bV}_{\bM}(\underline{\bSigma}^{\textup{(MMRM-I)}}) = \bV_{\bM}(E^*[(\bY - \mathbf{Q}^\top \underline\bbeta)(\bY - \mathbf{Q}^\top \underline\bbeta)^\top]), \\
    \overline{\ubV}_{\bM} &= \underline{\bV}_{\bM}(\underline{\bSigma}^{\textup{(MMRM-II)}}) = \bV_{\bM}(E^*[\{\bY - h(A,\bX)\}\{\bY - h(A,\bX)\}^\top]), \\
    \ubV_{j\bM} &= \underline{\bV}_{\bM}(\underline{\bSigma}_j^{\textup{(IMMRM)}}) = \bV_{\bM}\left(E^*\left[\frac{I\{A=j\}}{\pi_j} (\bY - \mathbf{R}^\top \underline{\bgamma})  (\bY - \mathbf{R}^\top \underline{\bgamma})^\top \right]\right),\ , j = 0,\dots,J, \\
    \underline\bbeta_{\bX} &= Var(\bX)^{-1}Cov^*(\bY,\bX)\frac{E^*[\ubV_{\bM}]\bone_K }{\bone^\top_K E^*[\ubV_{\bM}] \bone_K}, \\
    \bb_{Kj} &= Var(\bX)^{-1} Cov(\bX, Y_K(j)),\ j = 0, \dots, J.
\end{align*}
Furthermore, we have
\begin{align}
 \widetilde{\bV}^{\textup{(ANCOVA)}} &= \frac{1}{P^*(M_K=1)}\bL\left( diag\{\pi_j^{-1}\be_K^\top \underline\bSigma_j^{\textup{(ANCOVA)}} \be_K: j = 0,\dots, J\} \right) \bL^\top, \label{tilde-V-ANCOVA}\\
\widetilde{\bV}^{\textup{(MMRM-I)}} &= \bL\ diag\bigg\{\be_K^\top E^*[\ubV_{\bM}]^{-1}E^*\left[\pi_j^{-1}\ubV_{\bM}  \underline\bSigma_j^{\textup{(MMRM-I)}} \ubV_{\bM}\right]  E^*[\ubV_{\bM}]^{-1}\be_K: j = 0, \dots, J\bigg\} \ \bL^\top, \label{tilde-V-MMRM}\\
\widetilde{\bV}^{\textup{(MMRM-II)}} &= \bL\ diag\bigg\{\be_K^\top E^*\left[\overline{\ubV}_{\bM}\right]^{-1}E^*\left[\pi_j^{-1}\overline{\ubV}_{\bM} \underline\bSigma_j^{\textup{(MMRM-II)}} \overline{\ubV}_{\bM}\right] E^*\left[\overline{\ubV}_{\bM}\right]^{-1}\be_K: j = 0, \dots, J\bigg\} \ \bL^\top,\label{tilde-V-MMRM-II}\\
\widetilde{\bV}^{\textup{(IMMRM)}} &= \bL\left( diag\{\be_K^\top E^*[\pi_j \ubV_{j\bM}]^{-1} \be_K: j = 0,\dots, J\} + \mathbf{r}^\top Var(\bX) \mathbf{r}\right) \bL^\top,\label{tilde-V-IMMRM}
\end{align}
where
\begin{align*}
    \underline\bSigma_j^{\textup{(ANCOVA)}} &= \underline\bSigma_j^{\textup{(MMRM-II)}}= E^*\left[\frac{I\{A=j\}}{\pi_j} \{\bY -\boldsymbol{h}(A,\bX)\}\{\bY -\boldsymbol{h}(A,\bX)\}^\top \right], \\
    \underline\bSigma_j^{\textup{(MMRM-I)}} &= E^*\left[\frac{I\{A=j\}}{\pi_j} (\bY - \mathbf{Q}^\top \underline\bbeta)  (\bY - \mathbf{Q}^\top \underline\bbeta)^\top \right].
\end{align*}

We next compute $\bV^{\textup{(est)}}$. Lemma~\ref{lemma:if-IMMRM} implies that, for $j = 0, \dots, J$, $E[Y_K(j)|S] = E[Y_K(j)] + \bb_{Kj}^\top (E[\bX|S] - E[\bX])$. 
Then using Equation~(\ref{eq:if-MMRM}), we get
\begin{align*}
    E[IF^{\textup{(MMRM-I)}}(j, \bX_i, \bY_i(j), \bM_i(j)) |S] = \bL \widetilde{\be}_{j+1}\pi_j^{-1}(\bb_{Kj} - \underline{\bbeta}_{\bX})^\top(E[\bX|S] - E[\bX]),
\end{align*}
where $\widetilde{\be}_{j+1} \in \mathbb{R}^{J+1}$ has the $(j+1)$-th entry 1 and the rest 0.
Hence by Lemma~\ref{lemma-bugni}, we have
\begin{align*}
     \bV^{\textup{(MMRM-I)}} &= \widetilde{\bV}^{\textup{(MMRM-I)}} - \bL[ diag\{\pi_j^{-1} (\bb_{Kj} - \underline{\bbeta}_{\bX})^\top Var(E[\bX|S])(\bb_{Kj} - \underline{\bbeta}_{\bX}): j = 0,\dots, J\}\\
    &\quad - \bv^\top Var(E[\bX|S]) \bv] \bL^\top,  \numberthis \label{V-MMRM}
\end{align*}
where $\bv = (\bb_{K0} - \underline{\bbeta}_{\bX}, \dots, \bb_{KJ} - \underline{\bbeta}_{\bX})$.
similarly, we have $\bV^{\textup{(IMMRM)}} = \widetilde{\bV}^{\textup{(IMMRM)}}$ and
\begin{align*}
    \bV^{\textup{(ANCOVA)}} &= \widetilde{\bV}^{\textup{(ANCOVA)}} - \bL[ diag\{\pi_j^{-1} (\bb_{Kj} - \bb_K)^\top Var(E[\bX|S])(\bb_{Kj} - \bb_K): j = 0,\dots, J\}\\
    &\quad - \mathbf{z}^\top Var(E[\bX|S]) \mathbf{z}] \bL^\top, \numberthis\label{V-ANCOVA} \\
     \bV^{\textup{(MMRM-II)}} &= \widetilde{\bV}^{\textup{(MMRM-II)}} - \bL[ diag\{\pi_j^{-1} (\bb_{Kj} - \bb_K)^\top Var(E[\bX|S])(\bb_{Kj} - \bb_K): j = 0,\dots, J\}\\
    &\quad - \mathbf{z} Var(E[\bX|S]) \mathbf{z}^\top] \bL^\top, \numberthis\label{V-MMRM-II}
\end{align*}
where $\bb_K = Var(\bX)^{-1} Cov^*(\bX, Y_K)$ and $\mathbf{z} = (\bb_{K0} - \bb_K, \dots, \bb_{KJ} - \bb_K)$.

We next show $\bV^{\textup{(ANCOVA)}} \succeq \bV^{\textup{(IMMRM)}}$. By the definition of $\underline\bSigma_j^{\textup{(ANCOVA)}}$ and $\underline\bSigma_j^{\textup{(IMMRM)}}$, we have
\begin{align*}
\underline\bSigma_j^{\textup{(ANCOVA)}} - \underline\bSigma_j^{\textup{(IMMRM)}} = \{Cov(\bY(j), \bX) - Cov^*(\bY,\bX)\} Var(\bX)^{-1}\{Cov(\bX, \bY(j)) - Cov^*(\bX, \bY)\}    
\end{align*}
is positive semi-definite, and $$\be_K^\top\underline\bSigma_j^{\textup{(ANCOVA)}}\be_K =   \be_K^\top\underline\bSigma_j^{\textup{(IMMRM)}}\be_K + (\bb_{Kj} - \bb_K)^\top Var(\bX)(\bb_{Kj} - \bb_K).$$
Using Equations~(\ref{tilde-V-ANCOVA}) and (\ref{tilde-V-IMMRM}) and the fact that $Var(\bX)  = E[Var(\bX|S)] + Var(E[\bX|S])$, we have
\begin{align*}
   &\bV^{\textup{(ANCOVA)}} - \widetilde{\bV}^{\textup{(IMMRM)}} \\
   &= \bV^{\textup{(ANCOVA)}} -\widetilde{\bV}^{\textup{(ANCOVA)}} +  \widetilde{\bV}^{\textup{(ANCOVA)}} - \widetilde{\bV}^{\textup{(IMMRM)}} \\   
   &\succeq \bV^{\textup{(ANCOVA)}} -\widetilde{\bV}^{\textup{(ANCOVA)}} +  \widetilde{\bV}^{\textup{(ANCOVA)}}\\
   &\quad -  \bL\left( diag\{P^*(M_K = 1)^{-1}\pi_j^{-1}\be_K^\top \underline\bSigma_j^{\textup{(IMMRM)}}\be_K: j = 0,\dots, J\} + \mathbf{r}^\top Var(\bX) \mathbf{r}\right) \bL^\top \\
    &= \bV^{\textup{(ANCOVA)}} -\widetilde{\bV}^{\textup{(ANCOVA)}}\\
    &\quad + \frac{1}{P^*(M_K=1)}\bL diag\{\pi_j^{-1}(\bb_{Kj} - \bb_K)^\top Var(\bX) (\bb_{Kj} - \bb_K): j = 0,\dots, J\}\bL^\top - \bL\mathbf{r}^\top Var(\bX) \mathbf{r} \bL^\top \\
    &\succeq \bL[ diag\{\pi_j^{-1} (\bb_{Kj} - \bb_K)^\top E[Var(\bX|S)](\bb_{Kj} - \bb_K): j = 0,\dots, J\} - \mathbf{z}^\top E[Var(\bX|S)] \mathbf{z}] \bL^\top \\
    &= \bL \mathbf{U}^\top \{(diag\{\boldsymbol{\pi}\}-\boldsymbol{\pi}\boldsymbol{\pi}^\top) \otimes \bI_p\}\mathbf{U} \bL^\top, \numberthis \label{ANCOVA-IMMRM}
\end{align*}
where $\boldsymbol{\pi} = (\pi_0,\dots, \pi_J)^\top$ and
\begin{displaymath}
  \mathbf{U}^\top = \left(\begin{array}{ccc}
    \pi_0^{-1} (\bb_{K0} - \bb_K)^\top E[Var(\bX|S)]^{\frac{1}{2}}   &  \\
       & \ddots \\
       & & \pi_J^{-1} (\bb_{KJ} - \bb_K)^\top E[Var(\bX|S)]^{\frac{1}{2}}
  \end{array}\right).
\end{displaymath}
In the above derivation, the first ``$\succeq$'' results from Lemma~\ref{lemma-Vm} (2), the second ``$\succeq$'' comes from $P^*(M_K=1) \le 1$ and $\bL\mathbf{z}^\top = \bL \mathbf{r}^\top$.
Since $diag\{\boldsymbol{\pi}\}\succeq\boldsymbol{\pi}\boldsymbol{\pi}^\top$, then $(diag\{\boldsymbol{\pi}\}-\boldsymbol{\pi}\boldsymbol{\pi}^\top) \otimes \bI_p$ is positive semi-definite (by Lemma~\ref{lemma:kronecker-product}) and hence $\bV^{\textup{(ANCOVA)}} \succeq \widetilde{\bV}^{\textup{(IMMRM)}}$.

We next show $\bV^{\textup{(MMRM-I)}} \succeq \bV^{\textup{(IMMRM)}}$. 
Using the definition of $\underline\bSigma_j^{\textup{(MMRM-I)}}$ and $\underline\bSigma_j^{\textup{(IMMRM)}}$, we have
$\underline\bSigma_j^{\textup{(MMRM-I)}} - \underline\bSigma_j^{\textup{(IMMRM)}}  = \bLambda_j$, where
\begin{align*}
   \bLambda_j
  &= \left\{Cov(\bY(j),\bX)-\bone_K\underline{\bbeta}_{\bX}^\top Var(\bX)\right\}Var(\bX)^{-1}\left\{Cov(\bX, \bY(j))-Var(\bX)\underline{\bbeta}_{\bX}\bone_K^\top \right\}
\end{align*}
is positive semi-definite. By Equations~(\ref{tilde-V-MMRM}) and (\ref{tilde-V-IMMRM}), we have
\begin{align*}
   &\widetilde{\bV}^{\textup{(MMRM-I)}} - \widetilde{\bV}^{\textup{(IMMRM)}} \\
   &= \bL\ diag\bigg\{\be_K^\top E^*[\ubV_{\bM}]^{-1}E^*\left[\pi_j^{-1}\ubV_{\bM}  (\underline\bSigma_j^{\textup{(IMMRM)}}+\bLambda_j) \ubV_{\bM}\right]  E^*[\ubV_{\bM}]^{-1}\be_K: j = 0, \dots, J\bigg\} \ \bL^\top\\
    &\qquad -  \bL diag\{\be_K^\top E[\pi_j \ubV_{j\bM}]^{-1} \be_K: j = 0,\dots, J\}\bL^\top - \bL\mathbf{r}^\top Var(\bX) \mathbf{r} \bL^\top \\
    & \succeq \bL diag\bigg\{\be_K^\top E^*[\ubV_{\bM}]^{-1}E^*\left[\pi_j^{-1}\ubV_{\bM} \bLambda_j\ubV_{\bM}\right]  E^*[\ubV_{\bM}]^{-1}\be_K: j = 0, \dots, J\bigg\}\bL^\top - \bL\mathbf{r}^\top Var(\bX)^{-1} \mathbf{r} \bL^\top \\
    & \succeq \bL diag\{\pi_j^{-1}\be_K^\top \bLambda_j \be_K: j = 0,\dots, J\}\bL^\top - \bL\mathbf{r}^\top Var(\bX)^{-1} \mathbf{r} \bL^\top \\
    &= \bL diag\{\pi_j^{-1} (\bb_{Kj}- \underline{\bbeta}_{\bX})^\top Var(\bX) (\bb_{Kj}- \underline{\bbeta}_{\bX}): j = 0,\dots, J\}\bL^\top - \bL\bv^\top Var(\bX)^{-1} \bv \bL^\top,
\end{align*}
where the first ``$\succeq$'' results from Lemma~\ref{lemma-Vm} (3), the second ``$\succeq$'' results from Lemma~\ref{lemma-Vm} (4) and the last equation comes from $\be_K^\top \bLambda_j \be_K = (\bb_{Kj}- \underline{\bbeta}_{\bX})^\top Var(\bX) (\bb_{Kj}- \underline{\bbeta}_{\bX})$ and $\bL\bv^\top = \bL\mathbf{r}^\top$. By Equation~(\ref{V-MMRM}) and $Var(\bX)  = E[Var(\bX|S)] + Var(E[\bX|S])$, we have
\begin{align*}
   & \bV^{\textup{(MMRM-I)}} - \widetilde{\bV}^{\textup{(IMMRM)}} \\
   &= \bV^{\textup{(MMRM-I)}} -\widetilde{\bV}^{\textup{(MMRM-I)}} +  \widetilde{\bV}^{\textup{(MMRM-I)}} - \widetilde{\bV}^{\textup{(IMMRM)}} \\
   &\succeq - \bL[ diag\{\pi_j^{-1} (\bb_{Kj}- \underline{\bbeta}_{\bX})^\top Var(E[\bX|S])(\bb_{Kj}- \underline{\bbeta}_{\bX}): j = 0,\dots, J\} + \bv Var(E[\bX|S]) \bv^\top] \bL^\top \\
    &\quad +\bL diag\{\pi_j^{-1}(\bb_{Kj}- \underline{\bbeta}_{\bX})^\top Var(\bX) (\bb_{Kj}- \underline{\bbeta}_{\bX}): j = 0,\dots, J\}\bL^\top - \bL\bv^\top Var(\bX)\bv \bL^\top \\
    &= \bL[ diag\{\pi_j^{-1} (\bb_{Kj}- \underline{\bbeta}_{\bX})^\top E[Var(\bX|S)](\bb_{Kj}- \underline{\bbeta}_{\bX}): j = 0,\dots, J\} - \bv E[Var(\bX|S)] \bv^\top] \bL^\top \\
    &= \bL \mathbf{Z}^\top \{(diag\{\boldsymbol{\pi}\}-\boldsymbol{\pi}\boldsymbol{\pi}^\top) \otimes \bI_p\}\mathbf{Z} \bL^\top, \numberthis \label{MMRM-IMMRM}
\end{align*}
where
\begin{displaymath}
  \mathbf{Z}^\top = \left(\begin{array}{ccc}
    \pi_0^{-1} (\bb_{K0}- \underline{\bbeta}_{\bX})^\top E[Var(\bX|S)]^{\frac{1}{2}}   &  \\
       & \ddots \\
       & & \pi_J^{-1} (\bb_{KJ}- \underline{\bbeta}_{\bX})^\top E[Var(\bX|S)]^{\frac{1}{2}}
  \end{array}\right).
\end{displaymath}
Since $diag\{\boldsymbol{\pi}\}\succeq\boldsymbol{\pi}\boldsymbol{\pi}^\top$, then we get $\bV^{\textup{(MMRM-I)}} \succeq \widetilde{\bV}^{\textup{(IMMRM)}} = \bV^{\textup{(IMMRM)}}$.

Next, for showing $\bV^{\textup{(MMRM-II)}} \succeq \bV^{\textup{(IMMRM)}}$, we can follow a similar proof as in the previous paragraph, where $\underline{\bSigma}^{\textup{(MMRM-I)}}$ is substituted by $\underline{\bSigma}^{\textup{(MMRM-II)}}$, and get
\begin{align*}
    \bV^{\textup{(MMRM-II)}} - \widetilde{\bV}^{\textup{(IMMRM)}} \succeq \bL \mathbf{U}^\top \{(diag\{\boldsymbol{\pi}\}-\boldsymbol{\pi}\boldsymbol{\pi}^\top) \otimes \bI_p\}\mathbf{U} \bL^\top,
\end{align*}
which is positive semi-definite.

Finally, we give the necessary and sufficient conditions for $\bV^{\textup{(est)}} = \bV^{\textup{(IMMRM)}}$, $\textup{(est)} \in \{\textup{ANCOVA}, \textup{MMRM-I}, \textup{MMRM-II}\}$ in Proposition~\ref{prop:iff} below.
\end{proof}

\begin{proposition}\label{prop:iff}
Assume $K > 1$, Assumption 1 and regularity conditions in the Supplementary Material. 
For $t = 1,\dots, K$ and $j = 0,\dots, J$, we denote $\boldsymbol{b}_{tj} = Var(\bX)^{-1}Cov\{\bX, Y_t(j)\}$.

Then $\bV^{\textup{(ANCOVA)}} = \bV^{\textup{(IMMRM)}}$ if and only if either of the following two sets of conditions (a-b) holds:
\begin{enumerate}[(a)]
    \item for each $j = 0,\dots, J$, $P(M_K(j) = 1) = 1$ and \\ $(1-2I\{J=1\}\pi_0)(\boldsymbol{b}_{Kj} - \boldsymbol{b}_{K0})^\top E[Var(\bX|S)] = \bzero$;
    \item for each $t = 1,\dots, K-1$ and $j = 0,\dots, J$, \\
    $P(M_t(j) = 1, M_K(j) = 0)\ Cov\{Y_t(j) - \boldsymbol{b}_{tj}^\top \bX, Y_K(j) - \boldsymbol{b}_{Kj}^\top \bX\} = 0$ and $\boldsymbol{b}_{Kj} = \boldsymbol{b}_{K0}$.
\end{enumerate}
In addition, $\bV^{\textup{(MMRM-I)}} = \bV^{\textup{(IMMRM)}}$ if and only if
\begin{enumerate}[(a')]
    \item  for $j = 0,\dots, J$ and $\bm \in \{0,1\}^K\setminus\{\bzero_K\}$ with $P^*(\bM = \bm)$, $\be_K^\top \{E[\ubV_{j\bM}]^{-1}\ubV_{j\bm} - E^*[\ubV_{\bM}]^{-1}\ubV_{\bm}\} = \bzero$,
    \item for $j = 0,\dots, J$ and $\bm \in \{0,1\}^K\setminus\{\bzero_K\}$ with $P^*(\bM = \bm)$, $\be_K^\top E[\ubV_{\bM}]^{-1}\ubV_{\bm} \bLambda_j$ is a constant vector, 
    \item for $j = 0,\dots, J$, $\left[\boldsymbol{b}_{Kj}-\underline{\bbeta}_{\bX}^{\textup{(MMRM-I)}} - I\{J=1\}\pi_j (\boldsymbol{b}_{Kj} - \boldsymbol{b}_{K0})\right]^\top  E[Var(\bX|S)] = \bzero$,
\end{enumerate}
where $\underline{\bbeta}_{\bX}^{\textup{(MMRM-I)}} $ is the probability limit of $\bbeta_{\bX}$ in the MMRM-I working model.

Also, $\bV^{\textup{(MMRM-II)}} = \bV^{\textup{(IMMRM)}}$ if and only if
\begin{enumerate}[(a'')]
    \item  for $j = 0,\dots, J$ and $\bm \in \{0,1\}^K\setminus\{\bzero_K\}$ with $P^*(\bM = \bm)$, $\be_K^\top \{E[\ubV_{j\bM}]^{-1}\ubV_{j\bm} - E^*[\overline\ubV_{\bM}]^{-1}\overline\ubV_{\bm}\} = \bzero$,
    \item for $j = 0,\dots, J$ and $\bm \in \{0,1\}^K\setminus\{\bzero_K\}$ with $P^*(\bM = \bm)$,\\ $\be_K^\top E[\overline\ubV_{\bM}]^{-1}\overline\ubV_{\bm} \{Cov(\bY(j), \bX) - Cov^*(\bY,\bX)\} Var(\bX)^{-1}\{Cov(\bX, \bY(j)) - Cov^*(\bX, \bY)\}$ is a constant vector, 
    \item  for $j = 0,\dots, J$, $(1-2I\{J=1\}\pi_0)(\boldsymbol{b}_{Kj} - \boldsymbol{b}_{K0})^\top E[Var(\bX|S)] = \bzero.$
\end{enumerate}

\end{proposition}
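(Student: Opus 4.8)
The plan is to read the equality conditions directly off the two inequality chains already established in the proof of Theorem~1, namely Equation~(\ref{ANCOVA-IMMRM}) (for ANCOVA and MMRM-II) and Equation~(\ref{MMRM-IMMRM}) (for MMRM-I). Each chain degrades $\bV^{\textup{(est)}}$ down to $\bV^{\textup{(IMMRM)}}$ through three ``$\succeq$'' steps, and $\bV^{\textup{(est)}} = \bV^{\textup{(IMMRM)}}$ holds exactly when every one of those steps is an equality. So the whole proposition reduces to characterizing, step by step, when each individual ``$\succeq$'' is tight, and translating those characterizations into distributional statements using the equality clauses of Lemma~\ref{lemma-Vm}.

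For the ANCOVA chain the three steps are (i) the Lemma~\ref{lemma-Vm}(2) bound applied arm-by-arm to $\underline\bSigma_j^{\textup{(IMMRM)}}$, (ii) the factor $P^*(M_K=1)\le 1$, and (iii) the positive semidefiniteness of $\bL\mathbf{U}^\top\{(diag\{\boldsymbol\pi\}-\boldsymbol\pi\boldsymbol\pi^\top)\otimes\bI_p\}\mathbf{U}\bL^\top$. Since $K>1$, the equality clause of Lemma~\ref{lemma-Vm}(2) reads $P(M_t(j)=1,M_K(j)=0)$ times the $(t,K)$ entry of $\underline\bSigma_j^{\textup{(IMMRM)}}$ equals zero; that entry is $Cov\{Y_t(j)-\boldsymbol b_{tj}^\top\bX,\ Y_K(j)-\boldsymbol b_{Kj}^\top\bX\}$, so step (i) is exactly the first half of condition (b). Step (ii) is tight iff $P^*(M_K=1)=1$, which, because $P^*(M_K=1)=\sum_j\pi_j P(M_K(j)=1)$, is equivalent to $P(M_K(j)=1)=1$ for every $j$, or else $\boldsymbol b_{Kj}=\boldsymbol b_K$ for all $j$. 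This dichotomy is precisely what splits the answer into the two alternative sets: in the first branch step (i) is automatic (no incomplete cases on visit $K$) and only step (iii) survives, giving condition (a); in the second branch $\boldsymbol b_{Kj}=\boldsymbol b_{K0}$ makes the $\mathbf{U}$-quadratic form vanish so step (iii) is automatic and only step (i) survives, giving condition (b).

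The technical heart is step (iii). Writing $a_j=E[Var(\bX|S)]^{1/2}(\boldsymbol b_{Kj}-\boldsymbol b_K)$, the form is zero iff $\{(diag\{\boldsymbol\pi\}-\boldsymbol\pi\boldsymbol\pi^\top)\otimes\bI_p\}\mathbf{U}\bL^\top=\bzero$, which I would evaluate block by block using that the $i$-th block of $(diag\{\boldsymbol\pi\}-\boldsymbol\pi\boldsymbol\pi^\top)\otimes\bI_p$ applied to a stacked vector returns $\pi_i(v_i-\bar v)$ with $\bar v=\sum_i\pi_i v_i$. The centering identity $\sum_j\pi_j a_j=\bzero$ (valid because $\boldsymbol b_K=\sum_j\pi_j\boldsymbol b_{Kj}$) is essential: for $J\ge 2$ the off-diagonal blocks force $a_j=a_0$ for all $j$, which with the identity forces every $a_j=\bzero$, i.e.\ $(\boldsymbol b_{Kj}-\boldsymbol b_{K0})^\top E[Var(\bX|S)]=\bzero$; for $J=1$ the matrix $diag\{\boldsymbol\pi\}-\boldsymbol\pi\boldsymbol\pi^\top$ is rank one, leaving a single scalar relation that, after eliminating $a_1$, collapses to $(\pi_0-\pi_1)(a_0-a_1)=\bzero$. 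These two regimes are unified by the factor $(1-2I\{J=1\}\pi_0)$, which explains the exact form in condition (a), and identically in (c'') for MMRM-II (whose final matrix is the same $\mathbf{U}$).

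For MMRM-I and MMRM-II the chain~(\ref{MMRM-IMMRM}) again has three steps, now (i$'$) Lemma~\ref{lemma-Vm}(3), (ii$'$) Lemma~\ref{lemma-Vm}(4), and (iii$'$) the final $\mathbf{Z}$- (resp.\ $\mathbf{U}$-) quadratic form. I would transcribe the equality clauses directly: Lemma~\ref{lemma-Vm}(3) yields (a$'$) and (a$''$), Lemma~\ref{lemma-Vm}(4) yields (b$'$) and (b$''$), and the final step yields (c$'$) and (c$''$). The one place demanding care is (c$'$): the vectors entering the MMRM-I form are $\boldsymbol b_{Kj}-\underline\bbeta_{\bX}^{\textup{(MMRM-I)}}$, which do \emph{not} obey the centering identity $\sum_j\pi_j(\cdot)=\bzero$, so the rank-one $J=1$ kernel condition must be re-expressed in these uncentered vectors, and the extra $I\{J=1\}\pi_j(\boldsymbol b_{Kj}-\boldsymbol b_{K0})$ term in (c$'$) is exactly that correction. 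I expect this bookkeeping---tracking which reference vector ($\boldsymbol b_K$ versus $\underline\bbeta_{\bX}$) enters and correctly handling the degenerate $J=1$ kernel---to be the main obstacle; the remaining pieces are direct applications of the Lemma~\ref{lemma-Vm} equality clauses together with Lemma~\ref{lemma:kronecker-product} for the positive semidefiniteness of $(diag\{\boldsymbol\pi\}-\boldsymbol\pi\boldsymbol\pi^\top)\otimes\bI_p$.
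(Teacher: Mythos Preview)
Your proposal is correct and follows essentially the same route as the paper: both arguments trace back through the two chains~(\ref{ANCOVA-IMMRM}) and~(\ref{MMRM-IMMRM}), isolate the three ``$\succeq$'' steps in each, and characterize equality in each step using the equality clauses of Lemma~\ref{lemma-Vm}(2)--(4); the dichotomy (a)/(b) then emerges exactly as you describe, from the two ways condition~(ii) can be tight. The only technical difference is in the analysis of the terminal quadratic form $\bL\mathbf{U}^\top\{(diag\{\boldsymbol{\pi}\}-\boldsymbol{\pi}\boldsymbol{\pi}^\top)\otimes\bI_p\}\mathbf{U}\bL^\top$: the paper computes each diagonal entry explicitly and rewrites it as $\frac{1}{\pi_0\pi_j}\|\pi_0\boldsymbol{u}_j+\pi_j\boldsymbol{u}_0\|^2+(1-\pi_0-\pi_j)(\pi_j^{-1}\|\boldsymbol{u}_j\|^2+\pi_0^{-1}\|\boldsymbol{u}_0\|^2)$, from which the $J=1$ versus $J\ge 2$ split is read off by inspecting the factor $1-\pi_0-\pi_j$; you instead pass to the kernel condition $\{(diag\{\boldsymbol{\pi}\}-\boldsymbol{\pi}\boldsymbol{\pi}^\top)\otimes\bI_p\}\mathbf{U}\bL^\top=\bzero$ and exploit the centering identity $\sum_j\pi_j\boldsymbol{a}_j=\bzero$. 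Both routes arrive at the same characterization and at the uncentered correction term $I\{J=1\}\pi_j(\boldsymbol{b}_{Kj}-\boldsymbol{b}_{K0})$ in~(c$'$). One small slip: MMRM-II follows the MMRM-style chain (Lemma~\ref{lemma-Vm}(3),(4) then the $\mathbf{U}$-form), not the ANCOVA chain~(\ref{ANCOVA-IMMRM}) as you write in your first paragraph, but your later breakdown of (a$''$)--(c$''$) is consistent with this.
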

\begin{proof}
We first derive the necessary and sufficient conditions for $\bV^{\textup{(ANCOVA)}} = \bV^{\textup{(IMMRM)}}$. Recall the derivation, i.e., Equations~(\ref{ANCOVA-IMMRM}), for showing $\bV^{\textup{(ANCOVA)}} \succeq \bV^{\textup{(IMMRM)}}$ in the proof of Theorem 1. By check the two inequalities and the last row in Equations~(\ref{ANCOVA-IMMRM}). We have $\bV^{\textup{(ANCOVA)}} = \bV^{\textup{(IMMRM)}}$ if and only if the following three conditions hold:
\begin{enumerate}[(i)]
    \item $\bL\ diag\{\pi_j^{-1}\be_K^\top (\frac{1}{P^*(M_K=1)}\underline{\bSigma}_j^{\textup{(IMMRM)}} - E^*[ \ubV_{j\bM}]^{-1}) \be_K: j = 0,\dots, J\}\bL = \bzero$,
    \item $\{1-P^*(M_K=1)\} \bL diag\{\pi_j^{-1}(\bb_{Kj} - \bb_K)^\top Var(\bX) (\bb_{Kj} - \bb_K): j = 0,\dots, J\}\bL^\top = \bzero$,
    \item $\bL \mathbf{U}^\top \{(diag\{\boldsymbol{\pi}\}-\boldsymbol{\pi}\boldsymbol{\pi}^\top) \otimes \bI_p\}\mathbf{U} \bL^\top = \bzero$.
\end{enumerate}
For Condition (i), Lemma~\ref{lemma-Vm} (2) and the assumption that $K>1$ imply that the equation holds if and only if $P^*(M_K=0, M_t = 1) \be_t^\top \underline{\bSigma}_j^{\textup{(IMMRM)}} \be_K = 0$ for $t = 1, \dots, K-1$ and $j = 0,\dots, J$. Equation~(\ref{eq:if-IMMRM}) implies that  $\be_t^\top \underline{\bSigma}_j^{\textup{(IMMRM)}}\be_K = Cov(Y_t(j)-\bb_{tj}^\top \bX, Y_K(j)-\bb_{Kj}^\top \bX)$. The MCAR assumption implies that $P^*(M_K=0, M_t = 1) = P(M_K(j)=0, M_t(j) = 1)$ for $j = 0,\dots, J$. Hence Condition (i) is equivalent to
\begin{enumerate}[(i)]
    \item $ P(M_K(j)=0, M_t(j) = 1)Cov(Y_t(j)-\bb_{tj}^\top \bX, Y_K(j)-\bb_{Kj}^\top \bX)$ for $t = 1,\dots, K-1$ and $j = 0,\dots, J$.
\end{enumerate}
Condition (ii) is equivalent to
\begin{enumerate}[(ii)]
    \item $P^*(M_K=1)$ or $Cov(\bY_K(j) - \bY_K(0), \bX) = \bzero$.
\end{enumerate}
For Condition (iii), since $\bL \mathbf{U}^\top \{(diag\{\boldsymbol{\pi}\}-\boldsymbol{\pi}\boldsymbol{\pi}^\top) \otimes \bI_p\}\mathbf{U} \bL^\top$ is positive semi-definite, then it is $\bzero$ if and only if all of its diagonal entries are 0.
Denoting $\boldsymbol{u}_j = E[Var(\bX|S)]^{\frac{1}{2}}(\bb_{Kj} - \bb_K)$, we get that the $(j,j)$-th entry of matrix $\bL \mathbf{U}_{S=s} (diag\{\boldsymbol{\pi}\}-\boldsymbol{\pi}\boldsymbol{\pi}^\top)\mathbf{U}_{S=s} \bL^\top$ is
\begin{align*}
    &\pi_j^{-1}\boldsymbol{u}_j^\top\boldsymbol{u}_j + \pi_0^{-1} \boldsymbol{u}_0^\top\boldsymbol{u}_0 - (\boldsymbol{u}_j - \boldsymbol{u}_0)^\top(\boldsymbol{u}_j - \boldsymbol{u}_0) \\
    &= \frac{1}{\pi_0\pi_j}\left(\pi_0 \boldsymbol{u}_j + \pi_j \boldsymbol{u}_0\right)^\top\left(\pi_0 \boldsymbol{u}_j + \pi_j \boldsymbol{u}_0\right) + (1-\pi_0-\pi_j)\left(\frac{1}{\pi_j} \boldsymbol{u}_0^\top\boldsymbol{u}_0 + \frac{1}{\pi_0} \boldsymbol{u}_j^\top\boldsymbol{u}_j \right),
\end{align*}
which is equal to 0 if and only if either $\boldsymbol{u}_0 = \boldsymbol{u}_j = \bzero$, or $\pi_0 + \pi_j = 1$ and $\pi_0 \boldsymbol{u}_j + \pi_j \boldsymbol{u}_0= \bzero$. The former case is equivalent to $E[Var(\bX|S)]^{\frac{1}{2}}(\bb_{Kj} - \bb_{K0}) = \bzero$ for $j = 1,\dots, K$; and the later case is equivalent to $J = 1$ and $(\pi_1 - \pi_0)E[Var(\bX|S)]^{\frac{1}{2}}(\bb_{K1} - \bb_K)$. Hence Condition (iii) is equivalent to
\begin{enumerate}[(iii)]
    \item $(I\{J=1\}(\pi_j-\pi_0) + I\{J>1\})E[Var(\bX|S)](\bb_{Kj} - \bb_K) = \bzero$ for $j = 1, \dots, J$.
\end{enumerate}
Combining Conditions (i-iii) together, we observe that, $P^*(M_K=1) = 1$ in Condition (ii) implies Condition (i), and $Cov(\bY_K(j) - \bY_K(0), \bX) = \bzero$ in Condition (ii) implies Condition (iii). As a result, the three conditions can be summarized into two conditions, which are
\begin{enumerate}[(a)]
    \item for each $j = 1,\dots, J$, $P(M_K(j) = 1) = 1$ and \\ $(1-2I\{J=1\}\pi_0)(\bb_{Kj} - \bb_{K0})^\top E[Var(\bX|S)] = \bzero$;
    \item for each $t = 1,\dots, K-1$ and $j = 0,\dots, J$, \\
    $P(M_t(j) = 1, M_K(j) = 0)\ Cov\{Y_t(j) - \bb_{tj}^\top \bX, Y_K(j) - \bb_{Kj}^\top \bX\} = 0$ and $\bb_{Kj} = \bb_{K0}$.
\end{enumerate}

For the MMRM-I estimator, the derivations, i.e., Equations~(\ref{MMRM-IMMRM}), imply that $\bV^{\textup{(MMRM-I)}} = \bV^{\textup{(IMMRM)}}$ if and only if the following conditions hold:
\begin{enumerate}[(i')]
    \item  for $j = 0,\dots, J$ and $\bm \in \{0,1\}^K\setminus\{\bzero_K\}$ with $P^*(\bM = \bm)$, $\be_K^\top \{E[\ubV_{j\bM}]^{-1}\ubV_{j\bm} - E^*[\ubV_{\bM}]^{-1}\ubV_{\bm}\} = \bzero$,
    \item for $j = 0,\dots, J$ and $\bm \in \{0,1\}^K\setminus\{\bzero_K\}$ with $P^*(\bM = \bm)$, $\be_K^\top E[\ubV_{\bM}]^{-1}\ubV_{\bm} \bLambda_j$ is a constant vector, 
    \item $\bL \mathbf{Z}^\top \{(diag\{\boldsymbol{\pi}\}-\boldsymbol{\pi}\boldsymbol{\pi}^\top) \otimes \bI_p\}\mathbf{Z} \bL^\top = \bzero$.
\end{enumerate}
Similar to the analysis for Condition (iii), Condition (iii') is equivalent to 
\begin{enumerate}[(iii')]
    \item for $j = 1,\dots, J$,
$[\bb_{Kj}-\underline{\bbeta}_{\bX}^{\textup{(MMRM-I)}} - I\{J=1\}\pi_j (\bb_{Kj} - \bb_{K0})]^\top  E[Var(\bX|S)] = \bzero$,
\end{enumerate}
which is the necessary condition given in Corollary 1.

For the MMRM-II estimator, similarly, we have $\bV^{\textup{(MMRM-I)}} = \bV^{\textup{(IMMRM)}}$ if and only if the following conditions hold:
\begin{enumerate}[(i'')]
    \item  for $j = 0,\dots, J$ and $\bm \in \{0,1\}^K\setminus\{\bzero_K\}$ with $P^*(\bM = \bm)$, $\be_K^\top \{E[\ubV_{j\bM}]^{-1}\ubV_{j\bm} - E^*[\overline\ubV_{\bM}]^{-1}\overline\ubV_{\bm}\} = \bzero$,
    \item for $j = 0,\dots, J$ and $\bm \in \{0,1\}^K\setminus\{\bzero_K\}$ with $P^*(\bM = \bm)$,\\ $\be_K^\top E[\overline\ubV_{\bM}]^{-1}\overline\ubV_{\bm} \{Cov(\bY(j), \bX) - Cov^*(\bY,\bX)\} Var(\bX)^{-1}\{Cov(\bX, \bY(j)) - Cov^*(\bX, \bY)\}$ is a constant vector, 
    \item  $\bL \mathbf{U}^\top \{(diag\{\boldsymbol{\pi}\}-\boldsymbol{\pi}\boldsymbol{\pi}^\top) \otimes \bI_p\}\mathbf{U} \bL^\top = \bzero$.
\end{enumerate}
The Condition (iii'') is the same as Condition (iii), which is the necessary condition shown in Corollary 1.
\end{proof}

\begin{lemma}\label{lemma:if-MMRM}
Assume the same assumption as in Theorem 1. Then the influence functions of $\widehat{\bDelta}^{\textup{(ANCOVA)}}$, $\widehat{\bDelta}^{\textup{(MMRM-I)}}$ and $\widehat{\bDelta}^{\textup{(MMRM-II)}}$ are given by Equations~(\ref{eq:if-ANCOVA}), (\ref{eq:if-MMRM}) and (\ref{eq:if-MMRM-II}), respectively.
Under simple randomization, the asymptotic covariance matrix of $\widehat{\bDelta}^{\textup{(ANCOVA)}}$, $\widehat{\bDelta}^{\textup{(MMRM-I)}}$ and $\widehat{\bDelta}^{\textup{(MMRM-II)}}$ are given by Equations~(\ref{tilde-V-ANCOVA}), (\ref{tilde-V-MMRM}) and (\ref{tilde-V-MMRM-II}), respectively.
\end{lemma}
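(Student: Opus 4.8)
The plan is to treat each estimator as an M-estimator and read its influence function off the standard sandwich expansion $IF = -\,\mathbf{C}\,M^{-1}\bpsi$, where $M = E^*[\partial_{\btheta}\bpsi\,|_{\underline\btheta}]$ and $\mathbf{C}$ extracts the $\bDelta$-coordinates (the visit-$K$ treatment coefficients $\beta_{AjK}$). The existence of this expansion is already supplied by Equation~(\ref{eq:MMRM-asymptotic-linearity}); the work of the lemma is to compute $M$, invert the relevant block, and simplify. I would write $\btheta = (\bbeta,\balpha)$ and partition $M$ into mean ($\bbeta$) and variance ($\balpha$) blocks. The crux of the whole argument is to show that, although $\balpha$ is estimated, its estimation does not perturb the treatment-effect coordinates, i.e. the variance score $\bpsi_{\balpha}$ drops out of $IF_{\bDelta}$, so that the influence function is linear in the residual $\bY - \mathbf{Q}^\top\underline\bbeta$ as claimed.

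The key step is a mean-variance orthogonality restricted to $\bDelta$. Differentiating $\bpsi_{\bbeta}=\mathbf{Q}\bV_{\bM}(\bY-\mathbf{Q}^\top\bbeta)$ in $\alpha_l$ and using $\partial_{\alpha_l}\bV_{\bM} = -\bV_{\bM}(\partial_{\alpha_l}\bSigma)\bV_{\bM}$ gives $M_{\bbeta\balpha,l} = -E^*[\mathbf{Q}\,\ubV_{\bM}(\partial_{\alpha_l}\bSigma)\ubV_{\bM}(\bY-\mathbf{Q}^\top\underline\bbeta)]$. I would first invoke MCAR (so that $\bM(j)\perp(\bX,\bY(j))$, via Lemma~\ref{lemma:observed-data-distribution}) to factor the $\bM$-dependent sandwich $\ubV_{\bM}(\partial_{\alpha_l}\bSigma)\ubV_{\bM}$ out of the residual, reducing the intercept and treatment block rows to a multiple of the marginal residual $E_{\bX}[\,E[\bY(j)\mid\bX]-\mathbf{Q}^\top\underline\bbeta\,]$ at $A=j$. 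This quantity is exactly the one shown to vanish in the model-robustness part of the proof of Theorem~1 (the displays giving $E^*[\ubV_{\bM}]\{E[\bY(j)]-\underline\bbeta_0-\underline\bbeta_{Aj}-E[\bu(\bX)^\top\underline\bbeta_{\bu(\bX)}]\}=\bzero$, together with invertibility from Lemma~\ref{lemma-Vm}(1)), so the $\bbeta_0$- and $\bbeta_{\bA}$-rows of $M_{\bbeta\balpha}$ are zero; only the covariate rows $M_{\bbeta_{\bu(\bX)}\balpha}$ can survive. To kill their effect on $\bDelta$ I would show that, under randomization ($A\perp(\bX,\bM)$), the blocks $\bbeta_{\bA}$ and $\bbeta_{\bu(\bX)}$ are orthogonal in the mean-information metric after adjusting for $\bbeta_0$: the factorization $E^*[(\bI_K\otimes\bA)\ubV_{\bM}\,\cdot\,] = (\bI_K\otimes E[\bA])\,E^*[\ubV_{\bM}\,\cdot\,]$ (valid since $\bA\perp(\bM,\bX)$) makes the Schur term $M_{\bbeta_{\bA}\bbeta_{\bu(\bX)}} - M_{\bbeta_{\bA}\bbeta_0}M_{\bbeta_0\bbeta_0}^{-1}M_{\bbeta_0\bbeta_{\bu(\bX)}}$ collapse to $\bzero$. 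Hence the $(\bbeta_{\bA},\bbeta_{\bu(\bX)})$-block of $(M_{\bbeta\bbeta})^{-1}$ vanishes, and since $M_{\bbeta\balpha}$ loads only on $\bbeta_{\bu(\bX)}$, the treatment coordinates of $(M_{\bbeta\bbeta})^{-1}M_{\bbeta\balpha}$, hence the dependence of $IF_{\bDelta}$ on $\bpsi_{\balpha}$, are zero. I expect this partial-orthogonality cancellation to be the main obstacle, since it is what forces the clean linear form and it requires combining randomization with the MCAR/model-robustness identities rather than any one of them alone.

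Given that reduction, I would compute the mean block alone: $IF_{\bbeta} = (E^*[\mathbf{Q}\ubV_{\bM}\mathbf{Q}^\top])^{-1}\mathbf{Q}\ubV_{\bM}(\bY-\mathbf{Q}^\top\underline\bbeta)$. Using $A\perp(\bX,\bM)$ and the identity $E^*[I\{A=j\}Z]=\pi_j E[Z(j)]$ of Lemma~\ref{lemma:observed-data-distribution}, the Gram matrix $E^*[\mathbf{Q}\ubV_{\bM}\mathbf{Q}^\top]$ becomes block-structured, and solving the $\underline\bbeta$-defining equations identifies the covariate limit $\underline\bbeta_{\bX}$ (for MMRM-I this yields the stated $Var(\bX)^{-1}Cov^*(\bY,\bX)E^*[\ubV_{\bM}]\bone_K/(\bone_K^\top E^*[\ubV_{\bM}]\bone_K)$) and collapses the visit-$K$ contrast to $\bL\mathbf{T}^{\textup{(MMRM-I)}}(\bY-\mathbf{Q}^\top\underline\bbeta)$ with per-arm weight $\frac{I\{A=j\}}{\pi_j}\ubV_{\bM}E^*[\ubV_{\bM}]^{-1}\be_K$; substituting $E_{\bX}[\,E[\bY(j)\mid\bX]-\mathbf{Q}^\top\underline\bbeta\,]=\bzero$ rewrites the residual in the centered form recorded after~(\ref{eq:if-IMMRM}). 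The MMRM-II influence function follows by the same computation with $\bu(\bX)=\bI_K\otimes\bX$ (so the adjustment becomes the full $Cov^*(\bY,\bX)Var(\bX)^{-1}$ inside $\boldsymbol{h}$), and ANCOVA is the $K=1$ specialization, in which $\ubV_{\bM}E^*[\ubV_{\bM}]^{-1}\be_K$ reduces to $I\{M_K=1\}/P^*(M_K=1)$ and there is no $\balpha$ at all.

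Finally, for the covariance matrices under simple randomization I would compute $\widetilde\bV^{\textup{(est)}} = Var^*(IF^{\textup{(est)}})$ directly. Since each row of $\mathbf{T}$ carries the factor $I\{A=j\}$ and $I\{A=j\}I\{A=j'\}=0$ for $j\ne j'$, the cross-arm covariances vanish and $Var^*(\mathbf{T}(\bY-\mathbf{Q}^\top\underline\bbeta))$ is diagonal in the arm index. Each diagonal entry is evaluated with $E^*[I\{A=j\}\,\cdot\,]=\pi_j E[\,\cdot(j)\,]$ and the MCAR independence of $\ubV_{\bM}$ from the residual, which turns $E^*[I\{A=j\}\,\ubV_{\bM}(\bY-\mathbf{Q}^\top\underline\bbeta)(\bY-\mathbf{Q}^\top\underline\bbeta)^\top\ubV_{\bM}]$ into $\pi_j E^*[\ubV_{\bM}\underline\bSigma_j^{\textup{(MMRM-I)}}\ubV_{\bM}]$ with $\underline\bSigma_j^{\textup{(MMRM-I)}}=E^*[\pi_j^{-1}I\{A=j\}(\bY-\mathbf{Q}^\top\underline\bbeta)(\bY-\mathbf{Q}^\top\underline\bbeta)^\top]$. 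Conjugating by $\bL$ gives (\ref{tilde-V-MMRM}), and the analogous substitutions give (\ref{tilde-V-ANCOVA}) and (\ref{tilde-V-MMRM-II}).
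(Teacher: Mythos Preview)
Your proposal is correct and follows essentially the same route as the paper: the paper also shows the $\bbeta_0$- and $\bbeta_{\bA}$-rows of the mean--variance cross block vanish via MCAR factorization plus the model-robustness residual identity, and then block-diagonalizes the mean information by the explicit centering congruence $\mathbf{F}=\mathbf{D}\mathbf{B}\mathbf{D}^\top$, which is exactly your Schur-complement orthogonality of $\bbeta_{\bA}$ versus $\bbeta_{\bu(\bX)}$ after adjusting for $\bbeta_0$. The one item you skate over is the identification $\underline\bSigma^{\textup{(MMRM-I)}}=E^*[(\bY-\mathbf{Q}^\top\underline\bbeta)(\bY-\mathbf{Q}^\top\underline\bbeta)^\top]$ (needed because the stated $\ubV_{\bM}$ is defined through it); the paper obtains this from the unstructured $\balpha$-score equations, and you should record that step as well.
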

\begin{proof}
We first derive the influence function for the MMRM-I estimator.
Theorem 1 of \cite{wang2021model} implies that  $IF^{\textup{(MMRM-I)}}(A, \bX, \bY, \bM;\btheta) = \mathbf{B}^{-1}\bpsi^{\textup{(MMRM)}}(A, \bX, \bY, \bM;\btheta)$, where $\mathbf{B} = E^*\left[\frac{\partial}{\partial\btheta}\bpsi^{\textup{(MMRM)}}(A, \bX, \bY, \bM; \btheta)\Big|_{\btheta = \underline{\btheta}}\right]$.
Using the formula (\ref{proof:MMRM}) of $\boldsymbol{\psi}^{\textup{(MMRM)}}$,
we can show that
\begin{align*}
     \mathbf{B} 
     &= \left[\begin{array}{cccc}
        -E^*[\ubV_{\bM}]  & - E^*[ \ubV_{\bM}(\bI_K  \otimes \bA)^\top] & -E^*[\ubV_{\bM}\bu(\bX)^\top ] & \mathbf{0}\\
        - E^*[(\bI_K  \otimes \bA)\ubV_{\bM}]  & - E^*[(\bI_K  \otimes \bA)\ubV_{\bM}(\bI_K  \otimes \bA)^\top] & - E^*[(\bI_K  \otimes \bA)\ubV_{\bM}\bu(\bX)^\top ] & \mathbf{0}\\
        -E^*[\bu(\bX)\ubV_{\bM}] & - E^*[\bu(\bX)\ubV_{\bM}(\bI_K  \otimes \bA)^\top] & -E^*[\bu(\bX)\ubV_{\bM}\bu(\bX)^\top] & \mathbf{B}_{34} \\
        \mathbf{0} & \mathbf{0} & \mathbf{B}_{34}^\top & \mathbf{B}_{44}
     \end{array}\right],
\end{align*}
where $\mathbf{B}_{34} \in \mathbb{R}^{q\times r}$ and $\mathbf{B}_{44} \in \mathbb{R}^{r\times r}$ are matrices not related to the influence function of $\widehat{\bDelta}^{\textup{(MMRM-I)}}$. The zeros in the above matrix result from the following derivation:
\begin{align*}
 E^*\left[\ubV_{\bM}\frac{\partial \bSigma}{\partial \alpha_j}  \ubV_{\bM}(\bY-\mathbf{Q}^\top\underline\bbeta)\right] &=  E^*\left[\ubV_{\bM}\frac{\partial \bSigma}{\partial \alpha_j}  \ubV_{\bM}\right]E^*[\bY-\underline\bbeta_0 -(\bI_K \otimes \bA)^\top\underline\bbeta_{\bA} - \bu(\bX)^\top \underline\bbeta_{\bu(\bX)}] \\
 & = \bzero,
\end{align*}
and, similarly, $E^*\left[(\bI_K\otimes \bA)\ubV_{\bM}\frac{\partial \bSigma}{\partial \alpha_j}  \ubV_{\bM}(\bY-\mathbf{Q}^\top\underline\bbeta)\right] = \bzero$.
By the regularity conditions, $\mathbf{B}$ is invertible.
To compute $\mathbf{B}^{-1}$, we define
\begin{displaymath}
 \mathbf{D} = \left(\begin{array}{cccc}
    \bI_K & \mathbf{0}  & \mathbf{0} & \mathbf{0} \\
    - E^*[\bI_K  \otimes \bA]  & \bI_K & \mathbf{0} & \mathbf{0} \\
    - E^*[\bu(\bX)] & \mathbf{0} & \bI_q  & \mathbf{0} \\
       \mathbf{0} & \mathbf{0}& \mathbf{0} & \bI_r 
 \end{array}\right),
\end{displaymath}
and $\mathbf{F} = \mathbf{D} \mathbf{B} \mathbf{D}^\top$.
Since $\mathbf{D}$ is a lower triangular matrix and hence invertible, then $\mathbf{F}$ is invertible.
Since MCAR implies that $E^*[\bu(\bX)\bV_{\bM}] = E^*[\bu(\bX)]E^*[\ubV_{\bM}]$ and $E^*[(\bI_K  \otimes \bA)\ubV_{\bM}] = E^*[\bI_K  \otimes \bA]E^*[\ubV_{\bM}]$, then
\begin{displaymath}
 \mathbf{F} = \left(\begin{array}{cccc}
        -E^*[\ubV_{\bM}]  & \mathbf{0} & \mathbf{0} & \mathbf{0}\\
        \mathbf{0}  & - E^*[\ubV_{\bM}] \otimes Var^*(\bA)  & \mathbf{0} & \mathbf{0}\\
        \mathbf{0} & \mathbf{0} & -Var\{\bu(\bX)E^*[\ubV_{\bM}]^{\frac{1}{2}} \} & \mathbf{B}_{34} \\
        \mathbf{0} & \mathbf{0} & \mathbf{B}_{34}^\top & \mathbf{B}_{44}
     \end{array}\right),
\end{displaymath}
where $Var^*(\bA) = E^*[\bA\bA^\top] - E^*[\bA]E^*[\bA]^\top$, $\mathbf{F}_{33} \in \mathbb{R}^{q\times q}$ is a matrix not related to the influence function of $\widehat{\bbeta}_A$. 
Then
\begin{align*}
 \mathbf{B}^{-1} &= \mathbf{D}^\top \mathbf{F}^{-1} \mathbf{D} \\
 &= \mathbf{D}^\top\left[\begin{array}{cccc}
        -E^*[\ubV_{\bM}]^{-1}  & \mathbf{0} & \mathbf{0} & \mathbf{0}\\
        \mathbf{0}  & - E^*[\ubV_{\bM}]^{-1} \otimes Var^*(\bA)^{-1} & \mathbf{0} & \mathbf{0}\\
        \mathbf{0} & \mathbf{0} & \tilde{\mathbf{B}}_{33} & \tilde{\mathbf{B}}_{34} \\
        \mathbf{0} & \mathbf{0} & \tilde{\mathbf{B}}_{34}^\top & \tilde{\mathbf{B}}_{44}
     \end{array}\right]\mathbf{D}\\
    &= \left[\begin{array}{cccc}
        \tilde{\mathbf{B}}_{11}  & \mathbf{C}^\top & -E^*[\bu(\bX)^\top]\tilde{\mathbf{B}}_{33} & -E^*[\bu(\bX)^\top]\tilde{\mathbf{B}}_{34}\\
        \mathbf{C}  & - E^*[\ubV_{\bM}]^{-1} \otimes Var^*(\bA)^{-1} &  \mathbf{0}& \mathbf{0}\\
        -\tilde{\mathbf{B}}_{33}E^*[\bu(\bX)] & \mathbf{0} & \tilde{\mathbf{B}}_{33} & \tilde{\mathbf{B}}_{34} \\
       -\tilde{\mathbf{B}}_{34}^\top E^*[\bu(\bX)] & \mathbf{0} & \tilde{\mathbf{B}}_{34}^\top & \tilde{\mathbf{B}}_{44}
     \end{array}\right],
\end{align*}
where $\mathbf{C} = \{E^*[\ubV_{\bM}]^{-1} \otimes Var^*(\bA)^{-1}\} E^*[\bI_K  \otimes \bA]$ and $\tilde{\mathbf{B}}_{11} \in \mathbb{R}^{K\times K}$, $\tilde{\mathbf{B}}_{33} \in \mathbb{R}^{q\times q}$, $\tilde{\mathbf{B}}_{34} \in \mathbb{R}^{q\times r}$ and $\tilde{\mathbf{B}}_{44} \in \mathbb{R}^{r\times r}$ are matrices that are not related to the influence function of $\widehat{\bbeta}_{\bA}$ (as shown below).
Since $\bbeta_{\bA}$ are the $(K+1)$-th, $\dots, ((J+1)K)$-th entries in $\btheta$, we need the $(K+1)$-th, $\dots, ((J+1)K)$-th rows of $\mathbf{B}^{-1}$ to derive the influence function for $\widehat\bbeta_{\bA}$, which are
\begin{displaymath}
 \left[\mathbf{C} \,\,\, -\{E^*[\ubV_{\bM}]^{-1} \otimes Var^*(\bA)^{-1}\} \,\,\, \mathbf{0} \,\,\, \mathbf{0}\right].
\end{displaymath}
Then the influence function for $\widehat\bbeta_{\bA}$ is
\begin{equation*}
\{E^*[\ubV_{\bM}]^{-1} \otimes Var^*(\bA)^{-1}\}\{\bI_K \otimes (\bA-E^*[\bA])\}
\ubV_{\bM}  (\bY - \mathbf{Q}^\top \underline\bbeta),
\end{equation*}
which implies that the influence function for $\widehat\bDelta^{\textup{(MMRM-I)}}$ is
\begin{equation*}
 IF^{\textup{(MMRM-I)}}=   Var^*(\bA)^{-1}(\bA-E^*[\bA]) \be_K^\top E^*[\ubV_{\bM}]^{-1} \ubV_{\bM}  (\bY - \mathbf{Q}^\top \underline\bbeta).
\end{equation*}
Since $ Var^*(\bA)^{-1}(\bA-E^*[\bA]) =  \bL (\frac{I\{A=0\}}{\pi_0},\dots, \frac{I\{A=J\}}{\pi_J})^\top$, we get the desired formula of $IF^{\textup{(MMRM-I)}}$. 

We next compute $\bY - \mathbf{Q}^\top \underline\bbeta$. By $E^*[\bpsi^{\textup{(MMRM)}}(A, \bX, \bY, \bM;\btheta)] = \bzero$, we have $\underline{\bbeta} = E^*[\mathbf{Q}\ubV_{\bM}\mathbf{Q}]^{-1} E^*[\mathbf{Q}\ubV_{\bM}\bY]$. 
Recalling $\bu(\bX) = \bX\bone_K^\top$ for the MMRM-I estimator and following a similar procedure for calculating $\mathbf{B}^{-1}$, we have
\begin{align*}
  \underline{\bbeta}  &= \left(\begin{array}{ccc}
    \bI_K & \mathbf{0}  & \mathbf{0}  \\
    - E^*[\bI_K  \otimes \bA]  & \bI_K & \mathbf{0} \\
    - E^*[\bu(\bX)] & \mathbf{0} & \bI_q 
 \end{array}\right)^\top\left(\begin{array}{ccc}
        E^*[\ubV_{\bM}]^{-1}  & \mathbf{0} & \mathbf{0}\\
        \mathbf{0}  &  E^*[\ubV_{\bM}]^{-1} \otimes Var^*(\bA)^{-1}  & \mathbf{0}\\
        \mathbf{0} & \mathbf{0} & Var\{\bu(\bX)E^*[\ubV_{\bM}]^{\frac{1}{2}} \}^{-1}
     \end{array}\right)\\
    & \left(\begin{array}{ccc}
    \bI_K & \mathbf{0}  & \mathbf{0}  \\
    - E^*[\bI_K  \otimes \bA]  & \bI_K & \mathbf{0} \\
    - E^*[\bu(\bX)] & \mathbf{0} & \bI_q 
 \end{array}\right)\left(\begin{array}{c}
    E^*[\ubV_{\bM} \bY]   \\
    E^*[(\ubV_{\bM} \bY) \otimes \bA] \\
    E^*[\bu(\bX)\ubV_{\bM} \bY]
 \end{array}\right),
\end{align*}
which implies $\underline{\bbeta}_{\bA} = E^*[\widetilde{\bY} \otimes \widetilde{\bA}]$ and $\underline{\bbeta}_{\bX}^\top = \frac{\bone^\top_K E^*[\ubV_{\bM}] }{\bone^\top_K E^*[\ubV_{\bM}] \bone_K}Cov^*(\bY,\bX) Var(\bX)^{-1}$. Since $\bbeta_0$ satisfies $E^*[\bY-\underline\bbeta_0 -(\bI_K \otimes \bA)^\top\underline\bbeta_{\bA} - \bu(\bX)^\top \underline\bbeta_{\bu(\bX)}] = \bzero$, we get \\ $\bY - \mathbf{Q}^\top \underline\bbeta = \widetilde{\bY} - \underline\bbeta_{\bA}^\top(\bI_K \otimes \widetilde\bA) - \bone_K\bbeta_{\bX}^\top \widetilde{\bX}$. Then direct calculation gives the desired formula of $\bY - \mathbf{Q}^\top \underline\bbeta$.

We next calculate $\widetilde{\bV}^{\textup{(MMRM-I)}}$. Since $\bSigma$ is unstructured, the second set of estimating equations $\bpsi^{\textup{(MMRM)}}$ implies that, for each $r, s = 1,\dots, K$ and $j = 0, \dots, J$, we have
\begin{align*}
  0 & =E^*\left[-\textrm{tr}(\ubV_{\bM}  (\be_r\be_s^\top + \be_s\be_r^\top))  + (\bY-\mathbf{Q}^\top \underline\bbeta)^\top\ubV_{\bM}(\be_r\be_s^\top + \be_s\be_r^\top)) \ubV_{\bM}(\bY-\mathbf{Q}^\top \underline\bbeta) \right]  \\
  &= E^*\left[2 \be_r^\top\{-\ubV_{\bM} + \ubV_{\bM}(\bY-\mathbf{Q}^\top \underline\bbeta)(\bY-\mathbf{Q}^\top \underline\bbeta)^\top\ubV_{\bM}\} \be_s\right],
\end{align*}
which implies that, for $j = 0, \dots, J$, $    E^*[-\ubV_{\bM}] + E^*[\ubV_{\bM}(\bY-\mathbf{Q}^\top \underline\bbeta)(\bY-\mathbf{Q}^\top \underline\bbeta)^\top\ubV_{\bM}] = \bzero$.
Since $E^*[\ubV_{\bM}] = E^*[\ubV_{\bM}\underline{\bSigma}^{\textup{(MMRM-I)}}\ubV_{\bM}]$, the regularity condition (3) implies that $\underline{\bSigma}^{\textup{(MMRM-I)}} = E^*[(\bY-\mathbf{Q}^\top \underline\bbeta)(\bY-\mathbf{Q}^\top \underline\bbeta)^\top]$. Thus,
\begin{align*}
    &\widetilde{\bV}^{\textup{(MMRM-I)}}\\
    &= E[IF^{\textup{(MMRM-I)}} IF^{\textup{(MMRM-I)}}{}^\top] \\
    &= \bL\ diag\bigg\{\be_K^\top E^*[\ubV_{\bM}]^{-1}E^*\left[\frac{I\{A=j\}}{\pi_j^2}\ubV_{\bM}  (\bY - \mathbf{Q}^\top \underline\bbeta)  (\bY - \mathbf{Q}^\top \underline\bbeta)^\top \ubV_{\bM}\right] \\
    &\qquad E^*[\ubV_{\bM}]^{-1}\be_K: j = 0, \dots, J\bigg\} \ \bL^\top \\
    &=\bL\ diag\bigg\{\be_K^\top E^*[\ubV_{\bM}]^{-1}E^*\left[\pi_j^{-1}\ubV_{\bM}  \underline\bSigma_j^{\textup{(MMRM-I)}} \ubV_{\bM}\right]  E^*[\ubV_{\bM}]^{-1}\be_K: j = 0, \dots, J\bigg\} \ \bL^\top.
\end{align*}

For the MMRM-II estimator, we can follow a similar proof for the MMRM-I estimator and get the desired influence function and asymptotic covariance matrix. The only difference comes from $\bu(\bX) = \bI_K \otimes \bX$.

For the ANCOVA estimator, we observed that it is a special case of the MMRM-I estimator setting $K=1$. Then we have $\ubV_{\bM} = I\{M_K=1\} (\be_K^\top \underline{\bSigma}^{\textup{(ANCOVA)}} \be_K)^{-1}$, which naturally implies the desired influence function and asymptotic covariance matrix. 
\end{proof}

\begin{lemma}\label{lemma:if-IMMRM}
Assume the same assumption as in Theorem 1 and assume $\bSigma_j, j = 0,\dots, J$ are unstructured. Then the influence function of $\widehat{\bDelta}^{\textup{(IMMRM)}}$ is given by Equation~(\ref{eq:if-IMMRM}) and the asymptotic covariance matrix of $\widehat{\bDelta}^{\textup{(IMMRM)}}$ is given by Equation~(\ref{tilde-V-IMMRM}). 
\end{lemma}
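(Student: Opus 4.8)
The plan is to follow the same route used for the MMRM estimators in Lemma~\ref{lemma:if-MMRM}, adapting it to the fully treatment-interacted structure of IMMRM, which effectively decouples into $J+1$ arm-specific GLS fits. First I would invoke Theorem 1 of \cite{wang2021model} to write $IF^{\textup{(IMMRM)}}=\mathbf{B}^{-1}\bpsi^{\textup{(IMMRM)}}$ on the rows corresponding to $\bDelta$, where $\mathbf{B}=E^*[\partial\bpsi^{\textup{(IMMRM)}}/\partial\btheta|_{\btheta=\underline\btheta}]$ and $\btheta=(\bDelta,\bgamma^\top,\balpha_0^\top,\dots,\balpha_J^\top)^\top$. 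The key structural observation is that $\mathbf{B}$ is block upper-triangular in the ordering $(\bDelta,\bgamma,\balpha)$: the first $J$ equations of (\ref{psi:IMMRM}) are linear in $(\bDelta,\bgamma)$ and free of $\balpha$; the mean-score block $\mathbf{R}\bV_{A\bM}(\bY-\mathbf{R}^\top\bgamma)$ has vanishing $E^*$-derivative with respect to both $\bDelta$ and $\balpha$, the latter because $\partial\bV_{A\bM}/\partial\alpha_{jl}$ multiplies a residual of conditional mean zero, exactly as in the vanishing cross-blocks of Lemma~\ref{lemma:if-MMRM}; and the variance-score block has vanishing $E^*$-derivative with respect to $(\bDelta,\bgamma)$ for the same reason. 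Moreover the $I\{A=j\}$ factors, together with the arm-specific weight $\bV_{A\bM}=\ubV_{j\bM}$ on $\{A=j\}$, render the variance information block-diagonal across arms.

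Next I would invert this block-triangular $\mathbf{B}$. With $\mathbf{B}_{\gamma\gamma}=-E^*[\mathbf{R}\bV_{A\bM}\mathbf{R}^\top]$ and $\mathbf{M}_1$ the deterministic selection map that the first $J$ equations impose on $\bgamma$ (extracting $\beta_{AjK}+E[\bX]^\top\bbeta_{A\bX jK}$), the first $J$ rows of $\mathbf{B}^{-1}$ take the form $[-\bI_J,\ \mathbf{M}_1\mathbf{B}_{\gamma\gamma}^{-1},\ \bzero]$, so $IF^{\textup{(IMMRM)}}$ splits into a contribution from $\bpsi_1$ (the first $J$ entries) and one from the mean score $\bpsi_2$. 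Evaluated at $\underline\btheta$, the $j$-th entry of $\bpsi_1$ reduces to $(\bX-E[\bX])^\top\underline\bbeta_{A\bX jK}$, since the defining relation $\underline\Delta_j=\underline\beta_{AjK}+E[\bX]^\top\underline\bbeta_{A\bX jK}$ cancels the constant part; this is exactly $\bL\mathbf{r}^\top(\bX-E[\bX])$ once $\underline\bbeta_{A\bX jK}=\bb_{Kj}-\bb_{K0}$ is established. The $\bpsi_2$ contribution is handled by the same lower-triangular block-diagonalization device (a transformation $\mathbf{D}$ exploiting that $(\bX,\bA)\indep\bM$ under MCAR, so $E^*[(\bI_K\otimes\bX)\ubV_{j\bM}]$ and $E^*[(\bI_K\otimes\bA)\ubV_{j\bM}]$ factor) as in Lemma~\ref{lemma:if-MMRM}; the only change is the arm-specific weight $\ubV_{j\bM}$, which produces the arm-indexed rows of $\mathbf{T}^{\textup{(IMMRM)}}$ and hence $\bL\mathbf{T}^{\textup{(IMMRM)}}(\bY-\mathbf{R}^\top\underline\bgamma)$.

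The step I expect to be the main obstacle is pinning down $\underline\bgamma$ exactly, i.e. verifying $\bY-\mathbf{R}^\top\underline\bgamma=\sum_{j=0}^J I\{A=j\}\{\bY(j)-E[\bY(j)]-Cov(\bY(j),\bX)Var(\bX)^{-1}(\bX-E[\bX])\}$ and in particular the arm-specific slope $\underline\bbeta_{A\bX jK}=\bb_{Kj}-\bb_{K0}$. The model-robustness argument in the proof of Theorem 1 only determines the marginal contrasts $\underline\beta_{AjK}+E[\bX]^\top\underline\bbeta_{A\bX jK}$, not the slopes. To obtain the slopes I would use that the IMMRM mean model is saturated in treatment and specifies, for each arm, a separate intercept and a separate slope vector for each of the $K$ outcomes, all regressing on the common covariate $\bX$; this is a seemingly-unrelated regression with identical regressors across the $K$ equations, for which GLS coincides with equation-by-equation OLS. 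Crucially, under MCAR $\bM\indep(\bX,\bY(j))$ (Lemma~\ref{lemma:observed-data-distribution}), so for the per-arm, per-outcome projection residual $\bR_j$ one has $E[\bR_j]=\bzero$ and $E[\bX\otimes\bR_j]=\bzero$, whence the weight factors out of $E[(\bI_K\otimes\bX)\ubV_{j\bM}\bR_j]$ and the projection solves the GLS score. Regularity condition (3) then identifies $\underline\bgamma$ with this projection, yielding both the residual form and $\underline\bbeta_{A\bX jK}=\bb_{Kj}-\bb_{K0}$.

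Finally I would compute $\widetilde{\bV}^{\textup{(IMMRM)}}=Var^*(IF^{\textup{(IMMRM)}})$ by adding the variances of the two pieces $T_1=\bL\mathbf{T}^{\textup{(IMMRM)}}(\bY-\mathbf{R}^\top\underline\bgamma)$ and $T_2=\bL\mathbf{r}^\top(\bX-E[\bX])$. Their covariance vanishes because each $\bR_j$ is uncorrelated with $\bX$ by construction and $A\indep(\bX,\bY)$, so $Cov^*(T_1,T_2)=\bzero$; here $Var^*(T_2)=\bL\mathbf{r}^\top Var(\bX)\mathbf{r}\bL^\top$ directly. For $Var^*(T_1)$ the cross-arm terms drop out since $I\{A=j\}I\{A=j'\}=0$ for $j\ne j'$, and on $\{A=j\}$ the unstructured-$\bSigma_j$ variance score gives $\underline\bSigma_j^{\textup{(IMMRM)}}=E^*[\tfrac{I\{A=j\}}{\pi_j}\bR_j\bR_j^\top]=E[\bR_j\bR_j^\top]$, exactly paralleling the identity $\underline\bSigma^{\textup{(MMRM-I)}}=E^*[(\bY-\mathbf{Q}^\top\underline\bbeta)(\bY-\mathbf{Q}^\top\underline\bbeta)^\top]$ used in Lemma~\ref{lemma:if-MMRM}. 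Combining this with the MCAR factorization $\bM\indep\bR_j$ and the projection identity $\bV_{\bm}(\bSigma)\bSigma\bV_{\bm}(\bSigma)=\bV_{\bm}(\bSigma)$ (from the proof of Lemma~\ref{lemma-Vm}~(3)) gives $E[\ubV_{j\bM}\bR_j\bR_j^\top\ubV_{j\bM}]=E^*[\ubV_{j\bM}]$, so the $j$-th diagonal entry of $Var^*(T_1)$ collapses to $\pi_j^{-1}\be_K^\top E^*[\ubV_{j\bM}]^{-1}\be_K=\be_K^\top E^*[\pi_j\ubV_{j\bM}]^{-1}\be_K$. Summing the two contributions yields (\ref{tilde-V-IMMRM}), and the identity (\ref{eq:if-IMMRM}) follows from the preceding steps.
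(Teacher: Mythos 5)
Your proposal matches the paper's proof in all essentials: the same split of the influence function into a mean-score piece $\bL\mathbf{T}^{\textup{(IMMRM)}}\bV_{A\bM}(\bY-\mathbf{R}^\top\underline\bgamma)$ and a piece $\bL\mathbf{r}^\top(\bX-E[\bX])$ coming from the first $J$ estimating equations, the same identification of $\underline\bgamma$ (and hence $\underline\bbeta_{\bA\bX jK}=\bb_{Kj}-\bb_{K0}$) via the arm-specific projection with the weight factoring out under MCAR, and the same variance computation using the unstructured-$\bSigma_j$ score to get $E[\ubV_{j\bM}\underline\bSigma_j^{\textup{(IMMRM)}}\ubV_{j\bM}]=E[\ubV_{j\bM}]$ and the vanishing cross-term. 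The only difference is cosmetic: the paper inverts the mean-parameter information block via a Schur complement and the Woodbury identity, whereas you reuse the block-triangularization device from Lemma~\ref{lemma:if-MMRM}; both exploit the same arm-wise block-diagonal Kronecker structure and yield the same $\mathbf{T}^{\textup{(IMMRM)}}$.
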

\begin{proof}
Following a similar procedure as in Lemma~\ref{lemma:if-MMRM}, we get that the influence function for $\widehat{\bDelta}^{\textup{(IMMRM)}}$ is
$IF^{\textup{(IMMRM)}} = \bU_1 + \bU_2$, where
\begin{align*}
    \bU_1 &= (\be_K^\top \otimes \bI_J)\mathbf{H}^{-1}(\bI_K \otimes \bA - E^*[\bV_{A\bM} \otimes \bA]E^*[\bV_{A\bM}]^{-1}) \bV_{A\bM} (\bY - \mathbf{R}^\top \underline{\bgamma}) \\
    \bU_2 &= (\be_K \otimes \bI_J)^\top \underline\bbeta_{\bA} + (\be_K \otimes \bX \otimes \bI_J)^\top \underline\bbeta_{\bA\bX} - \bDelta^*.
\end{align*}
where $\mathbf{H} = E^*[\bV_{A\bM} \otimes \bA\bA^\top] - E^*[\bV_{A\bM} \otimes \bA]E^*[\bV_{A\bM}]^{-1} E^*[\bV_{A\bM} \otimes \bA^\top]$.

We next compute $\bU_1$. Define $\boldsymbol{\delta}_j \in \mathbb{R}^{J}$ has the $J$-th entry 1 and the rest 0. We have
\begin{align*}
    E^*[\bV_{A\bM} \otimes \bA\bA^\top]^{-1} = \left(\sum_{j=1}^J E[\pi_j\ubV_{j\bM}] \otimes \boldsymbol{\delta}_j\boldsymbol{\delta}_j^\top \right)^{-1} = \sum_{j=1}^J E[\pi_j\ubV_{j\bM}]^{-1} \otimes \boldsymbol{\delta}_j\boldsymbol{\delta}_j^\top
\end{align*}
and hence
\begin{align*}
&  -E^*[\bV_{A\bM}] + E^*[\bV_{A\bM} \otimes \bA^\top]E^*[\bV_{A\bM} \otimes \bA\bA^\top]^{-1}E^*[\bV_{A\bM} \otimes \bA]\\
&= -E[\bV_{A\bM}] + \left(\sum_{j'=1}^J E[\pi_{j'}\ubV_{j'\bM}] \otimes \boldsymbol{\delta}_{j'}^\top\right)\left(\sum_{j=1}^J E[\pi_j\ubV_{j\bM}]^{-1} \otimes \boldsymbol{\delta}_j\boldsymbol{\delta}_j^\top\right)\left(\sum_{j''=1}^J E[\pi_{j''}\ubV_{j''\bM}] \otimes \boldsymbol{\delta}_{j''}\right) \\
&= -E^*[\bV_{A\bM}] + \sum_{j=1}^J E[\pi_j\ubV_{j\bM}] \otimes \boldsymbol{\delta}_j^\top\boldsymbol{\delta}_j\\
&= -E[\pi_0\ubV_{0\bM}].
\end{align*}
Using the Woodbury matrix identity, we get
\begin{align*}
  \mathbf{H}^{-1} &=   E^*[\bV_{A\bM} \otimes \bA\bA^\top]^{-1} - E^*[\bV_{A\bM} \otimes \bA\bA^\top]^{-1} E^*[\bV_{A\bM} \otimes \bA] \\
  &\qquad (-E^*[\bV_{A\bM}] + E^*[\bV_{A\bM} \otimes \bA^\top]E^*[\bV_{A\bM} \otimes \bA\bA^\top]^{-1}E^*[\bV_{A\bM} \otimes \bA])^{-1} \\
  &\qquad \qquad E^*[\bV_{A\bM} \otimes \bA^\top] E^*[\bV_{A\bM} \otimes \bA\bA^\top]^{-1} \\
  &= \sum_{j=1}^J E[\pi_j\ubV_{j\bM}]^{-1} \otimes \boldsymbol{\delta}_j\boldsymbol{\delta}_j^\top - (\bI_K \otimes \bone_J)(-E[\pi_0\ubV_{0\bM}])^{-1}(\bI_K \otimes \bone_J^\top)\\
  &= \sum_{j=1}^J E[\pi_j\ubV_{j\bM}]^{-1} \otimes \boldsymbol{\delta}_j\boldsymbol{\delta}_j^\top +E[\pi_0\ubV_{0\bM}]^{-1} \otimes \bone_J\bone_J^\top
\end{align*}
Then we get 
\begin{align*}
  &  \mathbf{H}^{-1}(\bI_K \otimes \bA - E^*[\bV_{A\bM} \otimes \bA]E^*[\bV_{A\bM}]^{-1}) \\
  &= \left(\sum_{j=1}^J E[\pi_j\ubV_{j\bM}]^{-1} \otimes \boldsymbol{\delta}_j\boldsymbol{\delta}_j^\top +E[\pi_0\ubV_{0\bM}]^{-1} \otimes \bone_J\bone_J^\top\right) \left(\bI_K \otimes \bA - \sum_{j=1}^J E[\pi_j\ubV_{j\bM}]E^*[\bV_{A\bM} ]^{-1}\otimes \boldsymbol{\delta}_j\right) \\
  &= \sum_{j=1}^J E[\pi_j\ubV_{j\bM}]^{-1} \otimes \boldsymbol{\delta}_j I\{A=j\} + E[\pi_0\ubV_{0\bM}]^{-1} \otimes \bone_J (1-I\{A=0\}) - \sum_{j=1}^J E^*[\bV_{A\bM} ]^{-1}\otimes \boldsymbol{\delta}_j\\
  &\quad - E[\pi_0\ubV_{0\bM}]^{-1}(E^*[\bV_{A\bM} ]-E[\pi_0\ubV_{0\bM}])E^*[\bV_{A\bM} ]^{-1} \otimes \bone_J \\
  &=  \sum_{j=1}^J E[\pi_j\ubV_{j\bM}]^{-1} \otimes \boldsymbol{\delta}_j I\{A=j\}-E[\pi_0\ubV_{0\bM}]^{-1} \otimes \bone_J I\{A=0\},
\end{align*}
which implies that $\bU_1 = \bL  \mathbf{T}^{\textup{(IMMRM)}} \bV_{A\bM}(\bY - \mathbf{R}^\top \underline{\bgamma})$. By $E^*[\bpsi^{\textup{(IMMRM)}}(A, \bX, \bY, \bM;\btheta)] = \bzero$, we have $\underline{\bgamma} = E^*[\mathbf{R}\ubV_{A\bM}\mathbf{R}]^{-1} E^*[\mathbf{R}\ubV_{A\bM}\bY]$. Following a similar procedure for calculating $\underline{\bbeta}$ in Lemma~\ref{lemma:if-MMRM}, we get
\begin{align*}
    \bY - \mathbf{R}^\top \underline{\bgamma} &= \sum_{j=0}^J I\{A=j\}\left\{\bY(j) - E[\bY(j)] - Cov(\bY(j),\bX)Var(\bX)^{-1}(\bX - E[\bX])\right\}.
\end{align*}

We next compute $\bU_2$. For each $j = 1,\dots, J$, we have the $j$-th entry of $\bU_2$ is
\begin{align*}
   \boldsymbol{\delta}_j^\top\bU_2 = \underline\beta_{AjK} + \underline\bbeta_{A\bX j K}^\top \bX - \Delta_j^*.
\end{align*}
Since we have shown the model-robustness of $\widehat{\bDelta}^{\textup{(IMMRM)}}$ in the proof of Theorem 1, then $\Delta_j^* = \underline\beta_{AjK} + \underline\bbeta_{A\bX j K}^\top E[\bX]$ and hence $\boldsymbol{\delta}_j^\top\bU_2 = \underline\bbeta_{A\bX j K}^\top (\bX - E[\bX])$.
By the formula of $\bY - \mathbf{R}^\top \underline{\bgamma}$, we have
\begin{align*}
\underline\bbeta_{A\bX j K}^\top (\bX - E[\bX]) &= (I\{A=j\}-I\{A=0\}) \be_K^\top  (\bY - \mathbf{R}^\top \underline{\bgamma})\\
&= Cov(Y_K(j) - Y_K(0), \bX)Var(\bX)^{-1} (\bX - E[\bX]) \\
&= (\bb_{Kj} - \bb_{K0})^\top (\bX - E[\bX]),
\end{align*}
which implies $\bU_2 = \bL \mathbf{r}^\top (\bX - E[\bX])$.

We next compute $\widetilde{\bV}^{\textup{(IMMRM)}}$, which is
\begin{align*}
    \widetilde{\bV}^{\textup{(IMMRM)}} &= E^*[IF^{\textup{(IMMRM)}}IF^{\textup{(IMMRM)}}{}^\top] = E^*[\bU_1\bU_1^\top] + E^*[\bU_1\bU_2^\top] + E^*[\bU_2\bU_1^\top] + E^*[\bU_2\bU_2^\top].
\end{align*}
For $E^*[\bU_1\bU_2^\top]$, since $E^*[\bpsi^{\textup{(IMMRM)}}(A,\bX,\bY,\bM)] = \bzero$ imply that $E^*[\bV_{A\bM} (\bY - \mathbf{R}^\top \underline{\bgamma})|\bX] =\bzero$ and $E^*[(\bI_K \otimes \bA)\bV_{A\bM} (\bY - \mathbf{R}^\top \underline{\bgamma})|\bX] =\bzero$, then $E^*[\bU_1|\bX] = \bzero$ and hence $E^*[\bU_1\bU_2^\top] = E^*[E^*[\bU_1|\bX]\bU_2^\top] = \bzero$. Thus, $\widetilde{\bV}^{\textup{(IMMRM)}} =  E^*[\bU_1\bU_1^\top] + E^*[\bU_2\bU_2^\top]$.
For $E^*[\bU_1\bU_1^\top]$, since $\bSigma_j$ is unstructured for each $j = 0, \dots, J$, the second set of estimating equations $\bpsi^{\textup{(IMMRM)}}$ implies that, for each $r, s = 1,\dots, K$ and $j = 0, \dots, J$, we have
\begin{align*}
  \bzero & =E^*\left[-\textrm{tr}(\bV_{A\bM} I\{A=j\} (\be_r\be_s^\top + \be_s\be_r^\top))  \right. \\
  &\quad\qquad \left. + (\bY-\mathbf{R}^\top\underline\bgamma)^\top\bV_{A\bM}I\{A=j\}(\be_r\be_s^\top + \be_s\be_r^\top)) \bV_{A\bM}(\bY-\mathbf{R}^\top\underline\bgamma) \right]  \\
  &= E^*\left[2I\{A=j\} \be_r^\top\{-\bV_{A\bM} + \bV_{A\bM}(\bY-\mathbf{R}^\top\underline\bgamma)(\bY-\mathbf{R}^\top\underline\bgamma)^\top\bV_{A\bM}\} \be_s\right],
\end{align*}
which implies that, for $j = 0, \dots, J$, 
\begin{equation}\label{eq:vmj}
    E[-\pi_j\ubV_{j\bM}] + E^*[\ubV_{j\bM}I\{A=j\}(\bY-\mathbf{R}^\top\underline\bgamma)(\bY-\mathbf{R}^\top\underline\bgamma)^\top\ubV_{j\bM}] = \bzero.
\end{equation}
Hence
\begin{align*}
    E^*[\bU_1\bU_1^\top]    &= E^*[\bL  \mathbf{T}^{\textup{(IMMRM)}} \bV_{A\bM}(\bY - \mathbf{R}^\top \underline{\bgamma})(\bY - \mathbf{R}^\top \underline{\bgamma})^\top \bV_{A\bM} \mathbf{T}^{\textup{(IMMRM)}}\bL^\top] \\
    &= \bL\ diag\left\{\frac{1}{\pi_j^2}\be_K^\top E[\ubV_{j\bM}]^{-1}E^*[\bV_{A\bM}I\{A=j\}(\bY - \mathbf{R}^\top \underline{\bgamma})\right.\\
    &\qquad  (\bY - \mathbf{R}^\top \underline{\bgamma})^\top \bV_{A\bM}]E[\ubV_{j\bM}]^{-1}\be_K: j =0,\dots, J\bigg\}\bL^\top\\
    &= \bL\ diag\{\be_K^\top E[\pi_j\ubV_{j\bM}]^{-1}\be_K: j =0,\dots, J\}\bL^\top,
\end{align*}
where the last equation results from Equation~(\ref{eq:vmj}).
Since $ E^*[\bU_2\bU_2^\top] = \bL\mathbf{r}^\top Var(\bX) \mathbf{r} \bL^\top$, we get the desired formula of $\widetilde{\bV}^{\textup{(IMMRM)}}$.
\end{proof}

\subsection{Proof of Corollary 1}
\begin{proof}
The ANHECOVA estimator is a special case of the IMMRM model setting $K=1$. Hence the consistency and asymptotic normality under simple or stratified randomization are implied by Theorem 1.
Furthermore, by Equation~(\ref{tilde-V-IMMRM}), we have
\begin{align*}
    \widetilde{\bV}^{\textup{(ANHECOVA)}} &= \bL\left( diag\{P^*(M_K=1)^{-1}\pi_j^{-1}\be_K^\top \underline{\bSigma}^{\textup{(IMMRM)}} \be_K: j = 0,\dots, J\} + \mathbf{r}^\top Var(\bX) \mathbf{r}\right) \bL.
\end{align*}
Then, by Lemma~\ref{lemma-Vm} (2),
\begin{align*}
  &\widetilde{\bV}^{\textup{(ANHECOVA)}} - \widetilde{\bV}^{\textup{(IMMRM)}}  \\
  &= \bL diag\{\pi_j^{-1}\be_K^\top (P^*(M_K=1)^{-1}\underline{\bSigma}^{\textup{(IMMRM)}} - E[\ubV_{jM}]^{-1} )\be_K: j = 0,\dots, J\} \bL \\
  &\succeq \bzero,
\end{align*}
with equality holds if and only if $P(M_t(j) = 1, M_K(j) = 0)\ Cov\{Y_t(j) - \bb_{tj}^\top \bX, Y_K(j) - \bb_{Kj}^\top \bX\} = 0$ for each $t = 1,\dots, K-1$ and $j = 0,\dots, J$.
The final result comes from the fact that $ Cov\{\bb_{tj}^\top \bX, Y_K(j) - \bb_{Kj}^\top \bX\} = \bzero$.
\end{proof}

\subsection{Proof of Corollary 2}
\begin{proof}
By Equations~(\ref{V-ANCOVA}) and (\ref{V-MMRM-II}), we have
\begin{align*}
     &\widetilde{\bV}^{\textup{(ANCOVA)}} - \bV^{\textup{(ANCOVA)}} \\
     &=\widetilde{\bV}^{\textup{(MMRM-II)}} - \bV^{\textup{(MMRM-II)}}\\
     &= \bL[ diag\{\pi_j^{-1} (\bb_{Kj} - \bb_K)^\top Var(E[\bX|S])(\bb_{Kj} - \bb_K): j = 0,\dots, J\} - \mathbf{z}^\top Var(E[\bX|S]) \mathbf{z}] \bL^\top.
\end{align*}
If $J=1$ and $\pi_1=\pi_0 = 0.5$, then $\bL = (-1,1)$, $\bb_K = 0.5 (\bb_{K1} + \bb_{K0})$ and $\mathbf{z} \bL^\top = \bb_{K1} - \bb_{K0}$. Hence
\begin{align*}
    & \widetilde{V}^{\textup{(ANCOVA)}} - V^{\textup{(ANCOVA)}} \\
    &= \sum_{j=0}^1 2 \{\bb_{K1}-0.5 (\bb_{K1} + \bb_{K0})\}^\top  Var(E[\bX|S]) \{\bb_{K1}-0.5 (\bb_{K1} + \bb_{K0})\} \\
    &\quad - (\bb_{K1} - \bb_{K0})^\top  Var(E[\bX|S]) (\bb_{K1} - \bb_{K0}) \\
    &= 0.
\end{align*}

We next compare $\bV^{\textup{(ANCOVA)}}$ and $\bV^{\textup{(MMRM-II)}}$. By the definition of $\underline{\bSigma}^{\textup{(ANCOVA)}}$ and $\underline{\bSigma}_j^{\textup{(ANCOVA)}}$, we have $2\underline{\bSigma}^{\textup{(ANCOVA)}} = \underline{\bSigma}_0^{\textup{(ANCOVA)}} + \underline{\bSigma}_1^{\textup{(ANCOVA)}}$. Then Equation~(\ref{tilde-V-ANCOVA}) implies that $\widetilde{V}^{\textup{(ANCOVA)}} = 4 P^*(M_K=1)^{-1} \be_K^\top \underline{\bSigma}^{\textup{(ANCOVA)}}\be_K$. Similarly, we have $\widetilde{V}^{\textup{(MMRM-II)}} = 4 \be_K^\top E^*[\overline{\ubV}_{\bM}]^{-1}\be_K$. Since $\underline{\bSigma}^{\textup{(ANCOVA)}} = \underline{\bSigma}^{\textup{(MMRM-II)}}$, then Lemma~\ref{lemma-Vm} (2) implies that $\widetilde{V}^{\textup{(ANCOVA)}} -  \widetilde{V}^{\textup{(MMRM-II)}}  \ge \be_K^\top(\underline{\bSigma}^{\textup{(ANCOVA)}} - \underline{\bSigma}^{\textup{(MMRM-II)}})\be_K = 0$.

Finally, we show $V^{\textup{(MMRM-I)}} \ge V^{\textup{(MMRM-II)}}$. Under two-armed equal randomization,  Equation~(\ref{V-MMRM}) implies that
\begin{align*}
    & \widetilde{V}^{\textup{(MMRM-I)}} - V^{\textup{(MMRM-I)}} = 4 \{\bb_{K}-\underline{\bbeta}_{\bX}\}^\top  Var(E[\bX|S])  \{\bb_{K}-\underline{\bbeta}_{\bX}\}.
\end{align*}
In addition, since $2\underline{\bSigma}^{\textup{(MMRM-I)}} = \underline{\bSigma}_0^{\textup{(MMRM-I)}} + \underline{\bSigma}_1^{\textup{(MMRM-I)}}$, Equation~(\ref{tilde-V-MMRM}) implies that $\widetilde{\bV}^{\textup{(MMRM-I)}} = 4 \be_K^\top E^*[\ubV_{\bM}]^{-1}\be_K$. 
By the definition of $\underline{\bSigma}^{\textup{(MMRM-I)}}$ and $\underline{\bSigma}^{\textup{(MMRM-II)}}$, we have
\begin{align*}
 & \be_K^\top(\underline{\bSigma}^{\textup{(MMRM-I)}} - \underline{\bSigma}^{\textup{(MMRM-II)}} )\be_K \\
 &= Var^*(Y_K) + \underline{\bbeta}_{\bX}^\top Var(\bX)\underline{\bbeta}_{\bX} - \underline{\bbeta}_{\bX}^\top Cov^*(\bX, Y_K) - Cov^*(Y_K, \bX) \underline{\bbeta}_{\bX} \\
 &\quad - Var^*(Y_K) - \bb_K^\top Var(\bX) \bb_K \\
 &= \{\bb_{K}-\underline{\bbeta}_{\bX}\}^\top  Var(\bX) \{\bb_{K}-\underline{\bbeta}_{\bX}\}.
\end{align*}
Hence
\begin{align*}
   & V^{\textup{(MMRM-I)}} - V^{\textup{(MMRM-II)}}  \\
   &= V^{\textup{(MMRM-I)}} - \widetilde{V}^{\textup{(MMRM-I)}} + \widetilde{V}^{\textup{(MMRM-I)}} - V^{\textup{(MMRM-II)}} \\
   &= -4 \{\bb_{K}-\underline{\bbeta}_{\bX}\}^\top  Var(E[\bX|S])  \{\bb_{K}-\underline{\bbeta}_{\bX}\} + 4\be_K^\top (E^*[\ubV_{\bM}]^{-1} - E^*[\overline\ubV_{\bM}]^{-1})\be_K \\
   &\ge  -4 \{\bb_{K}-\underline{\bbeta}_{\bX}\}^\top  Var(E[\bX|S])  \{\bb_{K}-\underline{\bbeta}_{\bX}\} + 4\be_K^\top (\underline{\bSigma}^{\textup{(MMRM-I)}} - \underline{\bSigma}^{\textup{(MMRM-II)}}) \be_K \\
   &= 4 \{\bb_{K}-\underline{\bbeta}_{\bX}\}^\top  E[Var(\bX|S)] \{\bb_{K}-\underline{\bbeta}_{\bX}\} \\
   &\ge 0,
\end{align*}
where the inequality comes from Lemma~\ref{lemma-Vm} (5).
\end{proof}



\section{A counterexample showing MMRM-II is less precise than ANCOVA} \label{sec: an example MMRM-II}

We assume Assumption 1,  simple randomization, $\pi_1 = \frac{1}{3}$,  $\pi_0 = \frac{2}{3}$, and
\begin{align*}
    E[\bY(1)] &= E[\bY(0)] = \bzero,\\
    Var(\bY(1)) &= \left(\begin{array}{rr}
    4     &  -3\\
    -3    &  4
    \end{array}\right), \quad Var(\bY(0)) = \left(\begin{array}{rr}
    4    &  3\\
    3  &  4
    \end{array}\right), \\
    p_{(1,0)} = p_{(0,1)} = p_{(1,1)} &= \frac{1}{3},
\end{align*}
where $p_{\bm} = P^*(\bM = \bm)$ for $\bm \in \{0,1\}^2$. Then we have $\underline{\bSigma}_j^{\textup{(ANCOVA)}} = \underline{\bSigma}_j^{\textup{(MMRM-II)}} = Var(\bY(j))$ for $j = 0,1$. Furthermore,
\begin{displaymath}
\underline{\bSigma}^{\textup{(MMRM-II)}} = \pi_0 \underline{\bSigma}_0^{\textup{(MMRM-II)}} + \pi_1 \underline{\bSigma}_1^{\textup{(MMRM-II)}} = \left(\begin{array}{rr}
    4     &  1\\
    1     &  4
    \end{array}\right)
\end{displaymath}
We define
\begin{displaymath}
\mathbf{C} = \pi_0^{-1} \underline{\bSigma}_0^{\textup{(ANCOVA)}} + \pi_1^{-1} \underline{\bSigma}_1^{\textup{(ANCOVA)}} = \frac{9}{2} \left(\begin{array}{rr}
   4    &  -1\\
    -1   & 4
    \end{array}\right).
\end{displaymath}
Then, by Equation~(\ref{tilde-V-ANCOVA}),
\begin{displaymath}
Var(\widehat{\Delta}^{\textup{(ANCOVA)}}) = \frac{1}{P(M_K=1)} \be_K^\top \mathbf{C} \be_K = 27.
\end{displaymath}
To compute $Var(\widehat{\Delta}^{\textup{(MMRM-II)}})$, recall that $\overline{\ubV}_{\bM} = \bV_{\bM}(\underline{\bSigma}^{\textup{(MMRM-II)}})$. Then
\begin{align*}
    \overline{\ubV}_{(1,0)} = \left(\begin{array}{rr}
    \frac{1}{4}     &  0\\
    0     &  0
    \end{array}\right), \quad \overline{\ubV}_{(0,1)} = \left(\begin{array}{rr}
    0     &  0\\
    0     &  \frac{1}{4}
    \end{array}\right), \quad  \overline{\ubV}_{(1,1)} = \left(\begin{array}{rr}
    \frac{4}{15}     &  -\frac{1}{15}\\
    -\frac{1}{15}     &  \frac{4}{15}
    \end{array}\right),
\end{align*}
which implies
\begin{displaymath}
E^*[\overline{\ubV}_{\bM}] = \sum_{\bm}p_{\bm}\overline{\ubV}_{\bm}  = \frac{1}{180}\left(\begin{array}{rr}
    31    &  -4\\
    -4    & 31
    \end{array}\right), \quad E^*[\overline{\ubV}_{\bM}]^{-1} = \frac{4}{21}\left(\begin{array}{rr}
   31     &  4\\
    4     &  31
    \end{array}\right).
\end{displaymath}
In addition, we have
\begin{align*}
   \overline{\ubV}_{(1,0)}\mathbf{C}\overline{\ubV}_{(1,0)} & = \left(\begin{array}{rr}
    \frac{9}{8}     &  0\\
    0     &  0
    \end{array}\right), \
    \overline{\ubV}_{(0,1)}\mathbf{C}\overline{\ubV}_{(0,1)} = \left(\begin{array}{rr}
    0     &  0\\
    0     &  \frac{9}{8}
    \end{array}\right), \
    \overline{\ubV}_{(1,1)}\mathbf{C}\overline{\ubV}_{(1,1)} =\frac{1}{50} \left(\begin{array}{rr}
   76     &  -49\\
    -49    &  76
    \end{array}\right),     
\end{align*}
which implies
\begin{displaymath}
E^*[\overline{\ubV}_{\bM}\mathbf{C}\overline{\ubV}_{\bM}] = \sum_{\bm}p_{\bm}\overline{\ubV}_{\bm}\mathbf{C} \overline{\ubV}_{\bm}=  \frac{1}{600}\left(\begin{array}{rr}
    529     &  -196\\
    -196    &  529
    \end{array}\right).
\end{displaymath}
Then, by $\underline{\bSigma}_j^{\textup{(MMRM-II)}} = \underline{\bSigma}_j^{\textup{(ANCOVA)}}$ and Equation~(\ref{tilde-V-MMRM-II}), we have
\begin{align*}
Var(\widehat{\Delta}^{\textup{(MMRM-II)}}) &= \be_K^\top E^*[\overline{\ubV}_{\bM}]^{-1} E^*[\overline{\ubV}_{\bM}\mathbf{C}\overline{\ubV}_{\bM}] E^*[\overline{\ubV}_{\bM}]^{-1}\be_K \\
&= \frac{4}{21} (4\,\, 31) \frac{1}{600}\left(\begin{array}{rr}
    529     &  -196\\
    -196    &  529
    \end{array}\right) \frac{4}{21} \left(\begin{array}{rr}
    4    \\
    31
    \end{array}\right) \\
    &= \frac{12486}{21^2} \approx 28.31.
\end{align*}
Since $27 <28.31$, we have $Var(\widehat{\Delta}^{\textup{(ANCOVA)}}) < Var(\widehat{\Delta}^{\textup{(MMRM-II)}})$.

\section{Missing data mechanism in the simulation study} \label{sec:MAR}
Given $(\bY_i(0), \bY_i(1), \bY_i(2), A_i, X_{i1})$ defined in Section 6.1 of the main paper, we define $R_{it}(j)$ as the residual of $Y_{it}(j)$ regressing on $X_{i1}$ by a simple linear regression.
We then define the censoring time $C_i$ by the following sequential conditional model
\begin{enumerate}
    \item $P(C_i=1|X_{i1},A_i) = 1 - expit[logit(0.99) - 0.14 I\{A_i<2\} X_{i1} -0.12 I\{A_i=2\} X_{i1}]    $,
    \item $P(C_i=2|X_{i1},A_i, C_i > 1, Y_{i1}) = 1 - expit[logit(0.969)  - 0.7 I\{A_i<2\} R_{i1}(A_i) -0.5 I\{A_i=2\} R_{i1}(A_i)]$,
    \item $P(C_i=3|X_{i1},A_i, C_i > 2, Y_{i1},Y_{i2}) = 1 - expit[logit(0.958)  - 0.72 I\{A_i<2\} R_{i2}(A_i) -0.51 I\{A_i=2\} R_{i2}(A_i)]$,
    \item $P(C_i=4|X_{i1},A_i, C_i > 3, Y_{i1},Y_{i2},Y_{i3}) = 1 - expit[logit(0.967)  - 0.74 I\{A_i<2\} R_{i3}(A_i) -0.52 I\{A_i=2\} R_{i3}(A_i)]$,
    \item $P(C_i=5|X_{i1},A_i, C_i > 4, Y_{i1},Y_{i2},Y_{i3},Y_{i4}) = 1 - expit[logit(0.977)  - 0.76 I\{A_i<2\} R_{i4}(A_i) -0.53 I\{A_i=2\} R_{i4}(A_i)]$.
\end{enumerate}
Once we have $C_i$, then $\bM_i$ is defined as $M_{it}=1$ if $t\le C_i$ and $0$ otherwise.

\section{Additional simulations under homogeneity and homoscedasticity}\label{sec: additional simulation}

\begin{table}[p!]
\centering
\caption{Simulation results comparing candidate estimators with $50$ samples per arm under MCAR and MAR with no heterogeneity or heteroscedasticity.
For each estimator, we estimate the average treatment effect of TRT1 and TRT2, both comparing the control group. The following measures are used: bias, empirical standard error (ESE), averaged standard error (ASE), coverage probability (CP), probability of rejecting the null (PoR), relative mean squared error compared to IMMRM (RMSE). For RMSE, a number bigger than 1 indicates a larger mean squared error than IMMRM.}
\label{simulstudy:n50-homo}
\renewcommand{\arraystretch}{0.8}
\resizebox{\textwidth}{!}{
\begin{tabular}{lllrrrrrr} 
\toprule
                                               &                           &    Group        & Bias    & ESE    & ASE    & CP(\%)   & PoR(\%)  & RMSE\\ 
\midrule
   \multirow{8}{*}{MCAR} & \multirow{2}{*}{ANCOVA}   & TRT1  & 0.005 & 0.197 & 0.190 & 94.6 & 5.4 & 1.016 \\
                                               &                           & TRT2 & 0.004 & 0.196 & 0.190 & 94.6 & 99.9 & 1.032 \\ 
\cmidrule{2-9}
                                               & \multirow{2}{*}{MMRM-I}     & TRT1  & 0.004 & 0.198 & 0.205 & 95.9 & 4.1 & 1.026 \\ 
                                              &                           & TRT2 & 0.004 & 0.197 & 0.206 & 95.9 & 99.8 & 1.048 \\ 
\cmidrule{2-9}
                                               & \multirow{2}{*}{MMRM-II} & TRT1  & 0.006 & 0.193 & 0.186 & 94.3 & 5.7 & 0.976 \\
                                               &                           & TRT2 &  0.005 & 0.191 & 0.186 & 94.2 & 99.9 & 0.981 \\  
\cmidrule{2-9}
                                               & \multirow{2}{*}{IMMRM}    & TRT1  & 0.006 & 0.195 & 0.183 & 93.8 & 6.2 & - \\ 
                                               &                           & TRT2 &0.005 & 0.193 & 0.183 & 93.5 & 99.9 & - \\ 
\hline
                         \multirow{8}{*}{MAR}  & \multirow{2}{*}{ANCOVA}   & TRT1  & -0.001 & 0.197 & 0.191 & 94.9 & 5.2 & 0.99 \\ 
                                               &                           & TRT2 &  0.005 & 0.197 & 0.191 & 94.3 & 99.7 & 1.010 \\  
\cmidrule{2-9}
                                               & \multirow{2}{*}{MMRM-I}     & TRT1  & 0 & 0.197 & 0.208 & 96.6 & 3.4 & 0.995 \\ 
                                               &                           & TRT2 & -0.002 & 0.196 & 0.207 & 96 & 99.7 & 1.003 \\  
\cmidrule{2-9}
                                              & \multirow{2}{*}{MMRM-II} & TRT1  & -0.001 & 0.195 & 0.187 & 94.9 & 5.2 & 0.969 \\ 
                                               &                           & TRT2 &0 & 0.193 & 0.186 & 93.9 & 99.9 & 0.974 \\ 
\cmidrule{2-9}
                                               & \multirow{2}{*}{IMMRM}    & TRT1  & -0.002 & 0.198 & 0.185 & 93.7 & 6.3 & - \\ 
                                              &                           & TRT2 & 0 & 0.196 & 0.184 & 93.4 & 99.8 & - \\ 
\bottomrule
\end{tabular}
}
\end{table}

\begin{table}[p!]
\centering
\caption{Simulation results comparing candidate estimators with $200$ samples per arm under MCAR and MAR with no heterogeneity or heteroscedasticity.
For each estimator, we estimate the average treatment effect of TRT1 and TRT2, both comparing the control group. The following measures are used: bias, empirical standard error (ESE), averaged standard error (ASE), coverage probability (CP), probability of rejecting the null (PoR), relative mean squared error compared to IMMRM (RMSE). For RMSE, a number bigger than 1 indicates a larger mean squared error than IMMRM.}
\label{simulstudy:n200-homo}
\renewcommand{\arraystretch}{0.8}
\resizebox{\textwidth}{!}{
\begin{tabular}{lllrrrrrr} 
\toprule
                                               &                           &          Group  & Bias    & ESE    & ASE    & CP(\%)  & PoR(\%) & RMSE \\ 
\midrule
 \multirow{8}{*}{MCAR} & \multirow{2}{*}{ANCOVA}   & TRT1  &0.002 & 0.098 & 0.097 & 94.8 & 5.2 & 1.032 \\ 
                                               &                           & TRT2 & 0 & 0.098 & 0.097 & 95.4 & 100  & 1.032 \\  
\cmidrule{2-9}
                                               & \multirow{2}{*}{MMRM-I}     & TRT1  & 0.002 & 0.099 & 0.104 & 96.3  & 3.7 & 1.054 \\ 
                                               &                           & TRT2 & 0.001 & 0.099 & 0.104 & 96.1 & 100 & 1.054 \\ 
\cmidrule{2-9}
                                               & \multirow{2}{*}{MMRM-II} & TRT1 & 0.002 & 0.096 & 0.095 & 94.4 & 5.6 & 1.000 \\ 
                                               &                           & TRT2 &  0.001 & 0.096 & 0.095 & 95.2 & 100  & 0.989 \\
\cmidrule{2-9}
                                               & \multirow{2}{*}{IMMRM}    & TRT1  & 0.002 & 0.096 & 0.095 & 94.4 & 5.6 & - \\ 
                                               &                           & TRT2 & 0.001 & 0.096 & 0.095 & 94.9 & 100  & - \\
\cmidrule{1-9}
                         \multirow{8}{*}{MAR}  & \multirow{2}{*}{ANCOVA}   & TRT1  &-0.001 & 0.099 & 0.098 & 94.7  & 5.4  & 1.032 \\ 
                                               &                           & TRT2 & 0.002 & 0.098 & 0.098 & 94.6  & 100  & 1.043 \\  
\cmidrule{2-9}
                                               & \multirow{2}{*}{MMRM-I}     & TRT1  & 0 & 0.099 & 0.105 & 96.2  & 3.8 & 1.032 \\ 
                                               &                           & TRT2 & -0.002 & 0.099 & 0.105 & 96  & 100  & 1.054 \\ 
\cmidrule{2-9}
                                               & \multirow{2}{*}{MMRM-II} & TRT1  &  0 & 0.097 & 0.096 & 94.8 & 5.2 & 0.989 \\
                                              &                           & TRT2 & -0.001 & 0.096 & 0.095 & 94.7  & 100  & 0.989 \\ 
\cmidrule{2-9}
                                               & \multirow{2}{*}{IMMRM}    & TRT1  & 0 & 0.097 & 0.096 & 94.4 & 5.6 & - \\ 
                                               &                           & TRT2 & -0.001 & 0.096 & 0.095 & 94.5 & 100  & - \\ 
\bottomrule
\end{tabular}
}
\end{table}

{\small
\bibliographystyle{apalike}
\bibliography{references}
}